\keywords{Rewriting, Termination, Confluence, Creating small terms, Derivational complexity, Description Logics, Proof rewriting} 
\theoremstyle{plain}
\newcommand{\acc}{\ensuremath{\widehat{q}}\xspace}
\newcommand{\RHL}{\ensuremath{R_{\mathit{HL}}}\xspace}
\newcommand{\RM}{\ensuremath{R^{\M}}\xspace}
\newcommand{\RMbin}{\ensuremath{R^{\M}_{\mathit{bin}}}\xspace}
\newcommand{\EL}{\ensuremath{\mathcal{EL}}\xspace}
\newcommand{\N}{\ensuremath{\mathbb{N}}\xspace}
\newcommand{\R}{\ensuremath{\mathbb{R}}\xspace}
\newcommand{\M}{\ensuremath{\mathcal{M}}\xspace}
\newcommand{\dto}[1]{\mathrel{\displaystyle\mathop{\to}^{#1}}}
\newcommand{\too}{\dto{*}}
\newcommand{\CReins}{\ensuremath{\mathsf{R1}}\xspace}
\newcommand{\CRzwei}{\ensuremath{\mathsf{R2}}\xspace}
\newcommand{\CRdrei}{\ensuremath{\mathsf{R3}}\xspace}
\newcommand{\gci}[2]{{\sqsubseteq}(#1,#2)}
\newcommand{\er}[2]{E_#1(#2)}
\newcommand{\wnull}{w_{\mathit{min}}}
\newcommand{\wmax}{w_{\mathit{max}}}
\newcommand{\dep}{\mathit{dp}}
\newcommand{\scomp}{size compatible\xspace}
\newcommand{\scompa}{size compatibility\xspace}
\newcommand{\nusz}[1]{|#1|_{\mathit{nu}}}
\crefname{defi}{Definition}{Definitions}
\crefname{exa}{Example}{Examples}
\crefname{section}{Section}{Sections}
\crefname{subsection}{Section}{Sections}
\crefname{subsubsection}{Section}{Sections}
\crefname{prop}{Proposition}{Propositions}
\crefname{lem}{Lemma}{Lemmas}
\crefname{thm}{Theorem}{Theorems}
\crefname{cor}{Corollary}{Corollaries}
\begin{document}

\title[Term Reachability Problems]{Small Term Reachability and Related Problems\texorpdfstring{\\}{} for Terminating Term Rewriting Systems}
\titlecomment{{\lsuper*}This is a revised and extended journal version of our earlier
  conference paper \cite{FSCD2024}.} 
\thanks{\emph{Franz
  Baader}:
  Partially supported by DFG, Grant 389792660, within TRR 248 ``Center for Perspicuous
  Computing'', and by the German Federal Ministry of Education and Research (BMBF,
  SCADS22B) and the Saxon State Ministry for Science, Culture and Tourism (SMWK) by
  funding the competence center for Big Data and AI ``ScaDS.AI Dresden/Leipzig''.\\
  \emph{Jürgen Giesl}: Partially supported by DFG, Grant 235950644 (Project GI 274/6-2).}
\author[F.~Baader]{Franz Baader\lmcsorcid{0000-0002-4049-221X}}[a]
\author[J.~Giesl]{J\"urgen Giesl\lmcsorcid{0000-0003-0283-8520}}[b]
\address{Theoretical Computer Science, TU Dresden, Germany \and SCADS.AI
  Dresden/Leipzig,
  Germany}
\email{franz.baader@tu-dresden.de}
\address{RWTH Aachen University, Aachen, Germany}	
\email{giesl@informatik.rwth-aachen.de}


\begin{abstract}
Motivated by an application where we try to make proofs for Description Logic inferences smaller by rewriting, we
consider the following decision problem, which we call the small term reachability problem: 
given a term rewriting system $R$, a term $s$, and a natural number $n$, decide whether there is a term $t$ of size $\leq n$ reachable
from $s$ using the rules of $R$. We investigate the complexity of this problem depending on how termination of $R$ can be established.
We show that the problem is in general NP-complete for length-reducing term rewriting systems. Its
complexity increases to N2ExpTime-complete (NExpTime-complete) if termination
is proved using a (linear) polynomial order and
to PSpace-complete for systems whose
termination can be shown using a restricted class of Knuth-Bendix orders.
Confluence reduces the complexity to P for the length-reducing case, but has no effect on
the worst-case complexity in the other two cases.
Finally, we consider the large term reachability problem, a variant of the problem where we are interested in reachability of a term of size $\geq n$.
It turns out that this seemingly innocuous modification in some cases changes the complexity of the problem, which may also become dependent on 
whether the number $n$ is is represented in unary or binary encoding, whereas this makes
no difference for the complexity of the small term reachability problem.
\end{abstract}

\maketitle

\section{Introduction}

Term rewriting~\cite{BaNi98,books/daglib/0008995} is a well-investigated formalism, which can be used both for computation and deduction.
A term rewriting system $R$ consists of rules, which describe how a term $s$ can be transformed into a new term $t$, in which case one writes $s\to_R t$.
In the computation setting, where term rewriting is akin to functional programming~\cite{DBLP:journals/toplas/GieslRSST11}, 
a given term (the input) is iteratively rewritten into a normal form (the output), which
is a term that cannot be rewritten further. Termination of $R$ prevents infinite rewrite sequences, and thus ensures that a normal form can always be reached,
whereas confluence guarantees that the output is unique, despite the nondeterminism inherent to the rewriting process (which rule to apply when and where).
In the deduction setting, which is, e.g., relevant for first-order theorem proving with equality~\cite{DBLP:books/el/RV01/NieuwenhuisR01}, 
one is interested in whether a term $s$ can be rewritten into a term
$t$ by iteratively applying the rules of $R$ in both directions. If $R$ is confluent and terminating, this problem can be solved by computing normal forms of
$s$ and $t$, and then checking whether they are equal. In the present paper, we want to employ rewriting for a different purpose: given a term $s$, we are
interested in finding a term $t$ of minimal size that can be reached from $s$ by rewriting (written $s\too_R t$), but this term need not be in normal form.
To assess the complexity of this computation problem, we investigate the corresponding decision problem: given a term rewriting system $R$, a term $s$, and
a natural number $n$, decide whether there is a term $t$ of size $\leq n$ such that $s\too_R t$. We call this the \emph{small term reachability problem}.

Our interest in this problem stems from the work on finding small proofs~\cite{DBLP:conf/lpar/AlrabbaaBBKK20,DBLP:conf/cade/AlrabbaaBBKK21} 
for Description Logic (DL) inferences~\cite{DBLP:books/daglib/0041477}, 
which are then visualized in an interactive explanation tool~\cite{DBLP:conf/cade/AlrabbaaBBDKM22}. 
For the DL $\mathcal{EL}$~\cite{DBLP:conf/ijcai/BaaderBL05}, we employ the highly-efficient reasoner ELK~\cite{DBLP:journals/jar/KazakovKS14} 
to compute proofs. However, the proof calculus employed by ELK is rather fine-grained, and thus produces relatively large proofs. Our idea was thus 
to generate smaller proofs by rewriting several proof steps into a single step. 
As a (simplified) example, consider the three proof rules in \Cref{proof:fig}.
\begin{figure}[t!]
\parbox[c]{\textwidth}{%
\hfill
\[
\CReins\ \ \frac{A \sqsubseteq B \ \ B \sqsubseteq C}{A \sqsubseteq C} \qquad 
\CRzwei\ \ \frac{A \sqsubseteq B}{\exists r . A \sqsubseteq \exists r . B}
\qquad 
\CRdrei\ \ \frac{A \sqsubseteq \exists r . A_1 \ \ A_1 \sqsubseteq B_1 \ \  \exists r . B_1 \sqsubseteq B}{A \sqsubseteq B}
\]
}
\caption{Three proof rules for \EL.}
\label{proof:fig}
\end{figure} 
\begin{figure}[t!]
\parbox[c]{\textwidth}{%
\[
    \begin{prooftree}
        \hypo{A\sqsubseteq \exists r . A_1}
        \hypo{A_1 \sqsubseteq B_1}
        \infer[left label=\CRzwei]1{\exists r . A_1\sqsubseteq \exists r . B_1}
        \hypo{\exists r . B_1 \sqsubseteq B}
        \infer[left label=\CReins]2{\exists r . A_1\sqsubseteq B}
        \infer[left label=\CReins]2{A\sqsubseteq B}
    \end{prooftree}
\]
}
\caption{Proof of the conclusion of \CRdrei from its hypotheses using \CReins and \CRzwei.}
\label{proof:fig2}
\end{figure}%
It is easy to see that one needs one application of \CRzwei followed by two of \CReins to produce the same consequence as a single application of \CRdrei
(see \Cref{proof:fig2}).
Thus, if one looks for patterns in a proof that use \CReins and \CRzwei in this way, and replaces them by the corresponding applications of \CRdrei, then
one can reduce the size of a given proof. Given finitely many such proof rewriting rules and a proof, the question is then how to use the rules to
rewrite the given proof into one of minimal size. Since tree-shaped proofs as well as DL concept descriptions can be represented as terms,
this question can be seen as an instance of the small term reachability problem\linebreak introduced
above.

For example, the proof consisting only of one application of \CRdrei (as depicted in
\Cref{proof:fig}) can be written as the term
$$
t_1 := \CRdrei(\gci{A}{B},\gci{A}{\er{r}{A_1}},\gci{A_1}{B_1},\gci{\er{r}{B_1}}{B}),
$$
where we use \CRdrei as a 4-ary function symbol whose first argument is the consequence of applying the corresponding rule and the other
arguments are the hypotheses (or more generally, the proof terms producing these
hypotheses).
In addition, $\sqsubseteq$ is used as a binary
function symbol, and $E_r$ as a unary function symbol.
The proof in \Cref{proof:fig2} can be represented as the proof term
$$
\begin{array}{l@{\,}l@{\,}l}
t_2 := \CReins(\gci{A}{B},&\gci{A}{\er{r}{A_1}},\\
                   &\CReins(\gci{\er{r}{A_1}}{B},&\CRzwei(\gci{\er{r}{A_1}}{\er{r}{B_1}},\gci{A_1}{B_1}),\\
                   &&\gci{\er{r}{B_1}}{B})).
\end{array}
$$
Before we can use these terms in a rewrite rule of the form $t_2\rightarrow t_1$, we must make two changes. First, note that
the rules in \Cref{proof:fig} are actually rule schemata, where $A,B,C,A_1,B_1$ are placeholders for \EL concepts. Thus, we
must view them as variables that may be replaced be terms representing \EL concepts in the terms $t_1$ and $t_2$.
In addition, when applying such a rewrite rule within a larger proof, the hypotheses in these terms may also be derived by a subproof,
and the formulation of the rewrite rule must take this possibility into account. This means, for instance, that $\gci{A}{\er{r}{A_1}}$ in both $t_1$
and $t_2$ is replaced with $\CReins(\gci{A}{\er{r}{A_1}},Z_1,Z_2)$, where $Z_1$ and $Z_2$ are variables that may be substituted by proof terms,
and analogous variants must be considered for the other rules. We refrain from giving more details on how to express proof rewrite rules as term rewrite rules
since the main topic of this paper is the investigation of the small term reachability problem in the setting of term rewriting systems.
However, the sketch in this paragraph shows that proof rewriting can indeed be expressed as term rewriting.

In the following we investigate the complexity of the small term reachability problem on the general level of term rewriting systems (TRSs).
It turns out that this complexity depends on how termination of the given TRS can be shown. It should be noted that, in our complexity results,
we assume the TRS $R$ to be fixed, and only the term $s$ and the number $n$ are the variable part of the input. Thus, in the subsequent summary of our 
results, we say that the problem is \emph{in general} complete for a complexity class $\mathcal{K}$ if, for every fixed
TRS falling into the respective category, the problem is in $\mathcal{K}$, and there is a TRS belonging to this category for which the problem is also
hard for $\mathcal{K}$. The paper contains the following main contributions (see
\Cref{Overview} for an overview):

\begin{table}[t]
  \begin{center}
   \scalebox{0.83}{
      \makebox[\textwidth][c]{
        \setlength{\tabcolsep}{2pt}
        \begin{tabular}{c@{\qquad}ccc@{\qquad}cc}
          \toprule
          \multirow{2}{*}{class of TRS}& \multicolumn{2}{c}{small term reachability} &&
          \multicolumn{2}{c}{large term reachability}\\
          \cmidrule{2-3} \cmidrule{5-6}
 & upper bound & lower bound && upper bound & lower bound\\
                    \toprule                  \addlinespace  
                    \multirow{2}{*}{length-reducing} & NP & NP && linear & \\
                    &\multicolumn{2}{c}{(\Cref{completeness:main:thm})}&&(\Cref{large:length-reducing})&\\
                 \addlinespace    \midrule  \addlinespace  
 length-reducing & P &  &&  linear &\\
 \& confluent & (\Cref{confluence:prop}) &&&(\Cref{large:length-reducing})&\\
  \addlinespace \midrule  \addlinespace  
 terminating with& PSpace & PSpace &&  PSpace & PSpace\\
  KBO without&
  \multicolumn{2}{c}{(\Cref{kbo:thm})}&&\multicolumn{2}{c}{(\Cref{largeTerm-KBO})}\\
 special symbol &
 \multicolumn{2}{c}{also for confluence}&&\multicolumn{2}{c}{also for confluence}\\
  \addlinespace \midrule  \addlinespace  
  & & &&  NExpTime & NExpTime\\
  & & && (for binary encoding)&(for binary encoding)\\
  terminating with & NExpTime & NExpTime &&    \multicolumn{2}{c}{(\Cref{large:term:lin:po:thm})}\\
   size compatible &  \multicolumn{2}{c}{(\Cref{Hardness Linear Polynomials})} &&
   \multicolumn{2}{c}{also for confluence}\\\addlinespace[-.05cm]\cmidrule{5-6}\addlinespace[-.08cm]
   linear pol.\ order &  \multicolumn{2}{c}{also for confluence} && PSpace&\\
  &&&&(for unary encoding)&\\
   &&&&(\Cref{general:upper:thm})&\\
  \addlinespace \midrule  \addlinespace  
  & & &&  ExpSpace & \\
   & & &&  (for binary  encoding)&\\
   terminating with & N2ExpTime & N2ExpTime && (\Cref{general:upper:thm})&\\\addlinespace[-.05cm]\cmidrule{5-6}\addlinespace[-.08cm]
   size compatible & \multicolumn{2}{c}{(\Cref{completeness:main:thm:poly})} &&PSpace & \\
   pol.\ order & \multicolumn{2}{c}{also for confluence} && (for unary encoding)&\\
   & & && (\Cref{general:upper:thm})&\\
  \addlinespace \midrule  \addlinespace  
   & & &&  ExpSpace & \\
   & & &&  (for binary  encoding)&\\
  terminating  & \multicolumn{2}{c}{decidable} && (\Cref{general:upper:thm})&\\\addlinespace[-.05cm]\cmidrule{5-6}\addlinespace[-.08cm]
    & \multicolumn{2}{c}{(\Cref{decidability-prop})} &&PSpace & \\
  & & && (for unary encoding)&\\
   & & && (\Cref{general:upper:thm})&\\
 \bottomrule
       \end{tabular}}}
    \caption{Overview on our complexity results}
  \label{Overview}
  \end{center}
\end{table}

\paragraph*{1.\ Small term reachability for length-reducing TRSs}

If the introduced rewrite rules are \emph{length-reducing}, i.e., each rewrite step
decreases the size of the term (proof), like the rule in our example, 
then termination of all rewrite sequences is guaranteed. In general, it may nevertheless be the case that one can generate two normal forms of different sizes. 
Confluence prevents this situation, i.e., then it is sufficient to generate only one rewrite sequence to produce a term (proof) of minimal size. 
In \Cref{length:red:sect} we show that the small term reachability problem for length-reducing term rewriting systems is in general NP-complete,
but becomes solvable in polynomial time 
for confluent systems.

\paragraph*{2.\ Small term reachability for  TRSs whose termination is shown
  by polynomial orders}

It
also makes sense to consider sets of rules where not every rule
is length-reducing, e.g., if one first needs to reshape a proof before a length-reducing rule can be applied, or if one translates between different proof calculi.
In this extended setting, termination is no longer trivially given, and thus one first needs to show that the introduced set of rules is terminating, which can for instance be achieved with the help of a reduction order~\cite{BaNi98,books/daglib/0008995}.
We show in this paper that the complexity of the small term reachability problem depends on which reduction order is used for this purpose.
More precisely, in \Cref{po:sect} we consider term rewriting systems that can be proved terminating using a polynomial order~\cite{Lankford79},
and show that in this case the small term reachability problem is in general N2ExpTime-complete, both
in the general and the confluent case. To prove the complexity upper bound, we actually need to restrict the employed polynomial orders slightly.
If the definition of the polynomial order uses only linear polynomials, then the complexity of the problem is reduced to NExpTime, where again
hardness already holds for confluent systems.
Here, as usual,
NExpTime (N2ExpTime) is the class of all decision problems solvable by a nondeterministic
Turing machine in $O(2^{p(n)})$ ($O(2^{2^{p(n)}})$) steps, where $n$ is the size of the
    problem and $p(n)$ is a polynomial in $n$.

\paragraph*{3.\ Small term reachability for  TRSs whose termination is shown
  by KBO}

In \Cref{kbo:sect}, we investigate the impact that using a Knuth-Bendix order (KBO)~\cite{KnuthBendix} for the termination proof has on the complexity of the small term 
reachability problem. In the restricted setting without unary function symbols of weight zero, the problem is in general PSpace-complete, 
again both in the general and the confluent case.
The complexity class PSpace consists of all decision problems solvable by a 
Turing machine in $O(p(n))$ space, where $n$ is the size of the problem and $p(n)$ is a polynomial in $n$.

\paragraph*{4.\ Large term reachability}

In order to investigate how much our complexity results depend on the exact formulation of the condition on the reachable terms, we consider
a variant of the problem, which we call the \emph{large term reachability problem}: given a term rewriting system $R$, a term $s$, and
a natural number $n$, decide whether there is a term $t$ of size $\geq n$ such that $s\too_R t$.
For length-reducing TRSs, this modification of the problem definition reduces the
complexity to (deterministic) linear time.
For the KBO, we obtain the same complexity results as for the small term reachability problem. For TRSs shown terminating with a linear polynomial
order, the complexity (NExpTime) stays the same if $n$ is assumed to be given in binary
representation, whereas the complexity goes down to PSpace for unary encoding of
$n$.
Similarly, for TRSs whose termination is proved by arbitrary polynomial orders, the
complexity goes down to ExpSpace (PSpace) for binary (unary) encoding of numbers. 
Actually, these upper bounds (i.e., ExpSpace for binary and PSpace for unary encoding)
do not depend on the use of polynomial orders, but hold for all terminating TRSs.

\bigskip

\paragraph*{Related work.}

In the area of term rewriting, 
\emph{reachability} usually refers to the following question (see, e.g.,
\cite{FeuilladeGT04,SternagelY19} and
\Cref{harmless} (2)):
given a TRS $R$ and two 
terms $s,t$, does
$s \too_R t$ hold? If $s$ and $t$ contain variables, then a related question is
\emph{feasibility} (see, e.g., \cite{LucasG18}), which asks
whether there exists a substitution $\sigma$ of the variables such that the
corresponding instance of $s$
rewrites to the corresponding instance of $t$ (i.e., such that $\sigma(s) \too_R \sigma(t)$ holds).
We study a related but different problem, since
we do not consider instantiations of the start term $s$ and we are interested in whether
\emph{some} term $t$ that is ``small enough'' can be reached, i.e., $t$ is not a fixed term given by the input. 
In general, \emph{reachability} is studied in many areas of Computer Science, with a whole
conference series devoted to the topic (see, e.g., \cite{RP2024}). However, we are not aware of any previous
work on the small term reachability problem for term rewriting.

The proofs of our results strongly depend on work on the derivational complexity of term rewriting systems, which links the
reduction order employed for the termination proof with the maximal length of reduction sequences as a function of the size of the start term
(see, e.g.,~\cite{DBLP:journals/tcs/Hofbauer92,DBLP:journals/iandc/Hofbauer03,DBLP:conf/rta/HofbauerL89,DBLP:journals/tcs/Lepper01}).
To obtain reasonable complexity classes, we restricted ourselves to reduction orders where the resulting bound on the
derivational complexity is not ``too high''.
In particular, we decided to start our investigation with the 
results of \cite{Geupel88,Lautemann88,DBLP:conf/rta/HofbauerL89}
on classical reduction orders, who show that
termination proofs with a (linear) polynomial order yield a double-exponential (exponential) upper bound on the length of derivation sequences whereas
termination proofs with a KBO without unary function symbols of weight zero yield an exponential such bound. 
Note that these results are proved in~\cite{DBLP:conf/rta/HofbauerL89} under the
assumption that the TRS is fixed.
We also make use of the term rewriting systems employed in the proofs showing that these bounds are tight. 
A connection between the derivational complexity of term rewriting systems and complexity classes
has been established in~\cite{DBLP:journals/jfp/BonfanteCMT01} for polynomial orders,
in~\cite{BonfanteMM05} for quasi-interpretations, 
and in~\cite{DBLP:conf/lpar/BonfanteM10} for Knuth-Bendix orders.
While this work considers a different problem since it views term rewriting systems as
devices for computing functions by generating a normal form, and uses them to characterize
complexity classes, 
the constructions utilized in the proofs in~\cite{DBLP:journals/jfp/BonfanteCMT01,DBLP:conf/lpar/BonfanteM10} are similar to the ones we use 
in our hardness proofs. A
notable difference between the two problems is that we are not specifically interested
in normal forms (i.e., irreducible terms), but in the question whether one can reach
``small'' terms $t$ (which need not be in normal form) from a given term $s$. 
Note that a small term $t$ might be reachable from $s$, though all normal forms of $s$ are not small. 
For this reason, the impact that confluence has on the obtained complexity class also differs
for the two problems:
while in our setting confluence only reduces the complexity in the case of length-reducing systems, in~\cite{DBLP:journals/jfp/BonfanteCMT01}
it also reduces the complexity (from the nondeterministic to the respective deterministic class) for the case of systems shown terminating with
a (linear) polynomial order.

\paragraph*{Comparison with previous conference publication.}

This article is based on a paper published at FSCD 2024 \cite{FSCD2024}, but extends and improves on the conference version in several respects.
While the conference paper restricts the attention to the small term reachability problem, the present paper also considers variants of
this problem, and in  particular the large term reachability problem. Our results demonstrate that the exact formulation of the condition
imposed on the reachable term may have a considerable impact on the complexity of the problem, although testing the condition for a given term
has the same complexity in both cases. In particular, it turns out that the complexity of the large term reachability problem may depend
on how the number that yields the size bound is encoded. In~\cite{FSCD2024}, the encoding
of numbers was not taken into account, since it has no impact on the complexity of the small term reachability
problem. In the present article, we prove this result and make it
explicit in the statement of our theorems.

Another improvement over our conference paper is that we take more care when stating and proving the upper complexity bounds. Note that
the known bounds on the maximal length of reduction sequences for TRSs shown terminating with a KBO or polynomial order are proved in~\cite{DBLP:conf/rta/HofbauerL89}
under the assumption that the TRS is fixed. For this reason, we also consider the small (large) term reachability problem for a fixed TRS. This is now
made clearer when showing the upper complexity bounds and stating our complexity
results. Here, it turns out that showing the upper complexity bounds
for the small term reachability problem in the case of polynomial orders is considerably more involved than claimed in~\cite{FSCD2024}, and requires an additional
(though not very restrictive) condition on the employed polynomial order (i.e., that
constants must be mapped to numbers $\geq 2$). In particular, we prove that using a (linear) polynomial order for showing termination of a TRS
does not only impose a double-exponential (exponential) upper bound on the length of derivation sequences, but also a double-exponential (exponential) upper bound on
the sizes of the terms that can be reached.
Though there has been work on deriving size bounds on reachable terms from the employed termination method
(see, e.g., \cite{BonfanteMM05,DBLP:conf/lpar/BonfanteM10}), to the best of our knowledge, this size bound has not been established before (note that
we do not impose any other constraints on the polynomial orders and we
also allow polynomials
with ``zero coefficients'' as long as the polynomials depend on all indeterminates).

\paragraph*{Overview of the paper}

In the next section, we briefly recall basic notions from term rewriting, including the definitions of polynomial and Knuth-Bendix orders.
In \Cref{problem:sect}, we introduce the small term reachability problem and show that it
is undecidable in general, but decidable for
terminating systems. Sections~\ref{length:red:sect}, \ref{po:sect}, and \ref{kbo:sect} respectively consider the length-reducing, polynomial order,
and Knuth-Bendix order case.
Finally, in \Cref{variants:sect} we consider variants of the small term reachability
problem in order to determine how our complexity results depend on the exact condition
in the considered reachability problem.
We conclude with a brief discussion of possible future work.

\section{Preliminaries}\label{prelim:sect}

We assume that the reader is familiar with basic notions and results regarding term rewriting.
In this section, we briefly recall the relevant notions, but refer the reader to \cite{BaNi98,books/daglib/0008995} for details.

Given
a finite set of \emph{function symbols} with associated \emph{arities} (called the \emph{signature})
and a disjoint set of \emph{variables}, terms are built in the usual way.
Function symbols of arity $0$ are also called \emph{constant symbols}. For example, if $x$
is a variable, $c$ is a constant symbol, and $f$ a binary
function symbol, then $c, f(x,c), f(f(x,c),c)$ are terms. The \emph{size} $|t|$ of a term $t$ is the number of occurrences of function symbols and variables
in $t$ (e.g., $|f(f(x,c),c)| = 5$). If $f$ is a function symbol or variable, then $|t|_f$ counts the number of occurrences of $f$ in $t$ (e.g., $|f(f(x,c),c)|_f = 2$).
As usual, nested applications of unary function symbols are often written as words. For example,
$g(g(h(h(g(x)))))$ is written as $gghhg(x)$ or $g^2h^2g(x)$.

A \emph{rewrite rule} (or simply rule) is of the form $l\to r$ where $l, r$ are terms such that $l$ is not a variable and 
every variable occurring in $r$ also occurs in $l$. In this paper, a term rewriting system (TRS) is a
\emph{finite} set of rewrite rules, and thus we do not mention finiteness explicitly when formulating our complexity results.
A given TRS $R$ induces the binary relation $\to_R$ on terms. We have $s \to_R t$ if
there is a rule $l \to r$ in $R$ such that $s$ contains a substitution instance $\sigma(l)$ of $l$ as
subterm, and $t$ is obtained from $s$ by replacing this subterm with $\sigma(r)$.
Recall that a \emph{substitution} is a mapping from variables to terms, which is homomorphically extended to a mapping from terms to terms.
For example, if $R$ contains the rule $hh(x) \to g(x)$, then $f(hhh(c),c)\to_R f(gh(c),c)$ and $f(hhh(c),c)\to_R f(hg(c),c)$.
The reflexive and transitive closure of $\to_R$ is denoted as $\too_R$, i.e., $s\too_R t$ holds if there are $n\geq 1$ terms
$t_1,\ldots,t_n$ such that $s=t_1, t=t_n$, and $t_i\to_R t_{i+1}$ for $i = 1,\ldots,n-1$.

Two terms $s_1, s_2$ are \emph{joinable} with $R$ if there is a term $t$ such that $s_i\too_R t$ holds for $i=1,2$.
The relation $\to_R$ is \emph{confluent} if $s\too_R s_i$ for $i=1,2$ implies that $s_1$ and $s_2$ are joinable with~$R$.
It is \emph{terminating} if there is no infinite reduction sequence $t_0\to_R t_1\to_R t_2\to_R\ldots$. If $\to_R$ is confluent (terminating), then we also
call $R$ confluent (terminating). The term $t$ is \emph{irreducible} if there is no term $t'$ such that $t\to_R t'$. If $s\too_R t$ and $t$ is irreducible,
then we call $t$ a \emph{normal form} of $s$. If $R$ is confluent and terminating, then every term has a unique normal form. If $R$ is
terminating, then its confluence is decidable~\cite{KnuthBendix}. Termination can be proved using a \emph{reduction order}, which is a well-founded order $\succ$ on terms such that
$l \succ r$ for all $l\to r\in R$ implies $s \succ t$ for all terms $s,t$ with $s\to_R t$. Since $\succ$ is well-founded, this then implies termination of $R$.
If $l \succ r$ holds for all $l\to r\in R$, then we say that $R$ can be \emph{shown terminating} with the reduction order $\succ$.
The following is a simple reduction order.

\begin{exa}\label{length:red:ord}
If we define $s \succ t$ if $|s| > |t|$ and $|s|_x \geq |t|_x$ for all variables $x$, then~$\succ$ is a reduction order (see Exercise~5.5 in~\cite{BaNi98}). 
For example, $hh(x) \succ g(x)$, and thus the TRS
$R = \{hh(x) \to g(x)\}$ is terminating. As illustrated in Example~5.2.2 in~\cite{BaNi98}, the condition on variables is needed to
obtain a reduction order. 
\end{exa}
This order can only show termination of \emph{length-reducing} TRSs $R$, i.e.,
where $s\to_R t$ implies $|s| > |t|$. We now recapitulate the definitions of more powerful reduction orders~\cite{BaNi98,books/daglib/0008995}.

\paragraph*{Polynomial orders}
To define a polynomial order, one assigns to every $n$-ary function symbol $f$ a polynomial $P_f$ with coefficients
in the natural numbers $\N$ and $n$ indeterminates $x_1,\ldots,x_n$ such that $P_f$ depends on all these
indeterminates. To ease readability, we usually write $x$ instead of $x_1$ if $n =1$, and
we 
write $x, y$ instead of $x_1, x_2$ if $n = 2$.
To ensure that dependence on all indeterminates implies (strong) monotonicity of the polynomial order, we require that constant symbols $c$ must
be assigned a polynomial of degree $0$ whose coefficient is $> 0$. Such an assignment also yields an assignment of polynomials $P_t$ to terms $t$.

\begin{exa}\label{Ex:PolynomialOrders}
  Assume that $+$ is binary, $s,d,q$ are unary, and $0$ is a constant.
  We assign the polynomial 
$P_+ = x+2y+1$ to $+$, $P_s = x+3$ to $s$, $P_d = 3x+1$ to $d$,  $P_q = 3x^2+3x+1$ to $q$,
and
$P_0 = 4$ to $0$.
For the terms $l = q(s(x))$ and $r = q(x)+s(d(x))$ we then obtain the associated polynomials
$P_l = 3(x+3)^2 +3(x+3)+1 = 3x^2 +21x+37$ and
$P_r = 3x^2 +3x+1+2(3x+1+3)+1=3x^2 +9x+10$.
\end{exa}

The polynomial order induced by such an assignment is defined as follows: $t\succ t'$ if $P_t$ evaluates to a larger
natural number than $P_{t'}$ for every assignment of natural numbers $>0$ to the indeterminates of $P_t$ and $P_{t'}$. In our example,
the evaluation of $P_l$ is obviously always larger than the evaluation of $P_r$, and thus $l\succ r$.
Polynomial orders are reduction orders, and thus can be used to prove termination of TRSs (see, e.g., Section~5.3 of~\cite{BaNi98}).

As mentioned above, for our complexity upper bounds, we need the polynomial order that is
used to show termination of the TRS to satisfy an additional restriction.

\begin{defi}\label{scomp:def}
We say that a polynomial order is \emph{\scomp} if for every constant symbol the
assigned polynomial of degree $0$ has a coefficient $\geq 2$ (i.e., every constant symbol must be
assigned a number $\geq 2$).
\end{defi}

This restriction provides us with a double-exponential bound (in the size of $s$) on the sizes of the terms reachable with $\too_R$ from $s$
(see \Cref{size:bound:lem} in \Cref{po:sect}).
The polynomial order of \Cref{Ex:PolynomialOrders} is \scomp, since $P_0 = 4 \geq 2$.

\paragraph*{Knuth-Bendix orders}

To define a Knuth-Bendix order (KBO), one must assign a weight $w(f)$ to all function
symbols and variables $f$, and define a strict order $>$
on the function symbols (called \emph{precedence}) such that the following is satisfied:
 
\begin{itemize}
\item
  All weights $w(f)$ are non-negative real numbers, and there is a weight $w_0>0$ such that $w(x) = w_0$ for all variables $x$ and
  $w(c) \geq w_0$ for all constant symbols $c$.
\item
  If there is a unary function symbol $h$ with $w(h) = 0$, then $h$ is the greatest element w.r.t.\ $>$, i.e., $h > f$ for
  all function symbols $f\neq h$. Such a unary function symbol $h$ is then called a \emph{special} symbol. Obviously, there
  can be at most one special symbol.
\end{itemize}
Since in this paper we only consider KBOs without special symbol, we restrict our definition of KBOs to this case.
For any weight function $w$, we define its extension to terms as
$w(u) := \sum_{f\, \text{occurs in}\, u}w(f) \cdot |u|_f$  for all terms $u$.
Then a given weight function $w$ and strict order~$>$ without special symbol induces the
following KBO $\succ$:
\smallskip

\noindent
$s\succ t$ if 
$|s|_x \geq |t|_x$ for all variables $x$ and
\begin{itemize}
\item $w(s) > w(t)$, or
\item $w(s) = w(t)$ and one of the following two conditions is satisfied:
\begin{itemize}
\item $s = f(s_1,\ldots,s_m)$, $t = g(t_1,\ldots,t_n)$, and $f>g$.
\item $s = f(s_1,\ldots,s_m)$, $t = f(t_1,\ldots,t_m)$, and there is $i, 1\leq i\leq m,$ such that\\
      $s_1=t_1,\ldots, s_{i-1} = t_{i-1}$, and $s_i\succ t_i$.
\end{itemize}
\end{itemize}
A proof of the fact that KBOs are reduction orders can, e.g., be found in Section~5.4.4 of ~\cite{BaNi98}.

\begin{exa}\label{down:cont:ex}
Let $0,1,1'$ be unary function symbols and $\varepsilon$ a constant symbol, and consider the following TRS, which is similar to the one
introduced in the proof of Lemma~7 in~\cite{DBLP:conf/lpar/BonfanteM10}:
\[
R = \{
1(\varepsilon) \to 0(\varepsilon),\ \
0(\varepsilon) \to 1'(\varepsilon),\ \
0(1'(x)) \to 1'(1(x)),\ \
1(1'(x)) \to 0(1(x)) \}.
\] 
Basically, this TRS realizes a binary down counter, and thus it is easy to see that, starting with the binary representation $10^n(\varepsilon)$ of the number $2^n$,
the TRS $R$ can make $\geq 2^n$ reduction steps to arrive at the term $0^{n+1}(\varepsilon)$. For example,
$
100(\varepsilon) \to_R 101'(\varepsilon) \to_R 11'1(\varepsilon) \to_R 011(\varepsilon)
\to_R 010(\varepsilon) \to_R 011'(\varepsilon) \to_R 001(\varepsilon) \to_R 000(\varepsilon).
$
Termination of $R$ can be shown using the following
KBO: assign weight $1$ to all function symbols and variables, and use the precedence order $1 > 0 > 1'$.
\end{exa}

\section{Problem definition and (un)decidability results}\label{problem:sect}

In this paper, we mainly investigate the complexity of the following decision problem. 

\begin{defi}
Let $R$ be a TRS.
Given 
a term $s$ and a natural number $n$,
the \emph{small term reachability problem} for $R$ asks whether there exists a term $t$
such that $s\too_R t$ and $|t|\leq n$.
\end{defi}
The name ``small term reachability problem'' is motivated by the fact that we want to use the TRS $R$
to turn a given term $s$ into a term whose size is as small as possible. The introduced problem is the decision variant
of this computation problem. A solution to the computation problem, which computes a term $t$ of minimal size reachable with $R$
from $s$, of course also solves the decision variant of the problem. Thus, complexity lower bounds for the decision problem
transfer to the computation problem. 

It is easy to see that this problem is in general undecidable, but decidable for terminating TRSs.
For non-terminating systems, confluence is not sufficient to obtain decidability.

\begin{prop}\label{decidability-prop}
The small term reachability problem is in general undecidable for confluent TRSs, but is
decidable for systems that are terminating. 
\end{prop}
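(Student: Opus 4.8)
The plan is to treat the two halves of the statement separately.

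\emph{Decidability for terminating systems.} Suppose $R$ is terminating. First I would note that, since $R$ is finite, the relation $\to_R$ is computable and finitely branching: given a term $s$, one enumerates the finitely many positions of $s$, the finitely many rules $l\to r\in R$, and the at most one matching substitution for each such choice, and thereby obtains the finite set $\{t\mid s\to_R t\}$ effectively. Next I would argue that termination makes $\mathrm{Reach}(s):=\{t\mid s\too_R t\}$ finite: unfolding all rewrite sequences starting from $s$ yields a finitely branching tree that has no infinite path, since an infinite path would be an infinite $\to_R$-sequence, contradicting termination; hence the tree is finite by K\"onig's Lemma and only finitely many terms occur in it. A straightforward worklist procedure that maintains the set of already discovered terms and repeatedly expands a not-yet-expanded one by its one-step successors therefore halts (the discovered set can only grow and stays inside the finite set $\mathrm{Reach}(s)$), and upon termination it has computed $\mathrm{Reach}(s)$; one then checks whether some $t\in\mathrm{Reach}(s)$ satisfies $|t|\leq n$. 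This argument needs nothing beyond termination and finiteness of $R$; in particular it requires no effective bound on reduction lengths or term sizes (such bounds will matter only later, for the complexity results).

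\emph{Undecidability for confluent systems.} Here I would reduce from the halting problem. Fix a deterministic Turing machine $M$ whose halting problem is undecidable (e.g.\ a universal machine) and encode its computation as a TRS $R_M$ in the usual way: configurations become ground terms, say of the form $\mathsf{conf}(\ell,q,r)$ recording the two halves of the tape and the current state; each transition of $M$ becomes one rewrite rule rewriting such a term into the term for the successor configuration; and one additional rule $\mathsf{conf}(x,q_{\mathit{halt}},y)\to\mathsf{c}$ collapses every halting configuration to a fresh constant $\mathsf{c}$. Since $M$ is deterministic, these rules are left-linear and pairwise non-overlapping, so $R_M$ is orthogonal and hence confluent; moreover every configuration term has exactly one $\to_{R_M}$-successor (or none, once it is halting and has been collapsed), and $\mathsf{c}$ is a normal form. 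Note that $R_M$ is fixed and independent of the input, matching our convention that hardness is shown for a fixed TRS, and that $R_M$ is non-terminating (it simulates also the non-halting computations of $M$). Given an input $w$, let $s_w$ be the term encoding the initial configuration of $M$ on $w$ and put $n:=1$. If $M$ halts on $w$, the forced reduction from $s_w$ reaches a halting configuration and one further step yields $\mathsf{c}$ with $|\mathsf{c}|=1\leq n$. Conversely, choosing the encoding so that every configuration term has size $\geq 2$ and the transition rules send configuration terms to configuration terms, the only reachable term of size $\leq 1$ from $s_w$ is $\mathsf{c}$, which can only be produced by the halting rule; hence its reachability forces $M$ to reach a halting configuration. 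Thus $M$ halts on $w$ iff $(s_w,1)$ is a yes-instance of the small term reachability problem for $R_M$, proving undecidability already for this confluent (and non-terminating) TRS.

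The only slightly delicate point is the ``$\Leftarrow$'' direction of the reduction: one must fix the encoding of configurations precisely enough that the transition rules can never shrink a configuration term down to size $\leq 1$ and never create a stray small subterm that is itself reachable as a standalone term. With any standard encoding (in which, say, $\mathsf{conf}$ is ternary and its arguments are always non-trivial list terms) this is routine. On the decidability side, the only thing one might overlook is that ``each reduction sequence from $s$ is finite'' does not by itself bound $\mathrm{Reach}(s)$; one genuinely needs finite branching together with K\"onig's Lemma. Beyond these two observations, the construction carries no real difficulty.
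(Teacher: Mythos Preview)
Your proposal is correct and follows essentially the same approach as the paper: for decidability you use finite branching of $\to_R$ plus termination and K\"onig's Lemma to bound $\mathrm{Reach}(s)$, and for undecidability you simulate a fixed deterministic (hence confluent) Turing machine by a TRS, collapsing halting configurations to a small constant and reducing from the halting problem. The paper argues confluence by citing the standard TM-to-TRS translation and phrases the converse via ``the rules of $R_\M$ never decrease the size of a term,'' whereas you invoke orthogonality and the invariant that configuration terms have size $\geq 2$; these are cosmetic differences only.
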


\begin{proof}
Undecidability in the general case follows, e.g., from the fact that TRSs can simulate Turing
machines (TMs) \cite{HuetLankford1978}. (We will also use Turing machines for the proofs
of the hardness results in the remainder of the paper.)
More precisely, the reduction introduced in Section~5.1.1 of~\cite{BaNi98} transforms a given TM \M into a TRS 
$R_\M$ such that (among other things) the following holds: there is an infinite run of \M on the empty input iff there is an infinite
reduction sequence of $R_\M$ starting with the term $s_0$ that encodes the initial configuration of \M for the empty input. In addition,
if \M is deterministic, then $R_\M$ is confluent. We can now add rules to $R_\M$ that apply to all terms encoding a halting configuration of 
\M, and trigger further rules that reduce such a term to one of size $1$. Since the term $s_0$ has size larger than one and the rules
of $R_\M$ never decrease the size of a term, this yields a reduction of the (undecidable) halting problem for deterministic TMs
to the small term reachability problem for confluent TRSs. If we apply this reduction to a TM for which the halting problem is undecidable
(e.g., a universal TM), then this reduction shows that there are fixed TRSs $R$ for which the small term reachability problem is undecidable.

Given a terminating TRS $R$ and a term $s$, we can systematically generate
all terms reachable from $s$ by iteratively applying $\to_R$. Since $R$ is finite, $\to_R$ is finitely branching. Together
with termination, this means (by König's Lemma) that there are only finitely many terms reachable with $R$ from $s$
(see Lemma~2.2.4 in~\cite{BaNi98}). We can then check whether, among them, there is a term of size at most~$n$.
\end{proof}

In the following three sections, we study the \emph{complexity} of the small term reachability problem for terminating
TRSs, depending on how their termination can be shown. In general, if one is interested in the complexity of a problem 
whose formulation involves a number $n$, this complexity may depend on whether this number
is represented in unary or binary encoding.
In the first case, the contribution of the number to the size of the input is $n$, whereas it is $\log_2 n$ in the second case.
For this reason, there can potentially be an exponential difference between the complexity
(measured as a function of the size of the input)
of such a decision problem depending on the assumed representation of the number.
For the instances of the small term reachability problem considered in the next three sections, it turns out that the
encoding of the number $n$ does not have an impact on the complexity. However, in \Cref{variants:sect} we consider
a variant of the problem where using unary encoding leads to a decrease of the complexity.

\section{Length-reducing term rewriting systems}\label{length:red:sect}

In this section, we investigate the complexity of the small term reachability problem
for length-reducing TRSs, i.e., TRSs where each rewrite step decreases the size of the
term.
 
We start with showing an \emph{NP upper bound}.
Let $R$ be a length-reducing TRS and let
$s, n$ be an instance of the small term reachability problem for $R$.
Since $R$ is length-reducing,
the length $k$ of any rewrite sequence
$s = s_0 \to_R s_1 \to_R s_2 \to_R \ldots \to_R s_k$ issuing from $s$ is bounded by $|s|$
and we have $|s_i| < |s|$ for all $1\leq i \leq k$.
Thus, the following yields an NP procedure for deciding the
small term reachability problem:
\begin{itemize}
\item
  guess $k$ terms $s_1, \ldots, s_k$ with $k \leq |s|$ and $|s_i| < |s|$ for all $1\leq i
  \leq k$;
\item check whether  $s = s_0 \to_R s_1 \to_R s_2 \to_R \ldots \to_R s_k$ holds and
  whether $|s_k|\leq n$. If the answer is ``yes'' then accept, and reject otherwise.
\end{itemize}

\begin{lem}\label{lenred:upper:bound}
The small term reachability problem is in NP for length-reducing TRSs, both for unary and binary encoding of numbers.
\end{lem}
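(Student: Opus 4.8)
The plan is to formalize and analyze the nondeterministic procedure sketched just before the statement. First I would dispose of a trivial case: if $n \geq |s|$, then the empty rewrite sequence already witnesses $s \too_R s$ with $|s| \leq n$, so the algorithm can immediately accept. This reduces the interesting case to $n < |s|$, and that observation is what will make the claim about binary encoding go through cleanly.

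In the main case, I would exploit the length-reducing hypothesis to bound the search space: along any sequence $s = s_0 \to_R s_1 \to_R \cdots \to_R s_k$ the size strictly decreases, so $k < |s|$ and $|s_i| < |s|$ for all $i \geq 1$. The nondeterministic algorithm then guesses some $k < |s|$ together with terms $s_1,\ldots,s_k$, each of size $< |s|$ over the (fixed) signature of $R$, deterministically checks that $s_{i-1} \to_R s_i$ for each $i$, computes $|s_k|$, and accepts iff $|s_k| \leq n$. Correctness is immediate: if some $t$ with $s \too_R t$ and $|t| \leq n$ exists, a shortest rewrite sequence reaching such a $t$ has exactly the properties demanded of the guessed data, so some run accepts; conversely any accepting run exhibits an explicit rewrite sequence from $s$ to a term of size $\leq n$.

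It remains to check that this runs in nondeterministic polynomial time, and here the only subtlety — the reason the lemma mentions both encodings — is that the certificate size and the verifier's running time must be bounded by a polynomial in $|s|$ \emph{alone}, not just in the full input size: switching from unary to binary representation of $n$ only shrinks the input, so such a bound automatically survives that change. The certificate consists of at most $|s|$ terms, each of size at most $|s|$, hence has size polynomial in $|s|$. Each step-check $s_{i-1} \to_R s_i$ is deterministic polynomial time in $|s|$, since $R$ is fixed: one iterates over its finitely many constant-size rules $l \to r$ and over the at most $|s|$ positions of $s_{i-1}$, attempts to match the pattern $l$ against the subterm at that position, and, on success, compares $s_i$ with the result of replacing that subterm by the corresponding instance of $r$; matching, instantiation, and comparison are all polynomial in the involved term sizes, which are $< |s|$. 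Computing $|s_k|$ is linear, and the final comparison $|s_k| \leq n$ — needed only in the non-trivial case $n < |s|$, where $|s_k| < |s|$ too — costs time polynomial in $\log|s| + \log n$, hence in the input size under either encoding.

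I do not expect a genuine obstacle: this is a textbook guess-and-verify NP-membership argument, and the length-reducing property does all the work of bounding the nondeterministic search. The only point that needs a little care is the bookkeeping just described — keeping every quantity polynomial in $|s|$ by itself, with the preliminary reduction to $n < |s|$ ensuring the size comparison against $n$ stays harmless — so that the bound holds uniformly for both unary and binary encodings of $n$.
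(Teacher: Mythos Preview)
Your proof is correct and follows essentially the same guess-and-verify approach as the paper: bound the length and term sizes of the rewrite sequence by $|s|$, guess the sequence, verify each step in polynomial time, and compare the final size to $n$. The only cosmetic difference is in handling the size comparison under binary encoding of $n$: you dispose of the case $n \geq |s|$ up front so that the remaining comparison involves only numbers bounded by $|s|$, whereas the paper instead describes a direct linear-time test (traverse $s_k$ while decrementing $n$) that works uniformly without the case split.
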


\begin{proof}
  First, note that the
terms $s_1, \ldots, s_k$ can be generated by
an NP-procedure since $k$ and the sizes of the occurring terms
are bounded by the size $|s|$ of the input term $s$.
Checking whether
$s_{i-1} \to_R s_{i}$ holds for 
 $1\leq i
  \leq k$ can also be done in polynomial time:
  One has to check whether there is a
   rewrite rule in $R$, a position in $s_{i-1}$,  a substitution such
   that the rule is applicable at this position, and whether its application yields $s_i$. 
   Since the sizes of $s_{i-1}$ and $s_i$
   are bounded by the size of $s$, this can clearly be achieved in polynomial time.

Moreover, the test whether $|s_k|\leq n$ holds can be realized in linear time in the size of $s_k$ and the size of the representation of $n$,
both for unary and binary representation of $n$. Basically, one can traverse a textual representation of $s_k$ from left to right, and for
every encountered function symbol or variable subtract $1$ from the number, starting with $n$. If $0$ is
obtained before the end of the term has been reached,
then the test answers ``no,'' and otherwise it answers ``yes.'' Subtraction of $1$ from a number is clearly possible in time linear
in the size of the representation of the number, both for unary and binary encoding. Since the size of $s_k$ is bounded by the size of $s$, this shows
that the test $|s_k|\leq n$ can be performed in linear time in the size of $s$ and the representation of $n$.
\end{proof}

If the length-reducing system $R$ is confluent, then it is sufficient to generate an
arbitrary \emph{terminating} (i.e., maximal) rewrite sequence 
starting in $s$, i.e., a sequence $s = s_0\to_R s_1 \to_R s_2 \to_R \ldots \to_R s_k$ such that $s_k$ is irreducible. Obviously,
we have $k\leq |s|$, and thus such a sequence can be generated in polynomial time. We claim that there
is a term $t$ of size $\leq n$ reachable from $s$ iff $|s_k|\leq n$. Otherwise, the smallest term $t$ reachable from $s$
is different from $s_k$. But then $t$ and $s_k$ are both reachable from $s$, and thus, they must be joinable due to the
confluence of $R$.
As $s_k$ is irreducible, this implies $t \to_R^* s_k$ and thus, $|t| \geq |s_k|$, i.e.,
$t$ is not smaller than $s_k$.

\begin{prop}\label{confluence:prop}
For confluent length-reducing TRSs, the small term reachability problem can be decided
in deterministic polynomial time, both for unary and binary encoding of numbers.
\end{prop}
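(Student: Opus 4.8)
The plan is to turn the argument sketched just before the statement into a concrete deterministic polynomial-time procedure and to check that each of its phases runs in polynomial time. Since $R$ is fixed (so it has a constant number of rules, each of constant size), the input consists only of $s$ and $n$. First I would compute an arbitrary \emph{maximal} rewrite sequence $s = s_0 \to_R s_1 \to_R \cdots \to_R s_k$ ending in an irreducible term $s_k$, always choosing the next redex by some fixed deterministic rule (e.g.\ the leftmost one). This is possible and cheap because $R$ is length-reducing: we have $|s| > |s_1| > \cdots > |s_k|$, hence $k < |s|$; and at each step a redex is found by trying each rule of $R$ at each of the at most $|s|$ positions of the current term, which is polynomial per step and therefore polynomial overall.

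Next I would establish the key equivalence driving the algorithm: there is a term $t$ with $s \too_R t$ and $|t| \leq n$ if and only if $|s_k| \leq n$. The implication from right to left is immediate, since $s \too_R s_k$ and $s_k$ itself is a witness. For the converse, suppose $s \too_R t$ with $|t| \leq n$. Then $t$ and $s_k$ are both reachable from $s$, so by confluence of $R$ they are joinable, i.e.\ there is a term $u$ with $t \too_R u$ and $s_k \too_R u$. Since $s_k$ is irreducible, $s_k \too_R u$ forces $u = s_k$, and therefore $t \too_R s_k$. Because $R$ is length-reducing, $\too_R$ never increases the size of a term, so $|s_k| \leq |t| \leq n$, which is what we needed.

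Finally I would note that, once $s_k$ has been computed, the test $|s_k| \leq n$ takes linear time in $|s_k|$ and in the size of the representation of $n$, for both unary and binary encoding of $n$, exactly as argued in the proof of \Cref{lenred:upper:bound}. Combining the three phases, the whole procedure is deterministic and runs in polynomial time in the size of the input, which is the claim. The only point needing a little care — and the only place where confluence is genuinely used — is the converse direction of the equivalence: one must observe that joinability of $t$ and the irreducible $s_k$ pins the common reduct down to $s_k$ itself, so that \emph{every} small reachable term in fact reduces to $s_k$ and is therefore no smaller than $s_k$. I do not expect any obstacle beyond making this observation explicit.
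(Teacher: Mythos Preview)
Your proposal is correct and follows essentially the same approach as the paper: compute an arbitrary normal form $s_k$ of $s$ (which takes at most $|s|$ length-reducing steps), then use confluence and irreducibility of $s_k$ to argue that any reachable term $t$ must reduce to $s_k$ and hence satisfy $|s_k|\leq |t|$, so the answer is ``yes'' iff $|s_k|\leq n$. The only additions you make are spelling out the deterministic redex selection and the per-step cost, which the paper leaves implicit.
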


In general, however, there are (non-confluent) length-reducing TRSs for which the problem is \emph{NP-hard}.
We prove this by showing that any polynomially time bounded nondeterministic Turing machine 
can be simulated by a length-reducing TRS. 
Thus, assume that $\M$ is such a TM and that its time-bound is given by the polynomial $p$.
As in~\cite{BaNi98} we assume that in every step \M either moves to the left or to the
right, where the tape of the TM is infinite in both directions. 
In addition, we assume without loss of generality that \M has exactly one accepting state \acc.
We view the tape symbols of \M as unary function symbols and the states of \M as binary function symbols. 
We assume that $q_0$ is the initial state of \M and that $b$ is the blank symbol.
Furthermore, let $\#$ be a constant symbol and $f$ be a unary function symbol different from the tape symbols.

Configurations of the TM \M are represented by terms of the form
\[ b^{n}  a_1'\ldots a_{\ell'}' ( q(a_1 \ldots a_{\ell}b^{m}(\#), f^k(\#))).\]
This term represents the TM \M in the state $q$. The
first argument of a state symbol like $q$ corresponds to the part of the tape that starts at the
position of the head. Thus, in the term above, $a_1$ is the tape symbol at the position of the
head
and $a_2 \ldots a_\ell$ are the symbols to the right of it.
The symbols $a_1'\ldots a_{\ell'}'$ are the symbols to the left of the position of the
head. Moreover, we have $n$ blank symbols on the tape to the left of $a_1'$ and $m$ blank
symbols to the right of $a_{\ell}$, where $n$ and $m$ are chosen large enough such that
the TM does not reach the end of the represented tape. The second argument
$f^k(\#)$
of a state
symbol like $q$ is a unary down counter from which one $f$ is removed in every step that
\M makes. 
This is needed to ensure that
the constructed TRS is length-reducing.
So this 
counter indicates how many steps are still possible with the
TM \M  (i.e., for the term above, $k$ further steps
are possible).

Given an input word $w = a_1\ldots a_\ell$ for \M, we now construct the term
\[
t(w) := b^{p(\ell)}(q_0(a_1\ldots a_\ell b^{p(\ell)-\ell}(\#),f^{p(\ell)}(\#))).
\]
Intuitively, the starting $b$ symbols together with the first argument of $q_0$ in $t(w)$
provide a tape that is large enough for a $p(\ell)$-time bounded TM 
to run on for the given input $w$ of length $\ell$.
The counter $f^{p(\ell)}(\#)$ is large enough to allow \M to make the maximally possible
number of $p(\ell)$ steps.

Basically, we now express the transitions of \M as usual by rewrite rules (as, e.g., done in Definition~5.1.3 of~\cite{BaNi98}), but with
three differences:
\begin{itemize}
\item
  since the term $t(w)$ provides enough tape for a TM that can make at most $p(\ell)$ steps, the special cases that
  treat a situation where the end of the represented tape is reached and one has to add a blank are not needed;
\item
  since we fix as start term $t(w)$ a configuration term (i.e., a term that encodes a configuration of the TM), the additional effort expended in~\cite{BaNi98}
  to deal with non-configuration terms (by using copies of symbols with arrows to the left or right) is not needed;
\item
  we have the additional counter in the second argument, which removes one $f$ in every step, and thus ensures that
  rule application is length-reducing.
\end{itemize}
The TRS $\RM$ that simulates \M has the following rewriting rules:
\begin{itemize}
\item
  For each transition $(q,a,q',a',r)$ of \M it has the rule\ \ $q(a(x),f(y))\rightarrow
  a'(q'(x,y))$. Thus, the tape symbol $a$ is replaced by $a'$ and the head of the TM is
  now at the position to the right of it.
\item
  For each transition $(q,a,q',a',l)$ of \M it has the rule\ \ $c(q(a(x),f(y)))\rightarrow q'(ca'(x),y)$
  for every tape symbol $c$ of \M. Thus, $a$ is replaced by $a'$ and the head of the TM is
  now at the position to the left of it.
\end{itemize}
Note that the blank symbol $b$ is also considered as a tape symbol of $\M$.

In addition, we add rules to $\RM$ that can be used to generate the term $\#$, which has size~1, whenever \acc is reached:
\begin{itemize}
\item
  $a(\acc(x,y)) \to \acc(x,y)$ for every tape symbol $a$ of \M,
\item
  $\acc(x,y) \to \#$.
\end{itemize}
The following is now easy to see.

\begin{lem}\label{hardness:tech:lem}
The term $t(w)$ can be rewritten with $\RM$ to a term of size $1$ iff \M accepts the word $w$.
\end{lem}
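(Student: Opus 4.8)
The plan is to prove the two directions of the biconditional separately, using the correspondence between computations of $\M$ and rewrite sequences of $\RM$. For the direction ``$\M$ accepts $w$ $\Rightarrow$ $t(w)$ rewrites to a term of size $1$'', I would proceed by an induction on the length of an accepting computation of $\M$ on $w$. The key invariant to maintain is that each configuration of $\M$ reachable from the initial configuration in $j$ steps (with $j \leq p(\ell)$) is represented by a term of the shape displayed in the preliminary discussion, with counter $f^{p(\ell)-j}(\#)$ in the second argument of the state symbol. First I would check the base case: $t(w)$ is exactly the term encoding the initial configuration with counter $f^{p(\ell)}(\#)$. For the induction step, I would verify that whenever $\M$ makes a transition $(q,a,q',a',r)$ or $(q,a,q',a',l)$, the corresponding rule of $\RM$ (the move-right rule $q(a(x),f(y))\to a'(q'(x,y))$, respectively the move-left rules $c(q(a(x),f(y)))\to q'(ca'(x),y)$) is applicable to the representing term and yields the term representing the successor configuration. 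Here one uses that $p(\ell)$ blank symbols on each side suffice so that a $p(\ell)$-time bounded $\M$ never runs off the represented tape, and that the counter has not yet been exhausted because $j<p(\ell)$. Once $\M$ enters the accepting state $\acc$, the representing term has $\acc$ applied to some arguments and wrapped in a stack of unary tape symbols; the rules $a(\acc(x,y))\to\acc(x,y)$ strip these off one at a time, and finally $\acc(x,y)\to\#$ produces the term $\#$ of size $1$.

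For the converse ``$t(w)$ rewrites to a term of size $1$ $\Rightarrow$ $\M$ accepts $w$'', the idea is to analyse which terms of size $1$ are reachable and how. The only term of size $1$ in the relevant signature that can be produced is $\#$ (constants have size $1$; variables cannot appear since $t(w)$ is ground), and inspecting the rules of $\RM$ shows that $\#$ can only be created by the rule $\acc(x,y)\to\#$. Hence any rewrite sequence $t(w)\too_{\RM}\#$ must at some point apply this rule, so it passes through a term in which the accepting state symbol $\acc$ occurs. Then I would argue that, starting from the configuration term $t(w)$, every term reachable with $\RM$ that still contains a state symbol is again a configuration term of the prescribed shape (the move rules and the symbol-stripping rules $a(\acc(x,y))\to\acc(x,y)$ all preserve this shape), and that the sequence of state symbols encountered along the way traces out exactly a legal computation of $\M$ on $w$ — this is the standard faithfulness of the encoding, with the extra observation that the $f$-counter is decremented by exactly one at each simulated step and never blocks a step that $\M$ itself could make within its $p(\ell)$ bound. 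Therefore, if $\acc$ is reached, $\M$ has an accepting computation on $w$.

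I expect the main obstacle to be the converse direction, specifically the bookkeeping needed to show that no ``spurious'' rewrite sequence can cheat its way to a size-$1$ term: one must confirm that the move rules cannot be misapplied to produce ill-formed terms that nonetheless later collapse to $\#$, that the counter $f^k(\#)$ genuinely forces the number of simulated steps to be at most $p(\ell)$ (so the simulation is faithful and terminates), and that the auxiliary $\acc$-rules only fire once $\acc$ is genuinely present. This is routine but requires a careful case analysis over the left-hand sides of the rules of $\RM$ and the shape of configuration terms; since $t(w)$ is ground and $\RM$ is designed so that every rewrite step strictly decreases the term size, the analysis is finite and the argument is essentially the classical Turing-machine-simulation lemma (cf.\ Section~5.1.1 of~\cite{BaNi98}) adapted to the three modifications listed before the lemma. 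The forward direction, by contrast, is a straightforward induction once the configuration-term invariant is stated precisely.
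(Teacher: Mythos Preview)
Your proposal is correct and follows essentially the same approach as the paper: both directions rest on the standard simulation correspondence between runs of $\M$ and rewrite sequences of $\RM$, with the forward direction carried out by tracing an accepting run and then applying the $\acc$-rules, and the converse by observing that a size-$1$ term forces the appearance of $\acc$, which in turn forces an accepting run. The only cosmetic difference is that the paper argues the converse contrapositively (if $\M$ does not accept $w$, every reachable term has the form $u(q(t,t'))$ with $q\neq\acc$ and hence size $>1$), whereas you argue it directly; your version is slightly more explicit about the invariant and the role of the counter, which the paper leaves at the level of ``it is easy to see''.
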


\begin{proof}
It is easy to see that $\RM$ simulates \M in the sense that there is a run of \M on input $w = a_1\ldots a_\ell$ that reaches the accepting state
\acc iff there is a rewrite sequence of $\RM$ starting with $t(w)$ that reaches a term of the form $u(\acc(t,t'))$, where $u$ is a word
over the tape symbols of \M and $t, t'$ are terms. Note that the assumption that \M is $p(\ell)$-time bounded together with the construction
of $t(w)$ ensures that there is enough tape space and the counter is large enough for the simulation of \M to run through
completely.

Thus, if \M accepts $w=a_1\ldots a_\ell$, then we can rewrite $t(w)$ with $\RM$ into a term of the form $u(\acc(t,t'))$, and this term can
then be further rewritten into $\#$, which has size $1$. 
If \M does not accept $w= a_1\ldots a_\ell$, then the state \acc cannot be reached by any run of \M starting with this word. 
Thus, all terms reachable from $t(w)$ with the rules of $\RM$ that simulate \M are of the form $u(q(t,t'))$ for states $q$ different from \acc.
The rules of $\RM$ of the second kind are thus not applicable, and the terms of the form $u(q(t,t'))$ clearly have size $> 1$.
\end{proof}

We are now ready to show the corresponding complexity lower bound.

\begin{lem}\label{hardness:main:lem}
  There are length-reducing TRSs
  for which the
  small term reachability problem is NP-hard, both for unary and binary encoding of numbers.
\end{lem}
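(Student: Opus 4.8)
The plan is a many-one reduction from an NP-complete language to the small term reachability problem. Fix once and for all a nondeterministic Turing machine $\M$ that decides some NP-complete language $L$ (e.g., a machine for a suitable encoding of SAT) within a polynomial time bound $p$, normalized as above (one accepting state \acc, a left or right move in every step, tape infinite in both directions). Then $\RM$ is a \emph{fixed} TRS, as required by our convention that only the start term and the number form the variable part of the input. The reduction maps an input word $w=a_1\ldots a_\ell$ to the instance $(t(w),1)$ of the small term reachability problem for $\RM$, where $t(w)=b^{p(\ell)}(q_0(a_1\ldots a_\ell b^{p(\ell)-\ell}(\#),f^{p(\ell)}(\#)))$ as defined above. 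Since $|t(w)|$ is polynomial in $\ell$ and $t(w)$ is produced from $w$ in an obvious uniform manner, this reduction is computable in polynomial time.

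Next, one checks that $\RM$ is length-reducing. For each of its rules $l\to r$, every variable occurs at most once in $l$ and at most once in $r$, so $|l|_x\geq|r|_x$ for all variables $x$; and a direct count of the four families of rules shows $|l|>|r|$ in every case (the counter symbol $f$ on the left-hand side of the two simulation rules, the stripped tape symbol in $a(\acc(x,y))\to\acc(x,y)$, and the collapse $\acc(x,y)\to\#$ each supply the strict decrease). Hence every rule lies in the reduction order $\succ$ of \Cref{length:red:ord}, so every step $s\to_{\RM}t$ satisfies $|s|>|t|$, i.e., $\RM$ is length-reducing.

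It remains to apply \Cref{hardness:tech:lem}, which states that $t(w)$ can be rewritten with $\RM$ to a term of size $1$ iff $\M$ accepts $w$. Because every term has size at least $1$, the condition ``there is a term $t$ with $t(w)\too_{\RM}t$ and $|t|\leq 1$'' is equivalent to ``there is such a $t$ with $|t|=1$''. Therefore $(t(w),1)$ is a yes-instance of the small term reachability problem for $\RM$ iff $w\in L$, which yields the desired polynomial-time reduction from the NP-hard language $L$. Finally, the number component of every produced instance is the constant $1$, whose contribution to the input size is the same under unary and binary encoding, so the hardness holds for both encodings.

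There is no genuine obstacle here: the substance is already contained in the construction of $\RM$ and in \Cref{hardness:tech:lem}. The only points needing a little care are (i) fixing $\M$ so that $\RM$ is a fixed TRS conforming to the paper's convention, (ii) the routine verification that $\RM$ is length-reducing, and (iii) the remark that using the constant size bound $n=1$ makes the encoding of $n$ irrelevant.
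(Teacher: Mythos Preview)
Your proposal is correct and follows essentially the same approach as the paper: fix an NP-hard language and a polynomially time-bounded machine $\M$ for it, take the fixed TRS $\RM$, and reduce $w$ to the instance $(t(w),1)$ via \Cref{hardness:tech:lem}, observing that $n=1$ makes the encoding irrelevant. You add the explicit verification that every rule of $\RM$ is length-reducing, which the paper only asserts; this is a welcome addition but not a different argument.
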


\begin{proof}
  Let $\Pi$ be an NP-hard problem.
  We show that there is a length-reducing TRS $R$ such that $\Pi$ can be reduced in polynomial time to a small term reachability problem for $R$.
Let \M be the nondeterministic Turing machine that is an NP decision procedure for $\Pi$, and let $p$ be
the polynomial that bounds the length of runs of \M. We can construct the length-reducing TRS \RM
as described above. Given a word $w=a_1\ldots a_\ell$, we can compute the term $t(w)$ in polynomial time, and
\Cref{hardness:tech:lem} implies that this yields a reduction function from $\Pi$ to the
small term reachability problem for the length-reducing TRS \RM where we use the number $n = 1$, whose representation is of constant
size both for unary and binary encoding of numbers.
\end{proof}

Combining the obtained upper and lower bounds, we thus have determined the exact complexity of the problem under consideration.

\begin{thm}\label{completeness:main:thm}
The small term reachability problem is in NP for length-reducing TRSs, and there are such
TRSs for which the small term reachability problem 
is NP-complete. These results hold both for unary and binary encoding of numbers.
\end{thm}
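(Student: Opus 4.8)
The plan is to simply assemble the two bounds that have already been established in this section. The membership in \textbf{NP} is exactly the content of \Cref{lenred:upper:bound}, which holds uniformly for every length-reducing TRS and for both encodings of the number $n$. For the matching lower bound, \Cref{hardness:main:lem} exhibits a concrete length-reducing TRS (namely $\RM$ for a Turing machine $\M$ deciding an \textbf{NP}-hard problem) together with a polynomial-time reduction of that problem to the small term reachability problem for $\RM$, using the fixed bound $n=1$, whose representation has constant size regardless of encoding. Since the reduction target is in \textbf{NP} by \Cref{lenred:upper:bound}, the problem for that particular $\RM$ is \textbf{NP}-complete.

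Concretely, I would carry out the proof in two short steps. First, invoke \Cref{lenred:upper:bound} to conclude that for every length-reducing TRS the problem lies in \textbf{NP}, both for unary and binary encoding of $n$. Second, invoke \Cref{hardness:main:lem} to obtain a length-reducing TRS for which the problem is \textbf{NP}-hard (again independent of the encoding, since $n=1$ is used); combining this hardness with the generic upper bound from the first step yields \textbf{NP}-completeness for that TRS.

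There is essentially no obstacle here: all the real work has already been done in \Cref{lenred:upper:bound} (the guess-and-check \textbf{NP} procedure, exploiting that length-reduction bounds both the length of any rewrite sequence and the sizes of all intermediate terms by $|s|$) and in \Cref{hardness:main:lem} (the simulation of a polynomially time-bounded nondeterministic Turing machine by the length-reducing TRS $\RM$, where the extra unary down counter $f^{p(\ell)}(\#)$ makes every rule strictly length-reducing, and the accepting-state clean-up rules collapse any accepting configuration term to $\#$). The statement to be proved is therefore just the corollary-style combination of these two lemmas, and the only care needed is to note explicitly that neither bound is affected by whether $n$ is given in unary or binary, which is already remarked in both lemmas.

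\begin{proof}
By \Cref{lenred:upper:bound}, the small term reachability problem is in \textbf{NP} for every length-reducing TRS, both for unary and for binary encoding of numbers. By \Cref{hardness:main:lem}, there is a length-reducing TRS for which this problem is \textbf{NP}-hard, again for both encodings (the reduction uses the fixed number $n=1$, whose representation has constant size in either encoding). Hence, for that TRS the problem is \textbf{NP}-complete, and this holds for both unary and binary encoding of numbers.
\end{proof}
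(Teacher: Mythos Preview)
Your proposal is correct and matches the paper's approach exactly: the paper does not give a separate proof of this theorem but simply states it after remarking ``Combining the obtained upper and lower bounds, we thus have determined the exact complexity of the problem under consideration,'' which is precisely the combination of \Cref{lenred:upper:bound} and \Cref{hardness:main:lem} that you spell out.
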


To show that a given TRS $R$ is length-reducing, one can, for example, use the reduction order of \Cref{length:red:ord}.
This order also applies to the TRS \RM introduced above.

\section{Term rewriting systems shown terminating with a polynomial order}\label{po:sect}

An interesting question is whether similar results can be obtained for TRSs whose termination
can be shown using a reduction order from a class of such orders that provides an upper bound on the length of reduction sequences.
For example, it is known that a proof of termination using a polynomial order yields a double-exponential upper bound on the
length of reduction sequences \cite{Geupel88,Lautemann88,DBLP:conf/rta/HofbauerL89}. One possible conjecture could now be that, 
for TRSs whose termination can be shown using a polynomial order, the small term
reachability problem is in general N2ExpTime-complete.

The \emph{upper bound} can in principle be established similarly to the case of length-reducing systems: 
again, one needs to guess a reduction sequence, but now of at most double-exponential length, and
then check the size of the obtained term. However, proving that this yields a nondeterministic double-exponential time procedure 
for solving the small term reachability problem 
is less trivial than in the case of length-reducing TRSs. The reason is that the sizes of the terms in this sequence need no longer be bounded
by the size of the start term $s$. Instead, 
we obtain a double-exponential bound on the sizes of these terms, but proving this needs quite some effort and requires the additional restriction
that the employed order is \scomp.
We start by establishing a double-exponential bound on the number of non-unary function symbols occurring in such terms.
Given a term $t$, its \emph{nu-size} $\nusz{t}$ counts the number of occurrences of non-unary function symbols and variables in $t$.

\begin{lem}\label{nu-size:bound}
  Let $\succ$ be a \scomp
  polynomial order, and $s, s'$ be terms such that $s\succ s'$.
Then the nu-size of $s'$ is double-exponentially bounded by the size of $s$. 
\end{lem}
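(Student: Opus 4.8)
The goal is to bound $\nusz{s'}$ by a double-exponential function of $|s|$, using only that $s \succ s'$ holds for a \scomp polynomial order. The key idea is that the polynomial value $P_s$, evaluated at the all-ones assignment, already gives a double-exponential bound in $|s|$, and that for a \scomp order the polynomial value of any term $t$ is at least (roughly) $2^{\nusz{t}}$ when evaluated at all ones, so that $s \succ s'$ forces $\nusz{s'}$ to be logarithmic in $P_s$, hence double-exponential in $|s|$.

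\medskip

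\textbf{Step 1: An upper bound on $P_s(1,\dots,1)$ in terms of $|s|$.} Since the TRS — and hence the polynomial assignment — is fixed, there is a constant $d$ bounding the degree of every $P_f$ and a constant $c$ bounding all coefficients and the value $P_f(1,\dots,1)$ for every symbol $f$. Let $\wmax := \max_f P_f(1,\dots,1)$ over the finitely many symbols. By induction on the structure of a term $t$, with $|t| = k$, one shows $P_t(1,\dots,1) \leq \wmax^{\,?}$ — more precisely one proves a bound of the shape $P_t(1,\dots,1) \le 2^{2^{O(k)}}$. The induction step: if $t = f(t_1,\dots,t_m)$ then $P_t(1,\dots,1)$ is obtained by substituting the values $P_{t_i}(1,\dots,1)$ into the fixed polynomial $P_f$, which raises them to a power at most $d$, multiplies by a coefficient at most $c$, and sums at most a constant number of such monomials; iterating this down a term of size $k$ yields an exponent tower of height $2$ in $k$. (This is the standard double-exponential derivational-complexity computation of Geupel/Lautemann/Hofbauer--Lautemann.) Since $1 \le 1$ for all indeterminates, the definition of the polynomial order gives $P_s(1,\dots,1) > P_{s'}(1,\dots,1)$, so $P_{s'}(1,\dots,1) < 2^{2^{O(|s|)}}$.

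\medskip

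\textbf{Step 2: A lower bound on $P_{s'}(1,\dots,1)$ in terms of $\nusz{s'}$.} Here \scompa is essential. Every variable and every constant is assigned the value $\ge 2$ at the all-ones point (variables evaluate to $1$ there — wait, that is wrong, so one must be careful: variables are substituted with $1$, not with their weight). The correct statement is: evaluate $P_t$ at all-ones. Constants contribute their coefficient, which is $\ge 2$ by \scompa; a non-unary symbol $f$ of arity $m \ge 2$ has $P_f$ depending on all $m \ge 2$ indeterminates, so $P_f(x_1,\dots,x_m) \ge x_1 + \dots + x_m \ge x_{i_1} + x_{i_2}$ for two distinct arguments; a variable contributes $1$. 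One then shows by induction on $t$ that $P_t(1,\dots,1) \ge \nusz{t}$ — or, with a slightly more careful accounting using the $\ge 2$ from constants, that $P_t(1,\dots,1)$ is at least exponential in the number of non-unary symbols: roughly, each branching node at least adds its subtree values, and each constant leaf is worth $\ge 2$, so a term with $\nusz{t}$ non-unary/variable occurrences has $P_t(1,\dots,1) \ge 2^{\Omega(\nusz{t})}$ — or at the very least $P_t(1,\dots,1) \ge \nusz{t}$, which already suffices. I would aim for the simplest sufficient bound: $P_t(1,\dots,1) \ge \nusz{t}$, proved by a straightforward structural induction where the branching case uses that $P_f$ depends on all arguments.

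\medskip

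\textbf{Step 3: Combine.} From Steps 1 and 2, $\nusz{s'} \le P_{s'}(1,\dots,1) < 2^{2^{O(|s|)}}$, which is the claimed double-exponential bound. \textbf{The main obstacle} I anticipate is Step 2: getting a lower bound on the polynomial value in terms of $\nusz{\cdot}$ that is genuinely valid for \emph{arbitrary} \scomp polynomial orders, including ones with ``zero coefficients'' (polynomials that depend on an indeterminate only through a higher-degree monomial, e.g. $P_f = x_1^2 + x_2$ rather than $x_1 + x_2$). In such cases one cannot simply say $P_f(\vec x) \ge \sum x_i$; one only knows $P_f$ evaluated at all-ones is some fixed constant $\ge$ (number of distinct indeterminates it genuinely involves) because each such indeterminate must appear in some monomial whose value at all-ones is $\ge 1$. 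Making the induction go through cleanly — in particular ensuring the constants' contribution of $\ge 2$ actually propagates multiplicatively rather than being swallowed — is where the care is needed; it is precisely the reason the paper flags that \scompa (constants $\ge 2$) is required and that ``zero coefficients'' are still permitted.
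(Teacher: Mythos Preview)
Your overall strategy matches the paper's exactly: bound $P_s$ above double-exponentially in $|s|$, bound $P_{s'}$ below in terms of $\nusz{s'}$, and combine via $P_s > P_{s'}$. The gap is in Step~2, and it is precisely the issue you yourself flag as ``the main obstacle'' but do not resolve. Evaluating at the all-ones assignment does \emph{not} give $P_t(1,\dots,1)\ge\nusz{t}$: take binary $f$ with $P_f=x_1\cdot x_2$ (a perfectly legal choice, since $P_f$ depends on both indeterminates) and $t=f(x,y)$. Then $P_t(1,1)=1$ while $\nusz{t}=3$. Your claimed inequality $P_f(x_1,\dots,x_m)\ge x_1+\dots+x_m$ is false at $(1,\dots,1)$ for any purely multiplicative $P_f$; the product-to-sum inequality $\prod a_i\ge\sum a_i$ requires all $a_i\ge 2$, and at $a=1$ variables violate this.

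The paper's fix is short but essential: evaluate at $a=2$ (indeed any fixed $a\ge 2$) rather than $a=1$. Then variables get value $a\ge 2$, constants get value $\ge 2$ by \scompa, and inductively every subterm evaluates to something $\ge 2$. This is exactly what makes the product-to-sum step go through: from ``each indeterminate occurs in some monomial'' one deduces $P_f(a_1,\dots,a_n)\ge a_1+\dots+a_n$ whenever all $a_j\ge 2$, and hence $\pi_a(f(t_1,\dots,t_n))\ge\sum_i\pi_a(t_i)>\sum_i\nusz{t_i}+n-1\ge\nusz{t}$ for $n\ge 2$. The unary case is trivial since $\nusz{\cdot}$ ignores unary symbols. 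Note that $s\succ s'$ still yields $\pi_a(s)>\pi_a(s')$ for $a=2$ since the order is tested at all positive integer assignments; and the double-exponential upper bound on $\pi_a(s)$ from Hofbauer--Lautemann holds for any fixed $a\ge 1$. So your plan is right, you just need to move the evaluation point from $1$ to $2$; \scompa is there precisely to make the base case for constants match the value that variables get at $a=2$.
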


\begin{proof}
  For every term $t$ and natural number $a \geq 1$, let $P_t$
again
  be the polynomial associated with $t$ by the polynomial interpretation that induces $\succ$,
and let $\pi_a(t)$ be the evaluation of $P_t$ with $a$ substituted for all indeterminates of $P_t$.
It is shown in~\cite{DBLP:conf/rta/HofbauerL89} that for any
$a \geq 1$,
$\pi_a(t)$ is double-exponentially bounded by the size of $t$ 
(see also the proof of Proposition~5.3.1 in~\cite{BaNi98}). 

We claim that, if $a\geq 2$, then $\pi_a(t) > \nusz{t}$ for all terms $t$. 
Consequently, $\pi_a(s) > \pi_a(s') > \nusz{s'}$ implies that the nu-size of $s'$ is
double-exponentially bounded by the size of $s$.
Note that the inequality $\pi_a(s) > \pi_a(s')$ holds since $s\succ s'$. 

To prove the claim, we
use structural induction on $t$. If $t$ is a variable then we have $\pi_a(t) = a \geq 2 > 1 = \nusz{t}$.
If $t$ is a constant, then $\pi_a(t) = P_t \geq 2 > 1 = \nusz{t}$, where $P_t \geq 2$ holds since $\succ$ is assumed to be \scomp.

In the induction step, we first consider the case $t = f(t_1,\ldots,t_n)$ for $n \geq 2$.
Since $P_f$ depends on all its indeterminates, it is a sum of monomials with coefficients $\geq 1$ such that each indeterminate occurs in at least one monomial.
Using the fact that $a_{i_1} \cdot \ldots \cdot a_{i_k}\geq a_{i_1} + \ldots + a_{i_k}$ holds for all $a_{i_1},\ldots,a_{i_k}\geq 2$,
we can deduce that $P_f(a_1,\ldots,a_n) \geq a_1+ \ldots + a_n$ for all $a_1,\ldots,a_n\geq 2$. 
Consequently, 
$\pi_a(f(t_1,\ldots,t_n)) = P_f(\pi_a(t_1), \ldots, \pi_a(t_n)) \geq
\pi_a(t_1) + \ldots + \pi_a(t_n)$.\footnote{%
Note that \scompa ensures that $\pi_a(t_1), \ldots, \pi_a(t_n)\geq 2$ is satisfied.
}
The induction hypothesis yields $\pi_a(t_i) > \nusz{t_i}$ for $i = 1,\ldots,n$,
and thus $\pi_a(t_1) + \ldots + \pi_a(t_n) \geq (\nusz{t_1} +1) + \ldots + (\nusz{t_n} +1) > (\nusz{t_1} + \ldots + \nusz{t_n}) +1 = \nusz{f(t_1,\ldots,t_n)}$.
The strict inequality holds since $n\geq 2$.

If $t = f(t_1)$ for a unary function symbol $f$, then $\pi_a(t) = \pi_a(f(t_1)) \geq 
\pi_a(t_1) > \nusz{t_1} = \nusz{t}$, where the strict inequality holds by the induction hypothesis.
\end{proof}

The following example illustrates that \scompa and the restriction to counting only
non-unary function symbols in $s'$ are needed for this lemma to hold.

\begin{exa}
First, we show that \Cref{nu-size:bound} does not hold if we replace the nu-size of $s'$ with the
size of $s'$.
In fact there is no bound function $b$ such that 
$s\succ s'$ implies $|s'| \in O(b(|s|))$
for all terms $s, s'$. The reason is that there exist terms $s$
for which we have $s \succ s'$ for infinitely many terms $s'$ that have
arbitrarily large sizes.
As an example, consider a \scomp polynomial order
that assigns the polynomial $P_{f} = x$ to the unary function symbol $f$, $P_c = 2$ to the
constant $c$, and $P_d = 3$ to the constant $d$.
Let $s$ be the term $d$ and as $s'$, we consider all terms of the form 
$s_k := f^k(c)$ for
every $k\geq 1$.
Clearly, we have $d \succ s_k$ for all $k \geq 1$ since $P_d = 3 > 2 = P_{s_k}$. However,
the size of $s_k$ is $|s_k| = k +1$, i.e., we have
 $d \succ s_k$ for infinitely many terms $s_k$ that have
arbitrarily large sizes.

Second, we show that the lemma does not hold without the \scompa assumption.
In fact, again then there is no bound function $b$ such that 
$s\succ s'$ implies $|s'|_{nu} \in O(b(|s|))$
for all terms $s, s'$.
Similar to the reasoning above, the reason is that then there exist terms $s$
for which we have $s \succ s'$ for infinitely many terms $s'$ that have
arbitrarily large nu-sizes.
As an example, 
assume that $f$ is a binary function symbol such that $P_f = x \cdot y$, $c$ is a
constant symbol with $P_c = 1$ (which violates \scompa), and
$d$ is a constant with associated polynomial $P_d = 2$.
Let $s$ again be the term $d$ and as $s'$, 
we consider the terms
$t_k$ for $k\geq 0$, which we define
by induction: $t_0 := c$ and $t_{k+1} := f(t_k,t_k)$.
Clearly, we have $d \succ t_k$ for all $k \geq 0$ since $P_d = 2 > 1 = P_{t_k}$.
However,
the nu-size of $t_k$ is $\nusz{t_k} = |t_k| = 2^{k+1}-1$,
i.e., we have
$d \succ t_k$ for infinitely many terms $t_k$ that have
arbitrarily large nu-sizes.
\end{exa}

However, as in the first part of this example, terms that are large since they contain many unary function symbols must also have a large nesting depth. 
Thus, if we additionally have a bound on this depth, then we also obtain a bound on the size.

To be more precise, for a term $t$ its \emph{depth} $\dep(t)$ is the maximal nesting of function symbols in $t$, i.e.,
$\dep(c) = \dep(x) = 0$ for constants $c$ and variables $x$, and $\dep(f(t_1,\ldots,t_n)) = 1 + \max_{1\leq i\leq n}\dep(t_i)$.
As shown in~\cite{DBLP:conf/cade/CichonL92}, applying a rewrite step can increase the depth of a term only by a constant that is determined 
by the given TRS.

\begin{lemC}[\cite{DBLP:conf/cade/CichonL92}]
For every TRS $R$ there is a constant $m_R$ such that $s\to_R t$ implies $\dep(t)\leq\dep(s)+m_R$.
\end{lemC}

Consequently, if we start with a term $s$ and apply at most double-exponentially many (in the size of $s$) rewrite steps, then the depth of the obtained term is at most
double-exponential in the size of $s$.

\begin{lem}\label{depth:bound}
Let $R$ be a TRS whose termination can be shown using a polynomial order. If $s, t$ are terms such that
$s\too_R t$, then the depth of $t$ is double-exponentially bounded by the size of $s$.
\end{lem}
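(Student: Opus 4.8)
The plan is to combine the two facts that have just been established: the double-exponential bound on the length of reduction sequences (from \cite{DBLP:conf/rta/HofbauerL89}, valid because $R$ is fixed and terminates by a polynomial order) and the constant bound on depth increase per rewrite step (the cited \Cref{DBLP:conf/cade/CichonL92}-style lemma, with constant $m_R$). First I would fix the TRS $R$ and recall that there is a polynomial $P$ such that every reduction sequence starting from a term $s$ has length at most $2^{2^{P(|s|)}}$; since $R$ is fixed, $P$ depends only on $R$ (and its associated polynomial interpretation). Then, for any $s\too_R t$, say via a sequence $s = u_0 \to_R u_1 \to_R \cdots \to_R u_k = t$, we have $k \le 2^{2^{P(|s|)}}$.

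Next I would apply the depth lemma inductively along this sequence: $\dep(u_i) \le \dep(u_{i-1}) + m_R$ for each $i$, hence $\dep(t) = \dep(u_k) \le \dep(s) + k\cdot m_R \le \dep(s) + m_R\cdot 2^{2^{P(|s|)}}$. Since $\dep(s) \le |s|$ and $m_R$ is a constant, the right-hand side is bounded by a double-exponential function of $|s|$ — e.g.\ by $2^{2^{P'(|s|)}}$ for a suitable polynomial $P'$ depending only on $R$. That is exactly the claimed statement, so the proof is essentially a two-line chaining argument.

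The only genuine subtlety worth flagging is making sure the length bound from \cite{DBLP:conf/rta/HofbauerL89} is being invoked in the right form: it bounds the length of \emph{any} reduction starting at $s$, not just some particular one, and it is stated for a fixed TRS, which matches our standing assumption; the constant in the exponent absorbs into the polynomial $P'$. There is no real obstacle here — the work was done in \Cref{nu-size:bound} and in setting up the cited depth lemma — so the argument is routine once those ingredients are in place. (If one wanted to be careful, one could note that a sequence of length $k$ has $k$ rewrite steps, so the factor is $k$ rather than $k+1$; this changes nothing asymptotically.)

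\begin{proof}
Let $R$ be a TRS whose termination is shown by a polynomial order, and let $m_R$ be the constant from the preceding lemma, so that $u\to_R v$ implies $\dep(v)\le\dep(u)+m_R$. By the results of~\cite{DBLP:conf/rta/HofbauerL89} (see also the proof of Proposition~5.3.1 in~\cite{BaNi98}), since $R$ is fixed, there is a polynomial $P$ such that every reduction sequence of $R$ starting from a term $s$ has length at most $2^{2^{P(|s|)}}$. Now suppose $s\too_R t$, say via $s = u_0 \to_R u_1 \to_R \cdots \to_R u_k = t$ with $k\le 2^{2^{P(|s|)}}$. Applying the depth lemma along this sequence gives, by an easy induction on $i$, that $\dep(u_i)\le\dep(s)+i\cdot m_R$, and hence $\dep(t)=\dep(u_k)\le\dep(s)+k\cdot m_R\le\dep(s)+m_R\cdot 2^{2^{P(|s|)}}$. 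Since $\dep(s)\le|s|$ and $m_R$ is a constant depending only on $R$, the right-hand side is bounded by $2^{2^{P'(|s|)}}$ for a suitable polynomial $P'$ depending only on $R$. Thus the depth of $t$ is double-exponentially bounded by the size of $s$.
\end{proof}
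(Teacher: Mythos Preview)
Your proof is correct and follows exactly the approach the paper intends: it combines the double-exponential derivation-length bound from~\cite{DBLP:conf/rta/HofbauerL89} with the per-step depth increase bound of~\cite{DBLP:conf/cade/CichonL92} to obtain the claimed double-exponential depth bound. The paper gives only the one-sentence ``Consequently\ldots'' justification preceding the lemma, so you have simply spelled out the details that the paper leaves implicit.
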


Bounds on the depth and the nu-size yield the following bound on the size of a term.

\begin{lem}
If $\nusz{t} = m$ and $\dep(t) = n$, then $|t| \leq m + m \cdot n$.
\end{lem}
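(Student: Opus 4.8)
The plan is to prove the bound $|t| \leq m + m\cdot n$ by a direct structural/counting argument on the occurrences of function symbols in $t$. The key observation is that every position in $t$ either carries a non-unary function symbol (or a variable), or carries a unary function symbol; the first kind of position contributes to $\nusz{t} = m$, and the second kind forms chains of unary symbols whose lengths are controlled by the depth $\dep(t) = n$.

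First I would make the counting precise. Write $|t| = \nusz{t} + u(t)$, where $u(t)$ denotes the number of occurrences of unary function symbols in $t$. Since $\nusz{t} = m$ is given, it suffices to show $u(t) \leq m\cdot n$. To bound $u(t)$, I would consider the tree of $t$ and look at each occurrence of a unary function symbol: it lies on some path from the root to a leaf (a variable or constant), and along any such path the unary symbols appear in maximal consecutive blocks that are separated by non-unary symbols (or by the leaf at the bottom). A clean way to organize this: for every occurrence $p$ of a unary symbol in $t$, walk downward from $p$ through unary symbols until the first non-unary symbol, variable, or constant is reached; call the node so reached the \emph{anchor} of $p$. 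Each anchor is either an occurrence of a non-unary function symbol or an occurrence of a variable — in either case it is counted by $\nusz{t}$, so there are at most $m$ anchors. Moreover, each anchor $a$ has at most $\dep(t) = n$ unary symbols mapped to it, because those unary symbols all lie on the path from the root to $a$, and the length of that path is bounded by $\dep(t) = n$. Hence $u(t) \leq m\cdot n$, and therefore $|t| = \nusz{t} + u(t) \leq m + m\cdot n$.

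Alternatively — and perhaps more slickly — I would do a straightforward induction on the structure of $t$, proving the slightly stronger statement that $|t| \leq \nusz{t} + \nusz{t}\cdot\dep(t)$. The base cases ($t$ a variable or constant) give $|t| = 1 = \nusz{t}$ and $\dep(t) = 0$, so the inequality holds. For $t = f(t_1,\dots,t_n)$ with $n \geq 2$, one has $\nusz{t} = 1 + \sum_i \nusz{t_i}$, $\dep(t) = 1 + \max_i \dep(t_i)$, and $|t| = 1 + \sum_i |t_i|$; applying the induction hypothesis to each $t_i$ and using $\dep(t_i) \leq \dep(t) - 1 \leq \dep(t)$ lets one push the estimate through. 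For $t = f(t_1)$ with $f$ unary, $\nusz{t} = \nusz{t_1} \geq 1$, $\dep(t) = \dep(t_1)+1$, and $|t| = |t_1| + 1 \leq \nusz{t_1} + \nusz{t_1}\cdot\dep(t_1) + 1 \leq \nusz{t_1} + \nusz{t_1}\cdot(\dep(t_1)+1) = \nusz{t} + \nusz{t}\cdot\dep(t)$, where the last inequality uses $\nusz{t_1} \geq 1$ precisely to absorb the additive $1$. This is where the one genuinely delicate point sits.

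The main obstacle — though it is minor — is exactly handling the unary case in the induction: one needs $\nusz{t} \geq 1$ to convert the additive $+1$ coming from the extra unary symbol into an extra $\nusz{t}$ inside the product term. This is true because every term contains at least one variable or non-unary function symbol at its bottom (a leaf is a variable or a constant, both counted by $\nusz{\cdot}$), so $\nusz{t} \geq 1$ always holds; I would record this as a trivial preliminary observation. Everything else is routine arithmetic, and no use of \scompa, the polynomial order, or termination is needed — this is a purely combinatorial statement about a single term.
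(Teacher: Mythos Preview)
Your proposal is correct. Your first argument is essentially the paper's: the paper also writes $|t| = \nusz{t} + (\text{unary occurrences})$ and bounds the unary occurrences by a path-counting argument in the tree of $t$, the only difference being that the paper anchors each unary occurrence to a \emph{leaf} below it (summing, over all leaves, the number of unary symbols on the root-to-leaf path, which may overcount but still gives $m\cdot n$), whereas you anchor to the nearest non-unary node below (no overcounting, same bound). Your alternative structural-induction proof is a genuinely different route that the paper does not take; it is equally valid and makes the role of the inequality $\nusz{t}\geq 1$ explicit, at the cost of a three-way case split where the paper's one-shot counting argument needs none.
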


\begin{proof}
Consider the tree representation of $t$. For every leaf (labeled by a constant or variable), consider the number
of unary function symbols occurring on the unique path from the root to this leaf. This number is clearly bounded by $\dep(t)$ and the
number of leaves is bounded by $\nusz{t}$. Thus, the number obtained by summing up all these numbers is bounded by $m\cdot n$,
and this sum is an upper bound on the number of unary function symbols occurring in~$t$.
\end{proof}

Due to this result, \Cref{nu-size:bound,depth:bound} yield the desired  bound on the sizes of reachable terms.
Indeed, if both the nu-size of $t$ and the depth of $t$ are double-exponentially bounded by the size of $s$, then the size of $t$ is also
double-exponentially bounded by the size of $s$.

\begin{prop}\label{size:bound:lem}
Let $R$ be a TRS whose termination can be shown using a \scomp polynomial order. If $s, t$ are terms such that
$s\too_R t$, then the size of $t$ is double-exponentially bounded by the size of $s$.
\end{prop}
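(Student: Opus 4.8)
The plan is simply to assemble the three preceding results. Since $R$ is shown terminating with the \scomp polynomial order $\succ$, we have $l\succ r$ for every rule $l\to r\in R$, so $\succ$ is a reduction order compatible with $\to_R$; hence $s\too_R t$ implies either $s=t$ or $s\succ t$ (the former when the rewrite sequence has length zero, the latter by transitivity of $\succ$ otherwise). In the case $s=t$ we have $|t|=|s|$ and the claim is trivial, so I would concentrate on the case $s\succ t$.

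Given $s\succ t$, \Cref{nu-size:bound} provides a function $b_1$ that is double-exponential in its argument with $\nusz{t}\leq b_1(|s|)$. Independently, \Cref{depth:bound} (which requires only that termination is established by some polynomial order, not necessarily a \scomp one) provides a double-exponential function $b_2$ with $\dep(t)\leq b_2(|s|)$. Plugging these into the bound $|t|\leq\nusz{t}+\nusz{t}\cdot\dep(t)$ established above yields $|t|\leq b_1(|s|)+b_1(|s|)\cdot b_2(|s|)$.

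It then remains to note that this right-hand side is again double-exponential in $|s|$: if $b_1(m)\leq 2^{2^{p(m)}}$ and $b_2(m)\leq 2^{2^{q(m)}}$ for polynomials $p,q$, then $b_1(m)+b_1(m)\cdot b_2(m)\leq 2\cdot 2^{2^{p(m)}}\cdot 2^{2^{q(m)}}\leq 2^{2^{r(m)}}$ for a suitable polynomial $r$ (taking, for instance, $r(m)=\max\{p(m),q(m)\}+1$ works for all sufficiently large $m$, and the finitely many remaining values can be absorbed into the constant implicit in the $O$-notation). Hence $|t|$ is double-exponentially bounded by $|s|$. I do not anticipate any real difficulty in this argument — all the substantive work lies in \Cref{nu-size:bound} and \Cref{depth:bound}, and the only points demanding care are isolating the degenerate case $s=t$ (since \Cref{nu-size:bound} is phrased for a strict decrease $s\succ s'$) and the elementary observation that sums and products of double-exponentially bounded functions remain double-exponentially bounded.
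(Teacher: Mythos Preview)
Your proposal is correct and follows essentially the same approach as the paper: combine \Cref{nu-size:bound}, \Cref{depth:bound}, and the size bound $|t|\leq\nusz{t}+\nusz{t}\cdot\dep(t)$, then observe that the resulting expression is still double-exponential. The paper's own justification is in fact terser than yours---it does not single out the degenerate case $s=t$ nor spell out why a product of double-exponential bounds remains double-exponential---so your added care on those two points is harmless and arguably an improvement.
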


We are now ready to prove the upper complexity bound.

\begin{lem}
The small term reachability problem is in N2ExpTime for TRSs whose termination can be shown using a \scomp polynomial order $\succ$, both for unary and binary encoding of numbers.
\end{lem}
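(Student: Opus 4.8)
The plan is to lift the nondeterministic guess-and-check procedure used for length-reducing systems (cf.\ the proof of \Cref{lenred:upper:bound}) to the double-exponential setting, replacing the two linear bounds used there by two double-exponential bounds. Fix the TRS $R$ together with the \scomp polynomial order $\succ$ witnessing its termination, and let $s,n$ be an instance of the problem. Two facts will carry the proof. First, since $R$ is terminating with a polynomial order, the results of~\cite{Geupel88,Lautemann88,DBLP:conf/rta/HofbauerL89} provide a bound $d(|s|)$, double-exponential in $|s|$, on the length of \emph{every} rewrite sequence issuing from $s$. Second, \Cref{size:bound:lem} provides a bound $e(|s|)$, again double-exponential in $|s|$, such that $s\too_R t$ implies $|t|\le e(|s|)$; in particular every term occurring along a rewrite sequence starting in $s$ has size at most $e(|s|)$. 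Both $d$ and $e$ may depend on $R$ and $\succ$, which is unproblematic since these are fixed.

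The N2ExpTime procedure then works as follows: guess a number $k\le d(|s|)$; guess terms $s_1,\dots,s_k$, each of size at most $e(|s|)$; check that $s = s_0\to_R s_1\to_R\cdots\to_R s_k$; and check that $|s_k|\le n$, accepting iff all checks succeed. For the running time, writing down one term of size $\le e(|s|)$ costs $O(e(|s|))$ steps, so the whole guessed sequence costs $O(d(|s|)\cdot e(|s|))$ steps, which is still double-exponential in $|s|$ because a product of two double-exponentials is double-exponential. Each single-step check $s_{i-1}\to_R s_i$ reduces to searching over the finitely many rules of $R$, over the $O(e(|s|))$ positions of $s_{i-1}$, and over the candidate substitutions (whose ranges are subterms of $s_{i-1}$ and hence of size $\le e(|s|)$), followed by comparing the instantiated right-hand side with $s_i$; this is polynomial in $e(|s|)$ and is performed at most $d(|s|)$ times, so it again stays within double-exponential time. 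Finally, the test $|s_k|\le n$ is carried out exactly as in the proof of \Cref{lenred:upper:bound}, by scanning a representation of $s_k$ while decrementing a counter initialized to $n$; this works uniformly for unary and binary encoding of $n$ and costs time linear in $|s_k|$ and in the size of the representation of $n$, hence again within budget. (One may also observe that if $n\ge|s|$ the instance is trivially positive, witnessed by $t=s$; thus only the case $n<|s|$ is of interest, which makes the irrelevance of the encoding of $n$ immediate, in line with the remark at the end of \Cref{problem:sect}.) Altogether the procedure runs in $O(2^{2^{p}})$ nondeterministic steps for a polynomial $p$ in the size of $s$ and the representation of $n$.

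It remains to argue correctness, which is routine. If the procedure accepts, it has exhibited an actual rewrite sequence $s\too_R s_k$ with $|s_k|\le n$, so the instance is positive. Conversely, if some term $t$ with $s\too_R t$ and $|t|\le n$ exists, pick any rewrite sequence $s=s_0\to_R\cdots\to_R s_k=t$; by the bound on derivation lengths $k\le d(|s|)$, and by \Cref{size:bound:lem} every $s_i$ satisfies $|s_i|\le e(|s|)$, so this sequence is among those the procedure may guess and it passes all checks. The only real difficulty in the whole argument --- establishing that the intermediate terms cannot exceed double-exponential size, so that the guessed sequence can in fact be written down within the time budget --- has already been isolated and handled in \Cref{size:bound:lem} (this is the point the authors flag as requiring the \scompa restriction); within the present lemma one merely has to be careful that composing a double-exponential number of double-exponential-sized writes and verification steps, together with the comparison against $n$, still fits into N2ExpTime.
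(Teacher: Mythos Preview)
Your proof is correct and follows essentially the same approach as the paper's: guess a reduction sequence of at most double-exponential length (Hofbauer--Lautemann bound), bound the sizes of all intermediate terms double-exponentially via \Cref{size:bound:lem}, and verify steps and the final comparison within that budget. You supply more detail than the paper does (explicitly arguing that products of double-exponentials stay double-exponential, and adding the observation that $n\ge|s|$ makes the instance trivially positive), but the structure of the argument is the same.
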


\begin{proof}
This proof is similar to the one of \Cref{lenred:upper:bound}, but now the length of the
guessed reduction sequence
$
s = s_0\to_R s_1 \to_R s_2 \to_R \ldots \to_R s_k
$
issuing from $s$ is of at most double-exponential length $k$ in $|s|$ and the sizes of the terms $s_i$ are double-exponentially bounded by the size of $s$
according to \Cref{size:bound:lem}. 
Thus, choosing a successor term of a term $s_i$ in the sequence can now be done by an N2ExpTime procedure, and the final
test $|s_k|\leq n$ requires double-exponential time, both for unary and binary encoding of numbers, using the algorithm sketched in the proof of \Cref{lenred:upper:bound},
but now applied to a term of at most double-exponential size. Overall, this yields an N2ExpTime procedure.
\end{proof}

Note that, in this lemma, we assume that the TRS $R$ is fixed. In fact, the double-exponential upper bound on the length of reduction sequences and the size 
of terms is of the form $2^{2^{c|s|}}$, where the number $c$ depends on parameters of the polynomial order used to show termination of $R$ rather than on the size of $R$ directly.
If $R$ is fixed, we can take a fixed polynomial order showing its termination, and thus $c$ is a constant. Otherwise, 
there is no clear correspondence between the value of $c$ and the size of $R$.

Regarding the \emph{lower bound}, the idea is to use basically the same approach as employed in \Cref{length:red:sect},
but generate a double-exponentially large tape and a double-exponentially large counter with the help of a TRS
whose termination can be shown using a polynomial order. For this, we want to re-use the original system introduced by Hofbauer and Lautemann
showing that the double-exponential upper bound is tight (see Example~5.3.12 in~\cite{BaNi98}). 

\begin{exa}\label{ho:la:2ex}
Let $\RHL$ be the TRS consisting of the following rules:
\[
\begin{array}{r@{\;}c@{\;}l@{\qquad}r@{\;}c@{\;}l@{\qquad}r@{\;}c@{\;}l}
  x+0 &\to& x, & d(0) &\to& 0,& q(0) &\to& 0,\\
 x+s(y) &\to& s(x+y), & 
 d(s(x)) &\to& s(s(d(x))), &
 q(s(x)) &\to& q(x) + s(d(x)).
\end{array}
\]
The TRS $\RHL$
intuitively defines the arithmetic functions addition ($+$), double ($d$), and square ($q$) on non-negative integers. 
Thus, it is easy to see that the term $t_n := q^n(s^2(0))$ can be reduced to $s^{2^{2^n}}(0)$.
The polynomial order in \Cref{Ex:PolynomialOrders} shows
termination of $\RHL$.
\end{exa}

Now, assume that $\M$ is a double-exponentially time bounded nondeterministic TM and that its time-bound is $2^{2^{p(\ell)}}$ 
for a polynomial $p$, where $\ell$ is the length of the input word. 
Given an input word $w = a_1\ldots a_\ell$ for \M, we construct the term
\[
t(w) := q_1^{p(\ell)}(bb(q_0(a_1\ldots a_\ell q_2^{p(\ell)}(bb(\#)),q_3^{p(\ell)}(ff(\#))))).
\]
The idea underlying this definition is that the term $q_1^{p(\ell)}(bb(q_0(\cdot))$ can be used to generate a tape segment
before the read-write head of the TM (marked by the state $q_0$) with $2^{2^{p(\ell)}}$ blanks using the following modified
version $R_1$ of $\RHL$:
\[
\mbox{\footnotesize $\begin{array}{@{\!}l@{}r@{\;}c@{\;}l@{}r@{\;}c@{\;}l@{\;\;}r@{\;}c@{\;}l@{\!\!\!}r@{\;}c@{\;}l}
\{&q_0(y_1,y_2)+_1 q_0(z_1,z_2) &\to& q_0(y_1,y_2), &
d_1(q_0(z_1,z_2)) &\to& q_0(z_1,z_2),&
q_1(q_0(z_1,z_2))&\to& q_0(z_1,z_2),\\
& b(x)+_1 q_0(z_1,z_2) &\to&
b(x), &
 d_1(b(x)) &\to& b(b(d_1(x))),&
 q_1(b(x)) &\to& q_1(x) +_1  b(d_1(x)),\\
 & x+_1 b(y) &\to& b(x+_1 y) \;
 \}.
\end{array}$}
\]
Here $b$ plays the rôle of the successor function $s$ in \RHL, terms of the form $q_0(\cdot)$ play the rôle of the zero $0$ in \RHL,
and $+_1$, $d_1$, and $q_1$ correspond to addition, double, and square. 
Instead of the rule $x +_1 q_0(z_1,z_2) \to x$ we considered two rules for the
case where $x$ is built with $q_0$ or with $b$, respectively. The reason will become clear
later when we consider the restriction to confluent TRSs.
\Cref{R1 Lemma} is an easy consequence of our observations regarding \RHL.

\begin{lem}\label{R1 Lemma}
For any two terms $t_1,t_2$, we can rewrite the term $q_1^{p(\ell)}(bb(q_0(t_1,t_2)))$ with $R_1$ into the term
$b^{2^{2^{p(\ell)}}}(q_0(t_1,t_2))$.
\end{lem}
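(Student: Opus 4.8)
The plan is to transport the known behaviour of $\RHL$ on the term $t_n = q^n(s^2(0))$ to the relabelled system $R_1$, observing that $R_1$ is nothing but a ``renamed'' copy of $\RHL$ in which the constant $0$ is replaced by the fixed term $q_0(t_1,t_2)$, the successor $s$ is replaced by $b$, and $+,d,q$ are replaced by $+_1,d_1,q_1$. First I would make this correspondence precise: define a translation that, given an $\RHL$-term over $\{+,s,d,q,0\}$, replaces every occurrence of $0$ by the (fixed, but otherwise arbitrary) term $q_0(t_1,t_2)$ and renames the other symbols as indicated; write this translation as $\varphi$. Under $\varphi$ the three left-hand rules of $\RHL$ whose left-hand side mentions $0$ map to rules of $R_1$ (this is where the split of $x+_1 q_0(z_1,z_2)\to q_0(z_1,z_2)$ versus $b(x)+_1 q_0(z_1,z_2)\to b(x)$ matters: in $\RHL$ the rule $x+0\to x$ is a single rule, but to replay it one only ever needs the two instances where the left summand is either the translated zero $q_0(\cdot)$ or starts with $b$, and $R_1$ provides exactly those two), and the three remaining rules of $\RHL$ map to the corresponding three rules of $R_1$. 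Hence every rewrite step $u\to_{\RHL} v$ in a derivation issuing from $t_n$ is matched by a step $\varphi(u)\to_{R_1}\varphi(v)$, at the same position.

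Next I would instantiate this at $n = p(\ell)$. We have $\varphi(t_{p(\ell)}) = q_1^{p(\ell)}(b^2(q_0(t_1,t_2)))$, which is precisely the term in the statement, and $\varphi(s^{2^{2^{p(\ell)}}}(0)) = b^{2^{2^{p(\ell)}}}(q_0(t_1,t_2))$, the claimed target. By \Cref{ho:la:2ex} there is a reduction $t_{p(\ell)}\too_{\RHL} s^{2^{2^{p(\ell)}}}(0)$; applying $\varphi$ step by step to this reduction yields the desired reduction $q_1^{p(\ell)}(b^2(q_0(t_1,t_2)))\too_{R_1} b^{2^{2^{p(\ell)}}}(q_0(t_1,t_2))$, and since $R_1\subseteq R$ (where $R$ is the full TRS simulating $\M$, of which $R_1$ is the tape-generating fragment), the same holds with $\too_R$ if that phrasing is preferred; here it suffices to state it for $R_1$ as in the lemma.

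The only subtlety — and the step I expect to require the most care — is checking that $\varphi$ genuinely commutes with $\to_{\RHL}$, i.e.\ that it is a well-defined map on terms that respects contexts and substitutions and that each $\RHL$-rule instance used in the derivation from $t_{p(\ell)}$ really does land inside the six rules of $R_1$. The concern is the rule $x + 0 \to x$: because $\varphi$ sends $0$ to the compound term $q_0(t_1,t_2)$ rather than to a constant, one must verify that in any $\RHL$-derivation starting from $t_n = q^n(s^2(0))$ the only sub-terms ever appearing in the left-argument position of a $+$-redex that also has $0$ in its right argument are of the form $s(\cdots)$ or are $0$ itself — equivalently, that $\varphi$ only ever needs the two specialised $R_1$-rules $b(x)+_1 q_0(z_1,z_2)\to b(x)$ and $q_0(y_1,y_2)+_1 q_0(z_1,z_2)\to q_0(y_1,y_2)$, never a hypothetical general rule. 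This follows from an easy invariant on the shape of reachable terms (the right summand of any $+$-redex is either a variable-free sum, a successor term, or $0$), so I would state that invariant, note it is preserved by $\to_{\RHL}$, and conclude. Everything else is bookkeeping, and the $2^{2^{p(\ell)}}$ count is inherited verbatim from the $\RHL$ computation recalled in \Cref{ho:la:2ex}.
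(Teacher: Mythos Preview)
Your approach is essentially that of the paper, which merely states that the lemma is ``an easy consequence of our observations regarding $\RHL$'' without further detail; your translation $\varphi$ is the natural way to make this precise, and the paper's preceding paragraph spells out the same symbol-for-symbol correspondence you use.

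One caveat about the step you flag as requiring the most care: the invariant you sketch will not do the job as stated. In $\RHL$-derivations from $t_n$ one \emph{can} reach subterms of the form $q(0)+0$; for instance
\[
q(s(0))\to_{\RHL} q(0)+s(d(0))\to_{\RHL} q(0)+s(0)\to_{\RHL} s(q(0)+0),
\]
and then applying $x+0\to x$ with $x=q(0)$ is a legitimate $\RHL$-step. Under $\varphi$ this would require an $R_1$-rule whose left summand starts with $q_1$, which $R_1$ does not provide; no invariant on \emph{all} reachable terms can rule this out (and your stated invariant concerns the right summand, not the left). The fix is simple and does not change your plan: you do not need $\varphi$ to commute with \emph{every} $\RHL$-derivation from $t_n$, only with \emph{one}. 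If you pick the innermost (call-by-value) reduction, the left argument of every $+$-redex is already a numeral $s^m(0)$ by the time $x+0\to x$ fires, so only the two specialised $R_1$-rules are ever needed and $\varphi$ transports that particular derivation step by step into $R_1$.
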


Next, we define a copy of \RHL that allows us to create a tape segment with $2^{2^{p(n)}}$ blanks to the right of the input word:
\[
\begin{array}{l@{}r@{\;}c@{\;}l@{\quad}r@{\;}c@{\;}l@{\quad}r@{\;}c@{\;}l}
R_2 :=
\{&\# +_2 \# &\to& \#,&
d_2(\#) &\to& \#,&
 q_2(\#) &\to& \#, \\
 &b(y) +_2 \# &\to& b(y),&
 d_2(b(x)) &\to& b(b(d_2(x))),&
 q_2(b(x)) &\to& q_2(x) +_2  b(d_2(x)),\\
& x+_2 b(y) &\to& b(x+_2 y) \;
 \}. &
\end{array}
\]

\begin{lem}
The term $q_2^{p(\ell)}(bb(\#))$ rewrites with $R_2$ to the term
$b^{2^{2^{p(\ell)}}}(\#)$.
\end{lem}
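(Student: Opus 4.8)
The plan is to derive this lemma from the behavior of $\RHL$ recorded in \Cref{ho:la:2ex}, in exactly the way that \Cref{R1 Lemma} was obtained. The starting point is the observation that $R_2$ is, up to one case split, a renamed copy of $\RHL$: under the renaming $\rho$ that sends the constant $0$ to $\#$, the unary symbol $s$ to $b$, and the symbols $+, d, q$ to $+_2, d_2, q_2$, every rule of $\RHL$ becomes a rule of $R_2$, the sole subtlety being that the single rule $x + 0 \to x$ is represented in $R_2$ by the two rules $\# +_2 \# \to \#$ and $b(y) +_2 \# \to b(y)$, depending on whether the (renamed) left argument is the constant $\#$ or a term headed by $b$.

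Concretely, I would first fix a convenient reduction sequence witnessing $q^n(s^2(0)) \too_{\RHL} s^{2^{2^n}}(0)$. A $+$-subterm is only ever created by the rule $q(s(x)) \to q(x) + s(d(x))$, so I would reduce ``eagerly'': whenever such a subterm $u + v$ arises, first normalize $u$ and $v$ to numerals $s^a(0)$ and $s^c(0)$, then apply $x + s(y) \to s(x+y)$ exactly $c$ times, and finish with a single application of $x + 0 \to x$ whose left argument is the numeral $s^a(0)$. In this sequence, $x + 0 \to x$ is therefore applied only with left argument $0$ (when $a = 0$) or a term headed by $s$ (when $a \geq 1$). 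Applying $\rho$ turns each step of this sequence into an $R_2$-step: the critical steps using $x + 0 \to x$ become applications of $\# +_2 \# \to \#$ or of $b(y) +_2 \# \to b(y)$ (with $y := b^{a-1}(\#)$), and all remaining steps translate verbatim. Hence $q_2^n(bb(\#)) = \rho(q^n(s^2(0))) \too_{R_2} \rho(s^{2^{2^n}}(0)) = b^{2^{2^n}}(\#)$, and taking $n := p(\ell)$ yields the claim.

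The only point requiring genuine care is the construction of the eager reduction sequence in the second paragraph, i.e., checking that one can always normalize the left argument of a $+$ to a numeral before performing the final $x + 0 \to x$ step, so that no $q_2$-headed (nor $+_2$- or $d_2$-headed) term is ever fed to the $+_2$-with-$\#$ rules --- exactly the situation $R_2$ does not cover, which is also why $R_1$, $R_2$, and $R_3$ split this rule according to the head of the left argument. This is unproblematic since $\RHL$ is terminating and maps numerals to numerals under $q, d, +$. An equally valid alternative, which sidesteps any reasoning about reduction strategies, is to prove the lemma directly by a short chain of inductions in $R_2$: $b^a(\#) +_2 b^c(\#) \too_{R_2} b^{a+c}(\#)$ (with the base case $c = 0$ split into $a = 0$ and $a \geq 1$, matching the two $+_2$-rules), then $d_2(b^a(\#)) \too_{R_2} b^{2a}(\#)$ and $q_2(b^a(\#)) \too_{R_2} b^{a^2}(\#)$, and finally $q_2^j(b^2(\#)) \too_{R_2} b^{2^{2^j}}(\#)$ by induction on $j$. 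Either route reuses the arithmetic already established for $\RHL$ and $R_1$ and introduces nothing essentially new.
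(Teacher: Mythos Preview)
Your proposal is correct and follows the same approach as the paper, which gives no explicit proof and simply treats the lemma as an easy consequence of the observations about $\RHL$ (just as for \Cref{R1 Lemma}). Your additional care about the split of the rule $x + 0 \to x$ into the two $+_2$-rules is warranted and more rigorous than what the paper provides, but the underlying idea---transporting the $\RHL$ reduction along the renaming $0 \mapsto \#$, $s \mapsto b$, $+,d,q \mapsto +_2,d_2,q_2$---is exactly what the paper has in mind.
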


The double-exponentially large counter can be generated by the following copy of \RHL:
\[
\begin{array}{l@{}r@{\;}c@{\;}l@{}r@{\;}c@{\;}l@{\quad}r@{\;}c@{\;}l}
R_3 :=
\{& \# +_3 \# &\to&  \#, &
d_3(\#) &\to& \#,&
q_3(\#) &\to& \#, \\
&f(y) +_3 \# &\to& f(y), &
d_3(f(x)) &\to& f(f(d_3(x))),&
q_3(f(x)) &\to& q_3(x) +_3  f(d_3(x)),\\
&x+_3 f(y) &\to& f(x+_3 y)
\; \}. &
\end{array}
\]

\begin{lem}
The term $q_3^{p(\ell)}(ff(\#))$ rewrites with $R_3$ to the term
$f^{2^{2^{p(\ell)}}}(\#)$.
\end{lem}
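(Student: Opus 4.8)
The plan is to observe that $R_3$ is nothing more than yet another renamed copy of the Hofbauer--Lautemann system $\RHL$ from \Cref{ho:la:2ex}, obtained by the substitution $s\mapsto f$, $0\mapsto\#$, $+\mapsto+_3$, $d\mapsto d_3$, $q\mapsto q_3$, and then to transfer the reduction $t_n := q^n(s^2(0))\too_{\RHL} s^{2^{2^n}}(0)$ verbatim. So the first step is to make precise that renaming function symbols is a bijection on terms that commutes with rewriting: if $\rho$ is the (arity-preserving) renaming sending the symbols of $\RHL$ to those of $R_3$, then $\rho$ induces a bijection between $R_3$-rewrite sequences and $\RHL$-rewrite sequences, simply because $R_3 = \{\rho(l)\to\rho(r)\mid l\to r\in\RHL\}$ and $\rho$ is compatible with substitution and with subterm replacement. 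Under this bijection, $\rho(s^2(0)) = f^2(\#)$, $\rho(q^n(s^2(0))) = q_3^n(f^2(\#))$, and $\rho(s^{2^{2^n}}(0)) = f^{2^{2^n}}(\#)$.

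Next I would recall (or re-derive in one line) the key fact about $\RHL$ already stated in \Cref{ho:la:2ex}: $q^n(s^2(0)) \too_{\RHL} s^{2^{2^n}}(0)$. The underlying reason is that the three groups of rules compute, respectively, addition ($x+s^k(0)\too s^k(x)$ and $x+0\to x$, so $s^a(0)+s^b(0)\too s^{a+b}(0)$), doubling ($d(s^k(0))\too s^{2k}(0)$, by induction on $k$ using $d(s(x))\to s(s(d(x)))$ and $d(0)\to 0$), and squaring ($q(s^k(0))\too s^{k^2}(0)$, by induction on $k$ using $q(s(x))\to q(x)+s(d(x))$: indeed $q(s^{k+1}(0))\to q(s^k(0))+s(d(s^k(0)))\too s^{k^2}(0)+s^{2k+1}(0)\too s^{k^2+2k+1}(0)=s^{(k+1)^2}(0)$). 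Starting from $s^2(0)$ and applying $q$ exactly $n$ times, each application squares the exponent, so after $n$ steps the exponent is $2^{2^n}$, giving $q^n(s^2(0))\too_{\RHL}s^{2^{2^n}}(0)$. Applying $\rho$ to this whole reduction sequence yields $q_3^{p(\ell)}(f^2(\#)) \too_{R_3} f^{2^{2^{p(\ell)}}}(\#)$, which is exactly the claim once one writes $f^2(\#) = ff(\#)$ and $q_3^{p(\ell)} = q_3^{p(\ell)}$.

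I do not expect any real obstacle here; the only point requiring a modicum of care is the renaming argument, i.e. spelling out that $R_3$ really is a symbol-for-symbol copy of $\RHL$ (all arities match: $+_3,+$ binary; $d_3,d,q_3,q$ unary; $f,s$ unary; $\#,0$ constants) and that rewriting is preserved under such a renaming. Everything else is an immediate consequence of the already-recorded behaviour of $\RHL$ in \Cref{ho:la:2ex}, so the proof is short: state the renaming $\rho$, note $R_3=\rho(\RHL)$, invoke \Cref{ho:la:2ex} for $q^{p(\ell)}(s^2(0))\too_{\RHL}s^{2^{2^{p(\ell)}}}(0)$, and push this sequence through $\rho$. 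The analogous lemmas for $R_1$ (\Cref{R1 Lemma}) and $R_2$ are proved the same way, except that there the base constant $0$ of $\RHL$ is played by the pattern $q_0(z_1,z_2)$ (respectively $\#$), and one must additionally note that the extra case split in $R_1$ on whether the left argument of $+_1$ is built with $q_0$ or with $b$ does not affect the reduction that computes the blanks.
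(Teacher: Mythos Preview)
Your approach matches the paper's (which gives no explicit proof, treating the lemma as an easy consequence of the observations about $\RHL$). There is, however, one small inaccuracy worth fixing: $R_3$ is \emph{not} literally a symbol-for-symbol renaming of $\RHL$. The single rule $x+0\to x$ of $\RHL$ is replaced in $R_3$ by the two rules $\#+_3\#\to\#$ and $f(y)+_3\#\to f(y)$; the same split occurs in $R_2$, not only in $R_1$ as you suggest at the end. So the equation $R_3=\{\rho(l)\to\rho(r)\mid l\to r\in\RHL\}$ is false as stated.

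This does not break your argument. In the concrete $\RHL$-reduction $q^n(s^2(0))\too_{\RHL}s^{2^{2^n}}(0)$ that you spell out, every application of $x+0\to x$ has $x$ instantiated by a numeral $s^k(0)$ for some $k\geq 0$, and the image under $\rho$ of such a step is covered by one of the two split rules of $R_3$. You should therefore rephrase the renaming claim: rather than asserting $R_3=\rho(\RHL)$, say that every $\RHL$-rewrite step on terms built from $0,s,+,d,q$ with first argument of $+$ a numeral is mapped by $\rho$ to an $R_3$-step, which is all you need.
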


We now add to these three TRSs the system $\RM$, which can simulate \M and then make the term small
in case the accepting state \acc is reached. For the following lemma we assume, as before, that \acc is the only
accepting state. In addition, we assume without loss of generality that the initial state $q_0$ is not reachable, i.e., as soon as the machine has made
a transition, it is in a state different from $q_0$ and cannot reach state $q_0$ again.

\begin{lem}\label{hardness:2exp:tech:lem}
The term $t(w)$ can be rewritten with $\RM\cup R_1 \cup R_2 \cup R_3$ to a term of size $1$ iff \M accepts the word $w$.
\end{lem}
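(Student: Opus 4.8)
The plan is to prove the ``iff'' by combining the behaviour of the three auxiliary systems $R_1, R_2, R_3$ (Lemmas~\ref{R1 Lemma} and its two successors) with the simulation argument already used for \Cref{hardness:tech:lem}. The key observation is that the rules of $R_1, R_2, R_3$ only fire on the ``construction gadgets'' $q_1^{p(\ell)}(bb(\cdot))$, $q_2^{p(\ell)}(bb(\#))$, $q_3^{p(\ell)}(ff(\#))$ and their reducts, while the rules of $\RM$ only fire on configuration terms built from state symbols, tape symbols, and the counter symbol $f$ over $\#$. Since these symbol sets are essentially disjoint (the state symbols, and in particular $q_0$, do not occur in $R_1, R_2, R_3$, and the auxiliary symbols $+_i, d_i, q_i$ do not occur in $\RM$), the interaction between the four systems is limited to the ``hand-off'' points, and we can schedule the rewrite steps in a convenient order.

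Concretely, for the forward direction (\M accepts $w$), I would first use Lemmas~\ref{R1 Lemma}, and the lemmas for $R_2$ and $R_3$ to rewrite $t(w)$ into
\[
b^{2^{2^{p(\ell)}}}(q_0(a_1\ldots a_\ell\, b^{2^{2^{p(\ell)}}}(\#),\ f^{2^{2^{p(\ell)}}}(\#))),
\]
which is exactly a configuration term of the shape required by \Cref{hardness:tech:lem}, with a tape segment of $2^{2^{p(\ell)}}$ blanks on each side of the input and a counter holding $2^{2^{p(\ell)}}$. Here I use the assumptions that \M is $2^{2^{p(\ell)}}$-time bounded, so this much tape and counter space suffices, and that $q_0$ is not re-reachable, so none of the $R_1$-rules whose left-hand side mentions $q_0$ can fire again once the simulation has started. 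Then, exactly as in the proof of \Cref{hardness:tech:lem}, an accepting run of \M is simulated by $\RM$ until \acc appears, yielding a term $u(\acc(t,t'))$, which the two ``clean-up'' rules of $\RM$ collapse to $\#$, a term of size $1$.

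For the converse (\M does not accept $w$), I would argue that no rewrite sequence from $t(w)$ with $\RM\cup R_1\cup R_2\cup R_3$ can ever reach a term of size $1$. The only rules that can decrease the size down to $1$ are the $\RM$-rule $\acc(x,y)\to\#$ (together with $a(\acc(x,y))\to\acc(x,y)$), and these require the state symbol \acc to appear. As in \Cref{hardness:tech:lem}, \acc can only be produced if $\RM$ simulates a run of \M reaching \acc; but since \M does not accept $w$, no such run exists, so \acc never appears, the size-1-producing rules never apply, and every reachable term retains a state symbol $q\neq\acc$ as a proper subterm, hence has size $>1$. The one point requiring care is that the $R_i$-rules, acting in any interleaved order, cannot ``accidentally'' create \acc or shrink a configuration to size $1$: this follows because the right-hand sides of $R_1, R_2, R_3$ contain no state symbols at all (only $b$, $f$, $\#$, $q_0$ as an inert wrapper, and the $+_i, d_i, q_i$), so they cannot interfere with the state component of the term. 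I would also note that divergence or getting ``stuck'' before the simulation completes is harmless for this direction, since we only need the negative claim.

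The main obstacle I expect is not any single deep step but the bookkeeping of the interleaving: making precise that the four subsystems are ``orthogonal enough'' that one may always postpone the $\RM$-steps until all the gadget-expansion steps of $R_1, R_2, R_3$ have been performed (for the forward direction), and that no clever interleaving lets the adversary reach size $1$ without an accepting run (for the backward direction). This is where the two stated structural assumptions---\acc is the unique accepting state and $q_0$ is not re-reachable---are used, and where one has to check that the slightly redundant two-rule formulation of $x +_1 q_0(z_1,z_2)\to x$ does not break the argument; but all of this is routine given \Cref{hardness:tech:lem} and the three gadget lemmas, so the proof is essentially an assembly of already-established pieces.
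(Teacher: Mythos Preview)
Your proposal is correct and follows essentially the same approach as the paper's proof: first expand the three gadgets via $R_1,R_2,R_3$ and then simulate \M for the forward direction, and for the backward direction argue that \acc can only arise through a faithful simulation of an accepting run of \M, which interleaving with the $R_i$ cannot fake. Two minor slips worth fixing: $q_0$ \emph{does} occur in $R_1$ (on both sides of several rules, playing the r\^ole of the ``zero'' in the Hofbauer--Lautemann encoding), so your disjointness claim and the statement that ``the right-hand sides of $R_1,R_2,R_3$ contain no state symbols at all'' are literally false, though the point you need---that no $R_i$-rule can \emph{create} \acc---still holds; and the non-re-reachability of $q_0$ is not needed in the forward direction, since there you choose the rewrite order and simply perform all $R_1,R_2,R_3$ steps before any $\RM$ step.
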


\begin{proof}
First, assume that \M accepts the word $w$. Then there is a run of \M on input $w$ such that the accepting state \acc is reached.
We can simulate this run, starting with $t(w)$ by first using $R_1 \cup R_2 \cup R_3$ to generate the term
\[
b^{2^{2^{p(\ell)}}}(q_0(a_1\ldots a_\ell b^{2^{2^{p(\ell)}}}(\#),f^{2^{2^{p(\ell)}}}(\#))).
\]
Since the tape and counter generated this way are large enough, $\RM$ can then simulate the accepting run of \M, and the 
last two rules of $\RM$ can be used to generate the term $\#$, which has size $1$.

For the other direction, we first note that a term of size $1$ can only be reached from $t(w)$ using $\RM\cup R_1 \cup R_2 \cup R_3$
if a term is reached that contains \acc. This function symbol can only be generated by performing transitions of \M, starting with the
input $w$. In fact, while the simulation of \M can start before the system $R_1 \cup R_2 \cup R_3$ has generated
the tape and the counter in full size, rules of \RM can only be applied if the TM locally sees a legal tape configuration.
This means that blanks generated by $R_1$ and $R_2$ can be used even if the application of these systems has not terminated yet.
But if one of the auxiliary symbols employed by these systems is encountered, then no rule simulating a transition of \M is applicable.
These systems cannot generate tape symbols other than blanks, and these blanks are also available to \M in its run.
Thus, $\RM\cup R_1 \cup R_2 \cup R_3$ can only generate a term containing \acc if there is a run of \M on input $w$ that reaches~\acc.
\end{proof}

Thus, by \Cref{hardness:2exp:tech:lem} there is a polynomial-time reduction of the word problem for \M to the small term reachability problem for $\RM\cup R_1 \cup R_2 \cup R_3$
with bound $n = 1$. It remains to construct an appropriate polynomial order for this TRS.

\begin{lem}\label{po:constr:lem}
Termination of $\RM\cup R_1 \cup R_2 \cup R_3$ can be shown 
using a \scomp polynomial order.
\end{lem}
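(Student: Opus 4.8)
The plan is to build a single polynomial interpretation that simultaneously orients all rules of $\RM$, $R_1$, $R_2$, and $R_3$, and then to verify that it is \scomp by checking that every constant symbol receives a value $\geq 2$. I would reuse, up to renaming, the polynomial order from \Cref{Ex:PolynomialOrders} for the ``arithmetic'' parts: for each $i\in\{1,2,3\}$ assign $P_{+_i} = x + 2y + 1$, $P_{d_i} = 3x + 1$, and $P_{q_i} = 3x^2 + 3x + 1$, mimicking the assignment to $+,d,q$ in \RHL. The blank symbol $b$ plays the rôle of $s$, so I would try $P_b = x + 3$. The constant $\#$ plays the rôle of $0$, so I would set $P_\# = 4 \geq 2$. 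For the symbols $q_1$, $q_2$, $q_3$ occurring as ``outer'' operators in $t(w)$ I would use the same square polynomial $3x^2+3x+1$ (these are the same symbols as the $q_i$ of the $R_i$). Finally I would choose polynomials for the TM symbols of $\RM$: tape symbols $a$ (unary) get a polynomial of the shape $x + k_a$ for a suitable constant $k_a$, states $q$ (binary) get a polynomial $c_q\cdot x + c_q'\cdot y + e_q$, and $f$ (the counter symbol, unary) gets $x + k_f$; the auxiliary constant of $\RM$ (there called $\#$, already fixed) and the symbol $f$ of \Cref{Ex:PolynomialOrders} coincide with what we need here.

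The key verification steps, in order, are: (1) check that $\RHL$-style rules are strictly oriented — this is exactly the computation already carried out in \Cref{Ex:PolynomialOrders} (e.g.\ $P_{q(s(x))} = 3x^2+21x+37 > 3x^2+9x+10 = P_{q(x)+s(d(x))}$ for all values $\geq 1$), and by the substitution homomorphism it carries over verbatim to $R_2$ and $R_3$ and, after replacing ``$s(\cdot)$'' by ``$b(\cdot)$'' and ``$0$'' by ``$q_0(z_1,z_2)$ or $b(x)$'', to $R_1$ as well; (2) check the rules of $\RM$ that simulate transitions: a right-move rule $q(a(x),f(y))\to a'(q'(x,y))$ requires $P_q(P_a(x)+\text{stuff}, P_f(y)) > P_{a'}(P_{q'}(x,y))$, and a left-move rule $c(q(a(x),f(y)))\to q'(ca'(x),y)$ likewise; both can be forced by making the state polynomials grow fast enough in the first argument relative to tape-symbol polynomials — I would pick the coefficients of state symbols large compared to the additive constants $k_a$; (3) check the two ``collapsing'' rules $a(\acc(x,y))\to\acc(x,y)$ (strict since $P_a$ has positive constant term, so $P_a(z) > z \geq P_{\acc(x,y)}$'s argument copy... more precisely $P_{a(\acc(x,y))} > P_{\acc(x,y)}$ because $P_a(w) > w$) and $\acc(x,y)\to\#$ (strict since $P_{\acc}$ depends on both indeterminates, hence evaluates to something $> 4 = P_\#$ once we ensure, say, $P_\acc = x+y+c$ with the constant chosen so that this holds for all argument values $\geq 1$; any $c\geq 3$ works since $x+y+c \geq 1+1+3 = 5 > 4$); (4) finally observe \scompa: the only constant symbol across all four TRSs is $\#$, and $P_\# = 4 \geq 2$.

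The remaining subtlety is the cross-rule interaction rules of $R_1$ such as $q_0(y_1,y_2) +_1 q_0(z_1,z_2) \to q_0(y_1,y_2)$, $d_1(q_0(z_1,z_2)) \to q_0(z_1,z_2)$, and $q_1(q_0(z_1,z_2)) \to q_0(z_1,z_2)$, together with $b(x)+_1 q_0(z_1,z_2)\to b(x)$. These are exactly the ``base-case'' rules of the arithmetic, and each is strict for the chosen interpretation: e.g.\ $P_{q_0(y_1,y_2)+_1 q_0(z_1,z_2)} = P_{q_0}(y_1,y_2) + 2P_{q_0}(z_1,z_2) + 1 > P_{q_0}(y_1,y_2)$, and $P_{d_1(q_0(z_1,z_2))} = 3P_{q_0}(z_1,z_2)+1 > P_{q_0}(z_1,z_2)$, and $P_{q_1(q_0(z_1,z_2))} = 3P_{q_0}(z_1,z_2)^2 + 3P_{q_0}(z_1,z_2) + 1 > P_{q_0}(z_1,z_2)$ since $P_{q_0}$ evaluates to a positive integer. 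The variable condition (each $\succ$-comparison also requires $|s|_x \geq |t|_x$, which for polynomial orders is subsumed by dependence on all indeterminates) is automatic.

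The step I expect to be the main obstacle is step (2): one must simultaneously orient the right-move and left-move transition rules of $\RM$ for \emph{every} transition, while keeping the interpretation compatible with the collapsing rules and with $R_1$; the danger is circular constraints on the state and tape-symbol coefficients. I would resolve this by a layered choice — first fix all arithmetic and blank/counter polynomials, then fix tape-symbol polynomials $P_a = x + 2$ uniformly, then choose state polynomials of the form $P_q = 4x + 2y + 2$ uniformly for all non-accepting states and $P_\acc = x+y+3$, and finally check that with these uniform values the transition inequalities reduce to fixed numeric inequalities independent of which transition is considered (because the rules have a fixed shape), so finitely many checks suffice. Since $\RM$ is fixed, a fixed such interpretation exists, which is all we need.
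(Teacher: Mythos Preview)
Your approach is essentially the one taken in the paper: fix the \RHL-style interpretation $P_{+_i}=x+2y+1$, $P_{d_i}=3x+1$, $P_{q_i}=3x^2+3x+1$, $P_\#=4$, and extend it to the TM symbols by linear polynomials; then verify all rules. The paper's verification of $R_1$, $R_2$, $R_3$ and of $\acc(x,y)\to\#$ and $a(\acc(x,y))\to\acc(x,y)$ proceeds exactly as you sketch.

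Two concrete issues with your specific choices. First, you set $P_b=x+3$ for the blank (to make the successor rules of $R_1$, $R_2$ go through) but later propose ``uniformly $P_a=x+2$'' for tape symbols; since $b$ \emph{is} a tape symbol these are inconsistent, and the transition rules must be checked with whatever $P_b$ you actually commit to. Second, the asymmetric state polynomial $P_q=4x+2y+2$ is precisely what creates the ``circular constraints'' you worry about in step~(2): for a left-move rule $c(q(a(x),f(y)))\to q'(ca'(x),y)$ the factor $4$ on the right-hand side multiplies the \emph{nested} tape-symbol constants, forcing you to inflate $P_f$'s constant term, and the required bound on $k_f$ then depends on whether $c$ is $b$ (constant $3$) or another tape symbol (constant $2$). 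The paper sidesteps all of this by taking the simplest possible uniform choice: $P_a=x+3$ for \emph{every} tape symbol (including $b$), $P_f=x+3$, and $P_q=x+y+3$ for \emph{every} state (including $\acc$). With these, each transition rule reduces to a single numeric inequality ($x+y+9>x+y+6$ for right moves, $x+y+12>x+y+9$ for left moves) independent of which symbols are involved, so no layering or case analysis is needed.
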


\begin{proof}
Termination of $\RM\cup R_1 \cup R_2 \cup R_3$ can be shown using the following \scomp
polynomial interpretation of the function symbols, which is similar to the interpretation
in \Cref{Ex:PolynomialOrders}:\footnote{%
The definition of this interpretation slightly differs from the one in \cite{FSCD2024} to
ensure that the interpretation and also its extension that deals with confluent systems
(see the proof of \Cref{conf:hard:cor} below) are \scomp.
}
\begin{itemize}
\item
$a(x)$ is mapped to  $x + 3$, for all tape symbols $a$ of the TM (where $a$ can also be
  the blank symbol $b$),
\item
$\#$ is mapped to $4$,
\item
$q(x,y)$ is mapped to $x + y + 3$, for all states $q$ of the TM, in particular also for $q_0$ and $\acc$,
\item
$f(x)$ is mapped to $x + 3$,
\item
$+_1(x,y)$, $+_2(x,y)$, and $+_3(x,y)$ are mapped to  $x + 2y + 1$,
\item
$d_1(x)$, $d_2(x)$, and $d_3(x)$ are mapped to  $3x+1$,
\item
$q_1(x)$, $q_2(x)$, and $q_3(x)$ are mapped to $3 x^2+3x+1$.
\end{itemize}
Obviously, this interpretation satisfies the requirements for \scompa since $\#$ is mapped
to $4 \geq 2$.
 
It remains to show that the polynomial order $\succ$ induced by this polynomial interpretation satisfies $g\succ d$ for all rules 
$g\to d$ of $\RM\cup R_1 \cup R_2 \cup R_3$. 
First, we consider $\RM$:
\begin{itemize} 
\item
for the rule $\acc(x,y) \to \#$, the  left-hand side is mapped to $x + y + 3$, and the right-hand side to $4$, which is smaller than
$x + y + 3$ for all instantiations of $x, y$ with numbers $> 0$,
\item
for rules of the form $a(\acc(x,y)) \to \acc(x,y)$, the  left-hand side is mapped to $x + y + 6$, and the right-hand side to $x + y + 3$,
\item
for all rules of $\RM$ of the form $q(a(x),f(y))\rightarrow a'(q'(x,y))$, 
the left-hand side is mapped to $(x+3) + (y+3) + 3 = x+y+9$, and 
the right-hand side is mapped to $(x+y+3)+3 = x+y+6$,
\item
for all rules of $\RM$ of the form $c(q(a(x),f(y)))\rightarrow q'(ca'(x),y)$,
the left-hand side is mapped to $((x+3) + (y+3) + 3)+3 = x+y+12$, and
the right-hand side is mapped to $((x+3)+3)+y + 3 = x+y+9$.
\end{itemize}
Next, we consider $R_1$:
\begin{itemize}
\item
for the rule $q_0(y_1,y_2) +_1 q_0(z_1,z_2)\to q_0(y_1,y_2)$ of $R_1$, the left-hand side
is mapped to $y_1 + y_2 + 3 + 2 (z_1 + z_2 + 3)+1 = y_1 + y_2 +2z_1 + 2z_2+10$,
and the right-hand side to $y_1 + y_2 + 3$,
\item
for the rule $b(y) +_1 q_0(z_1,z_2)\to b(y)$ of $R_1$, the left-hand side
is mapped to $y + 3 + 2 (z_1 + z_2 + 3)+1 = y +2z_1 + 2z_2+10$,
and the right-hand side to $y+3$,
\item
for the rule $x+_1 b(y) \to b(x+_1 y)$ of $R_1$, the left-hand side is mapped to $x + 2(y + 3)+1=x+2y+7$, and the right-hand side to 
$(x + 2y+1)+3 = x+2y+4$,
\item
for the rule $d_1(q_0(z_1,z_2)) \to q_0(z_1,z_2)$ of $R_1$, the left-hand side is mapped to $3(z_1 + z_2 + 3)+1 = 3z_1 + 3z_2 +10$, 
and the right-hand side to $z_1 + z_2 + 3$,
\item
for the rule $d_1(b(x)) \to b(b(d_1(x)))$ of $R_1$, the left-hand side is mapped to $3(x + 3)+1 = 3x+10$, 
and the right-hand side to $(3x+1+3)+3 = 3x+7$,
\item
for the rule $q_1(q_0(z_1,z_2)) \to q_0(z_1,z_2)$ of $R_1$, the left-hand side is mapped to $3(z_1 + z_2 + 3)^2 + 3(z_1 + z_2 + 3) + 1$,
and the right-hand side to $z_1 + z_2 + 3$,
\item
for the rule $q_1(b(x)) \to q_1(x) +_1  b(d_1(x))$ of $R_1$, the left-hand side is mapped to 
$3(x+3)^2 + 3(x+3) + 1= 3x^2 + 21 x + 37$,
and the right-hand side to $3 x^2 + 3x + 1 + 2(3x+1+3)+1 =   3 x^2 + 9 x + 10$, as in \Cref{Ex:PolynomialOrders}.
\end{itemize}
The rules of $R_2$ and $R_3$ can be treated in a similar way.
\end{proof}

If we use a TM deciding an N2ExpTime-complete problem in this reduction, then the resulting TRS has an N2ExpTime-complete
small term reachability problem. Note that the reduction can be done in polynomial
time
since the size of the term $t(w)$ is polynomial
in the length $\ell$ of the input word $w$, and the number $n = 1$ has constant size both for unary and binary encoding of numbers.
Combining this observation with the upper bound shown before, we obtain the following
complexity result for the small term reachability problem for the class of TRSs considered here.

\begin{thm}\label{completeness:main:thm:poly}
The small term reachability problem is in N2ExpTime for
TRSs whose termination can be shown with a \scomp polynomial order, 
and there are such TRSs for which the small term reachability problem is N2ExpTime-complete. These results hold both for unary and binary encoding
of numbers.
\end{thm}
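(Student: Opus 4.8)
The plan is to obtain the theorem by combining the upper bound that has just been established with a hardness argument assembled from the preceding lemmas. For the N2ExpTime upper bound there is nothing new to do: for a fixed TRS $R$ whose termination is witnessed by a \scomp polynomial order, one guesses a reduction sequence $s = s_0 \to_R s_1 \to_R \cdots \to_R s_k$, where by the double-exponential bound on derivation lengths from \cite{DBLP:conf/rta/HofbauerL89} the number $k$ is at most $2^{2^{c|s|}}$ for a constant $c$ determined by the (fixed) polynomial order, and by \Cref{size:bound:lem} each $|s_i|$ is also at most double-exponential in $|s|$; one then checks the sequence and tests $|s_k| \le n$, all of which is doable in nondeterministic double-exponential time regardless of whether $n$ is encoded in unary or binary (the test $|s_k| \le n$ costs time linear in $|s_k|$ plus the length of the encoding of $n$, by the argument in the proof of \Cref{lenred:upper:bound}).

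For the matching lower bound I would fix an N2ExpTime-complete problem $\Pi$ and a nondeterministic TM $\M$ that decides it within $2^{2^{p(\ell)}}$ steps on inputs of length $\ell$, and reduce $\Pi$ to the small term reachability problem for the fixed TRS $\RM \cup R_1 \cup R_2 \cup R_3$. The construction follows the blueprint already laid out above: $R_1$, $R_2$, $R_3$ are renamed copies of the Hofbauer--Lautemann system $\RHL$ of \Cref{ho:la:2ex}, whose iterated-squaring behaviour turns $q^n(s^2(0))$ into a term of size $2^{2^n}$; here they are used, starting from the term $t(w)$ displayed above, to inflate the left part of the tape, the right part of the tape, and the unary $f$-counter to length $2^{2^{p(\ell)}}$, after which $\RM$ faithfully simulates $\M$ (consuming one counter unit per step) and collapses the whole term to $\#$ exactly when the accepting state $\acc$ is reached. \Cref{hardness:2exp:tech:lem} gives the correctness of this reduction --- $t(w)$ rewrites to a term of size $1$ iff $\M$ accepts $w$ --- and \Cref{po:constr:lem} exhibits a concrete \scomp polynomial interpretation orienting every rule of $\RM \cup R_1 \cup R_2 \cup R_3$, so this TRS is of the required kind. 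Since $|t(w)|$ is polynomial in $\ell$ and the size bound used is the constant $n = 1$ (whose encoding length is constant under both conventions), the map $w \mapsto (t(w), 1)$ is a polynomial-time many-one reduction, proving N2ExpTime-hardness; together with the upper bound this yields N2ExpTime-completeness, for both encodings of $n$.

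The genuinely technical content all lies in results that have already been proved: the double-exponential bound on the sizes of reachable terms (\Cref{size:bound:lem}), which needed the \scompa hypothesis together with the nu-size and depth estimates, and the explicit polynomial interpretation of \Cref{po:constr:lem}, whose verification is the delicate part because the squaring rules $q_i(b(x)) \to q_i(x) +_i b(d_i(x))$ and the interaction of the $+_i$ symbols with the ``zero'' terms $q_0(\cdot)$ leave little margin. Given those, the theorem is essentially a bookkeeping step: check that the reduction is polynomial-time, that the underlying TM problem lies in the right class, and that the unary/binary distinction is irrelevant because $n = 1$. I expect that, in a from-scratch proof, the real obstacle would be precisely finding the \scomp interpretation of \Cref{po:constr:lem} and checking that it strictly decreases on every rule of the combined system.
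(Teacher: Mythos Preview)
Your proposal is correct and matches the paper's own argument essentially step for step: the theorem is obtained by combining the already-proved N2ExpTime upper bound (via the double-exponential bounds on derivation length and term size from \Cref{size:bound:lem}) with the hardness reduction supplied by \Cref{hardness:2exp:tech:lem} and \Cref{po:constr:lem}, noting that $t(w)$ has polynomial size and $n=1$ has constant encoding length under both conventions. Your remarks about where the real technical work lies are also accurate.
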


In the setting considered in this section, restricting the attention to confluent TRSs
does not reduce the complexity. Regarding the upper bound, the argument used in the proof of \Cref{confluence:prop} does not apply
since it is no longer the case that normal forms are of smallest size. Thus, one cannot reduce the complexity from N2ExpTime
to 2ExpTime by only looking at a single rewrite sequence that ends in a normal form. However, our N2ExpTime-hardness
proof does not directly work for confluent TRSs whose termination can be shown with a polynomial order. The reason is
that, for a given nondeterministic Turing machine \M, the rewrite system $\RM\cup R_1 \cup R_2 \cup R_3$ need not be confluent. 
In fact, for a given input word,
there may be terminating runs of the TM that reach the accepting state $\acc$, but also ones that do not reach this state.
Using the former runs, our rewrite system can then generate the term $\#$, whereas this is not possible if we use one of the
latter runs.

We can, however, modify the system $R^\mathcal{M} \cup R_1 \cup R_2 \cup R_3$ such that it becomes
confluent. To this end, we introduce two new function symbols $\#_1$ and $\#_0$ of arity 1 and 0, respectively. 
Moreover, we add the following rules $R_c$:
\[ \begin{array}{rcll}
g(x_1, \ldots, x_n) &\to&  \#_1(\#_0) & \text{for all function symbols $g$ of arity $> 0$
  except $\#_1$},\\
\#_1(\#_1(\#_0)) &\to&  \#_1(\#_0),\\
  \# &\to&  \#_1(\#_0).
\end{array}\]
Clearly,  $R^\mathcal{M} \cup R_1 \cup R_2 \cup R_3 \cup R_c$
is confluent, because any 
term that is not in normal form (i.e., any term
except variables, $\#_0$,  $\#_1(\#_0)$, and terms of the form $\#_1(x)$ for variables $x$)
has the only normal form 
$\#_1(\#_0)$ of size two. (This is the reason why we could not use
a rule like $x +_1 q_0(z_1,z_2) \to x$ in $R_1$, because then
$x +_1 q_0(z_1,z_2)$ would have the two normal forms $x$ and $\#_1(\#_0)$.)
 However,
the term $\#$ of size one is still only reachable from $t(w)$ if the final state of the TM is reached
by a simulation of an accepting computation of $\mathcal{M}$. We extend
the polynomial interpretation in the proof of \Cref{po:constr:lem} as follows:
\begin{itemize}
\item $\#_1(x)$ is mapped to $x + 1$,
\item $\#_0$ is mapped to $2$.
\end{itemize}
Then the polynomial order induced by this polynomial interpretation is \scomp and also orients the rules of $R_c$ from left to right, i.e., termination of the
resulting system can still be shown using a \scomp polynomial order:
\begin{itemize}
\item
  For the additional rules $\#_1(\#_1(\#_0)) \to  \#_1(\#_0)$ and $\# \to \#_1(\#_0)$, the left-hand side is mapped to $4$ and
  the right-hand side to $3$,
\item
  for the rules of the form $g(x_1, \ldots, x_n) \to  \#_1(\#_0)$, the function symbol $g$ can be
  a tape symbol $a$, a state $q$, the symbol $f$, an addition symbol $+_i$, a duplication symbol $d_i$, or a squaring symbol $q_i$.
  The right-hand side of such a rule is always mapped to $3$. For tape symbols and the symbol $f$, the left-hand side
  is mapped to $x_1+3$ and for states to $x_1+x_2+3$. Since the indeterminates are instantiated with natural numbers $>0$, such a left-hand side yields a
  value that is larger than $3$. For the other symbols, we obtain the left-hand sides
  $x_1+2x_2+1$, $3x_1+1$, and $3x_1^2+3x_1+1$. Again, when instantiated
  with natural numbers $> 0$, they yield values that are larger than $3$.
\end{itemize}
 
\begin{cor}\label{conf:hard:cor}
There are confluent TRSs whose termination can be shown with a \scomp polynomial order
for which the small term reachability problem is N2ExpTime-complete, both for unary and binary encoding of numbers.
\end{cor}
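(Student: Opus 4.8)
The plan is to reuse the N2ExpTime-hardness construction from the proof of \Cref{completeness:main:thm:poly} and to repair its lack of confluence by the additional rules $R_c$ introduced above. Concretely, fix a Turing machine \M deciding an N2ExpTime-complete problem, set $R' := \RM \cup R_1 \cup R_2 \cup R_3 \cup R_c$, and establish three facts: (i) $R'$ is confluent; (ii) termination of $R'$ can be shown with a \scomp polynomial order; and (iii) $t(w)$ rewrites with $R'$ to a term of size $1$ if and only if \M accepts $w$. Since $|t(w)|$ is polynomial in $|w|$ and the bound $n = 1$ has constant-size representation (so the unary/binary distinction plays no role), (iii) gives a polynomial-time reduction from an N2ExpTime-complete problem; combined with the N2ExpTime upper bound of \Cref{completeness:main:thm:poly}, which applies to every TRS terminating with a \scomp polynomial order, this yields the asserted N2ExpTime-completeness.

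For (ii) I would simply extend the polynomial interpretation from the proof of \Cref{po:constr:lem} by $\#_1(x) \mapsto x + 1$ and $\#_0 \mapsto 2$, exactly as indicated above. Every constant is then still mapped to a value $\geq 2$ (namely $\# \mapsto 4$ and $\#_0 \mapsto 2$), so the induced order stays \scomp, and the three families of rules in $R_c$ are oriented from left to right: each right-hand side evaluates to $3$, while every left-hand side evaluates either to $4$ or to a polynomial whose value exceeds $3$ whenever the indeterminates are instantiated by positive integers. Together with \Cref{po:constr:lem}, which already orients $\RM \cup R_1 \cup R_2 \cup R_3$, this settles (ii); in particular $R'$ is terminating.

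Knowing that $R'$ is terminating, for (i) it is enough to show that every term has a \emph{unique} normal form. I would characterize the $R'$-normal forms --- terms built only from $\#_1$, $\#_0$ and variables that do not contain $\#_1(\#_1(\#_0))$ as a subterm --- and observe that any such normal form containing a variable, as well as the constant $\#_0$, is reachable only from itself (no right-hand side of a rule can be instantiated to it), whereas every term that is \emph{not} a normal form can be rewritten by the rules of $R_c$ to $\#_1(\#_0)$, which is then the only normal form it can reach; hence confluence. It is precisely here that the two-case formulation $q_0(y_1,y_2) +_1 q_0(z_1,z_2) \to q_0(y_1,y_2)$ and $b(y) +_1 q_0(z_1,z_2) \to b(y)$ of the sum-elimination rule of $R_1$ matters: a single rule $x +_1 q_0(z_1,z_2) \to x$ would give $x +_1 q_0(z_1,z_2)$ the two incompatible normal forms $x$ and $\#_1(\#_0)$. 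Finally, for (iii): the ``if'' direction is exactly \Cref{hardness:2exp:tech:lem}, since an accepting run is simulated by $\RM \cup R_1 \cup R_2 \cup R_3$ all the way down to $\#$ and the rules of $R_c$ do not obstruct this; for the ``only if'' direction, note that the rules of $R_c$ can only ever produce the size-$2$ term $\#_1(\#_0)$ and never the constant $\#$, while the only other term of size $1$, namely $\#_0$, does not occur in $t(w)$ and can never be created by a rewrite step. Hence a term of size $1$ is reachable from $t(w)$ iff $\#$ is, i.e., by \Cref{hardness:2exp:tech:lem}, iff \M accepts $w$.

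The hard part is step (i): one must be sure that gluing $R_c$ on top of the fairly intricate system $\RM \cup R_1 \cup R_2 \cup R_3$ produces no non-joinable critical pair, and in particular that no left-hand side of the original system is left with a ``missing'' normal form clashing with $\#_1(\#_0)$. This is exactly why the argument of \Cref{confluence:prop} cannot be reused here --- there, normal forms were automatically of minimal size, which fails for $R'$ --- and why the rules of $R_1$ had to be formulated with the later confluence requirement in mind.
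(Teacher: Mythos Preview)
Your proposal is correct and follows essentially the same approach as the paper: you add the rules $R_c$ to enforce confluence, extend the polynomial interpretation by $\#_1(x)\mapsto x+1$ and $\#_0\mapsto 2$ to preserve \scompa and termination, and argue that reachability of a size-$1$ term from $t(w)$ is unaffected because the $R_c$-rules can only produce the size-$2$ term $\#_1(\#_0)$. Your treatment of confluence is in fact slightly more careful than the paper's (your characterization of the normal forms correctly includes terms such as $\#_1^k(x)$ for $k\geq 2$, which the paper's informal list omits), and you make explicit why the two-case split of the $+_1$-rule in $R_1$ is needed.
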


As shown in \cite{DBLP:conf/rta/HofbauerL89}, if termination of a TRS can be shown with a \emph{linear} polynomial order 
(i.e., where all polynomials have degree at most $1$), then this implies an exponential bound on the lengths of reduction sequences. 
Again, this bound is tight and one can use the example showing this to obtain a TRS 
that generates an exponentially large tape and an exponentially large counter, similarly to what we have done in the general case.

\begin{exa}
Let $R_d$ consist of just the two $d$-rules from \Cref{ho:la:2ex}. Then the term $d^\ell(s(0))$ can be reduced to $s^{2^\ell}(0)$. 
\end{exa}

\begin{cor}\label{Hardness Linear Polynomials}
The small term reachability problem is in NExpTime for TRSs whose termination can be shown with a \scomp and linear polynomial order.
There are confluent TRSs whose termination can be shown with a \scomp and linear polynomial order for which the small term reachability problem is NExpTime-complete.
These results hold both for unary and binary encoding of numbers.
\end{cor}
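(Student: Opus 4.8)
The plan is to replay the development of \Cref{po:sect} one exponential level lower, using that a \emph{linear} polynomial order bounds both the length of reduction sequences \cite{DBLP:conf/rta/HofbauerL89} and, as we shall argue, the size of reachable terms by a single exponential in the size of the start term, rather than by a double exponential.

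\emph{Upper bound.} For a linear polynomial interpretation and any fixed $a\geq 1$, the value $\pi_a(t)$ (notation as in the proof of \Cref{nu-size:bound}) is bounded by $K^{|t|}$ for a constant $K$ depending only on the fixed interpretation and on $a$; this is precisely the estimate behind the exponential derivation-length bound of \cite{DBLP:conf/rta/HofbauerL89}. The inequality $\pi_a(t)>\nusz{t}$ for $a\geq 2$ established inside the proof of \Cref{nu-size:bound} does not use linearity, so for a \scomp linear polynomial order, $s\succ s'$ yields $K^{|s|}\geq\pi_2(s)>\pi_2(s')>\nusz{s'}$, i.e.\ the nu-size of $s'$ is exponentially bounded in $|s|$. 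Since $\to_R$ strictly decreases the positive integer $\pi_2(\cdot)$, every reduction sequence issuing from $s$ has length at most $\pi_2(s)\leq K^{|s|}$; combined with the depth-growth lemma of \cite{DBLP:conf/cade/CichonL92}, this gives the linear analogue of \Cref{depth:bound}, and feeding both exponential bounds into $|t|\leq m+m\cdot n$ (with $m=\nusz{t}$ and $n=\dep(t)$) gives the linear analogue of \Cref{size:bound:lem}: every term reachable from $s$ has size exponentially bounded in $|s|$. The NExpTime procedure is then that of \Cref{lenred:upper:bound}: guess a sequence $s=s_0\to_R\cdots\to_R s_k$ with $k$ and all $|s_i|$ within these exponential bounds, verify each step in exponential time (the terms having exponential size), and test $|s_k|\leq n$ in exponential time by the counting argument of \Cref{lenred:upper:bound}; this works for both unary and binary encoding of $n$.

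\emph{Hardness.} Fix an NExpTime-complete problem and a nondeterministic TM $\M$ deciding it within $2^{p(\ell)}$ steps on inputs of length $\ell$. We reuse the construction of \Cref{po:sect}, replacing the squaring-based generators by \emph{doubling}-based copies of the two $d$-rules $R_d$: a copy $R_1$ that rewrites $d_1^{p(\ell)}(b(q_0(t_1,t_2)))$ to $b^{2^{p(\ell)}}(q_0(t_1,t_2))$ (the tape segment to the left of the head), and analogous copies $R_2,R_3$ generating the tape to the right of the input word and the step counter. As polynomial-size start term we take
\[
t(w):=d_1^{p(\ell)}\bigl(b\bigl(q_0\bigl(a_1\ldots a_\ell\, d_2^{p(\ell)}(b(\#)),\, d_3^{p(\ell)}(f(\#))\bigr)\bigr)\bigr),
\]
and, by the argument of \Cref{hardness:2exp:tech:lem}, which carries over verbatim (blanks may be consumed by $\M$ before the generators terminate, while any auxiliary symbol blocks the TM rules), $t(w)$ rewrites with $\RM\cup R_1\cup R_2\cup R_3$ to a term of size $1$ iff $\M$ accepts $w$. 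Termination is shown by a \scomp linear polynomial order, as in \Cref{po:constr:lem}: send each tape symbol and $f$ to $x+3$, each state to $x+y+3$, $\#$ to $4$ (hence \scomp), and $d_1,d_2,d_3$ to $3x+1$; checking that this orients all rules from left to right is routine and simpler than in \Cref{po:constr:lem}, since no addition or squaring symbol appears. As the reduction runs in polynomial time and uses the fixed bound $n=1$, whose encoding has constant size, NExpTime-hardness follows for both encodings.

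\emph{Confluent case, and the main obstacle.} For confluence we add the gadget of \Cref{conf:hard:cor}: a unary $\#_1$, a constant $\#_0$, the rules $g(x_1,\ldots,x_n)\to\#_1(\#_0)$ for every function symbol $g$ of arity $>0$ other than $\#_1$, together with $\#_1(\#_1(\#_0))\to\#_1(\#_0)$ and $\#\to\#_1(\#_0)$. Then every term other than variables, $\#_0$, $\#_1(\#_0)$ and the terms $\#_1(x)$ has $\#_1(\#_0)$ (of size $2$) as its unique normal form, so the system becomes confluent, while $\#$ of size $1$ is still only reachable from $t(w)$ via a simulation of an accepting run; here, in contrast to \Cref{conf:hard:cor}, no special treatment of rules of the form $x+_i q_0(z_1,z_2)\to x$ is needed, because the doubling-based generators contain no addition symbol. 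Extending the order by $\#_1(x)\mapsto x+1$ and $\#_0\mapsto 2$ keeps it linear and \scomp and orients the new rules, exactly as in \Cref{conf:hard:cor}. The one step that requires genuine care is the linear version of the size bound \Cref{size:bound:lem} needed for the upper bound: everything else is a mechanical transcription of \Cref{po:sect}, but one must verify that the chain \Cref{nu-size:bound,depth:bound,size:bound:lem}, together with the depth lemma of \cite{DBLP:conf/cade/CichonL92}, loses exactly one exponential level once the polynomial interpretation is required to be linear.
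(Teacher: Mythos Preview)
Your proposal is correct and follows essentially the same approach as the paper: the upper bound replays the chain \Cref{nu-size:bound}--\Cref{size:bound:lem} one exponential level lower (which is exactly what the paper means by ``similarly to the proof of \Cref{size:bound:lem}''), and the hardness reduction uses the doubling-only subsystem $R_d'$ of the Hofbauer--Lautemann rules together with $\RM$ and the confluence gadget $R_c$, oriented by the restriction of the interpretation of \Cref{po:constr:lem} to its linear part. Your observation that the split of the addition rules is unnecessary here (since $R_d'$ contains no $+_i$) is a nice simplification the paper does not spell out.
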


\begin{proof}
The upper bound can be shown as before, i.e., one just needs to guess a reduction sequence of at most exponential length starting with $s$,
and then compare the size of the last term with $n$. 
Similarly to the proof of \Cref{size:bound:lem}, we can show that the sizes of the terms in this sequence are exponentially bounded by the size of $s$.
Thus, the next term in the sequence can always be guessed using an NExpTime-procedure and the final comparison takes exponential time.

For the lower bound, we proceed as in the proof of N2ExpTime-hardness for the case of general polynomial orders. Thus,
we assume that $\mathcal{M}$ is an exponentially time bounded nondeterministic TM whose time-bound is $2^{p(\ell)}$ for a polynomial $p$, 
where $\ell$ is the length of the input word. Given such an input word $w = a_1 \ldots a_\ell$ for $\mathcal{M}$, we now construct the term
\[ t'(w) = d_1^{p(\ell)}(b(q_0(a_1 \ldots a_\ell d_2^{p(\ell)}(b(\#)), d_3^{p(\ell)}(f(\#)))).\]
Instead of $R_1, R_2, R_3$, we now only need their rules for $d_1$, $d_2$, and $d_3$; let $R_d'$ denote this system of 6 rules. 
As above, we can show that the term $t'(w)$ can be rewritten with $R^\mathcal{M} \cup R_d'$ to a term of size 1 iff 
$\mathcal{M}$ accepts the word $w$. Moreover, termination of $R^\mathcal{M} \cup R_d'$ can be proved by the \scomp and linear polynomial order obtained 
from the one in the proof of \Cref{po:constr:lem} by removing the (non-linear) interpretations of $q_1, q_2, q_3$.

Similarly to the proof of \Cref{conf:hard:cor}, we can prove that NExpTime-hardness also
holds for a \emph{confluent} TRS whose termination can be 
shown with a size compatible and linear polynomial order. The reason is that termination of the modified confluent TRS $R^\mathcal{M} \cup R_d' \cup R_c$ can be shown by 
the \scomp and linear polynomial order that results from the one employed in the proof of \Cref{conf:hard:cor} by removing the (non-linear) interpretations of $q_1, q_2, q_3$.
\end{proof}

\section{Term rewriting systems shown terminating with\texorpdfstring{\\}{} a Knuth-Bendix order without special symbol}\label{kbo:sect}

Without any restriction, there is no primitive recursive bound on the length of derivation sequences for TRSs whose termination can be shown 
using a Knuth-Bendix order~\cite{DBLP:conf/rta/HofbauerL89}, but a uniform multiple recursive upper bound is shown in~\cite{DBLP:journals/iandc/Hofbauer03}. 
Here, we restrict the attention to KBOs without a \emph{special symbol}, i.e., without a unary symbol of weight zero. For such KBOs, an
exponential upper bound on the derivation length was shown in~\cite{DBLP:conf/rta/HofbauerL89}.\footnote{%
Actually, this result was shown in~\cite{DBLP:conf/rta/HofbauerL89} only for KBOs using
weights in \N, but it also holds for KBOs with non-negative weights in \R.
This is an easy consequence of our \Cref{KBO:size:bound}.
}
Given the results proven in the previous section,
one could now conjecture that in this case the small term reachability problem is NExpTime-complete. However, we will show below that the complexity is actually
only PSpace. In fact, the TRSs yielding the lower bounds for the derivation length considered in the previous section have not only long reduction sequences
(of double-exponential or exponential length),
but are also able to produce large terms (of double-exponential or exponential size). For KBOs without special symbol, this is not the case.
The following lemma provides us with a linear bound on the sizes of reachable terms.
It will allow us to show a \emph{PSpace upper bound} for the small term reachability problem.

\begin{lem}\label{KBO:size:bound}
Let $R$ be a TRS whose termination can be shown using a KBO without special symbol, and $s_0, s_1$ terms such that
$
s_0 \too_R s_1.
$
Then the size of $s_1$ is linearly bounded by the size of $s_0$.
In other words, for every such TRS $R$, there exists some number $c \geq 0$ such that 
$|s_1|\leq c \cdot |s_0|$ holds for all terms
$s_0, s_1$ with $s_0 \too_R s_1$. 
\end{lem}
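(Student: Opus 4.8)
The plan is to exploit the fact that a KBO without special symbol is compatible with the weight function $w$ in the weak sense: if $s_0 \too_R s_1$, then $w(s_0) \geq w(s_1)$, because each rewrite step $u \to_R v$ satisfies $u \succ v$, hence either $w(u) > w(v)$ or $w(u) = w(v)$, so in particular $w(u) \geq w(v)$; summing along the reduction sequence gives $w(s_0) \geq w(s_1)$. The key point is then to convert this bound on the \emph{weight} of $s_1$ into a bound on its \emph{size}. Since there is no special symbol, every unary function symbol has strictly positive weight, and every constant symbol and variable has weight at least $w_0 > 0$; the only symbols that could have weight $0$ are function symbols of arity $\geq 2$. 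So I would let $w_{\min}$ be the minimum weight occurring among the (finitely many) constant symbols, the value $w_0$, and all unary function symbols — this $w_{\min}$ is strictly positive — and observe that, writing $t$ for $s_1$, the number of occurrences of constants, variables, and unary function symbols in $t$ is at most $w(t)/w_{\min}$.

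The remaining contribution to $|t|$ comes from occurrences of function symbols of arity $\geq 2$ (the ``non-unary'' symbols, counted by $\nusz{t}$ minus the variable/constant count). The idea here is a standard tree-counting argument: in the tree representation of $t$, consider the subtree obtained by deleting all unary nodes and contracting edges; its internal nodes are exactly the occurrences of symbols of arity $\geq 2$, and its leaves are the occurrences of constants and variables. Since every internal node of this contracted tree has at least two children, the number of internal nodes is strictly less than the number of leaves. Hence the number of occurrences of non-unary function symbols in $t$ is bounded by the number of occurrences of constants and variables in $t$, which we have already bounded by $w(t)/w_{\min}$. Combining the two bounds gives $|t| \leq 2\cdot w(t)/w_{\min} \leq 2\cdot w(s_0)/w_{\min}$, and finally $w(s_0) \leq w_{\max}\cdot |s_0|$ where $w_{\max}$ is the maximum weight of any symbol in the (fixed, finite) signature, yielding $|s_1| \leq (2\,w_{\max}/w_{\min})\cdot|s_0|$, so $c := \lceil 2\,w_{\max}/w_{\min}\rceil$ works.

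I do not expect any serious obstacle here; the only thing to be careful about is that the signature is fixed and finite (so $w_{\min}$ and $w_{\max}$ are well-defined positive reals and $c$ is a genuine constant depending only on $R$), and that weights are allowed to be arbitrary non-negative reals rather than integers — but this causes no difficulty since we only use $w_{\min} > 0$ and the weak monotonicity $w(u) \geq w(v)$ along rewrite steps, both of which hold over $\R$. The mildly delicate point is the tree-contraction step: one should phrase it precisely (e.g.\ by induction on the structure of $t$, proving simultaneously that the number of non-unary-symbol occurrences in $t$ is less than the number of constant/variable occurrences in $t$ whenever $t$ is not a single variable or constant), but this is entirely routine.
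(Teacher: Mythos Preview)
Your proposal is correct and follows essentially the same approach as the paper: bound $w(s_1)\leq w(s_0)\leq w_{\max}\cdot|s_0|$, then convert the weight bound into a size bound via $|t|\leq 2\,w(t)/w_{\min}$ using that arity-$0$ and arity-$1$ symbols have strictly positive weight and that the number of arity-${\geq}2$ occurrences is bounded by the number of leaves. The only cosmetic difference is that the paper derives the latter inequality from the Knuth--Bendix identity $\sum_i n_i(t)=1+\sum_i i\,n_i(t)$ (yielding $n_0(t)\geq n_2(t)+\cdots+n_k(t)$ directly), whereas you phrase it as a tree-contraction argument; these are the same fact.
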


\begin{proof}
Fix a KBO with weight function $w$ showing termination of $R$ such that all symbols of
arity $1$ have weight $> 0$.
We define 
\[
\wnull := \min\{w(f) \mid w(f)> 0\ \mbox{and $f$ is a function symbol in $R$ or a variable\footnotemark}\},
\]
and\footnotetext{%
Recall that all variables have the same weight $w_0>0$.} let $\wmax$ be the maximal weight of a function symbol in $R$ or a variable.
As the weights of function symbols not occurring in $R$ have no influence on the orientation of the rules in $R$ with the given KBO,
we can assume without loss of generality that their weight is $\wnull$.

Let $t$ be a term and $n_i(t)$ for $i=0,\ldots,k$ the number of occurrences of symbols of arity $i$ in $t$, 
where $k$ is the maximal arity of a symbol occurring in $t$.\footnote{Variables have arity $0$.}
Note that $|t| = n_0(t) + n_1(t) + \ldots + n_k(t)$.
The following fact, which can easily be shown by induction on the structure of $t$, is stated in~\cite{KnuthBendix}:
\[
n_0(t) + n_1(t) + \ldots + n_k(t) = 1 + 1 \cdot n_1(t) + 2 \cdot n_2(t) + \ldots + k \cdot n_k(t).
\]
In particular, this implies that $n_0(t)\geq n_2(t) + \ldots + n_k(t)$. 
Since symbols of arity $0$ and $1$ have weights $> 0$, we know that 
\[
w(t) \geq \wnull \cdot (n_0(t) + n_1(t)) \geq \wnull \cdot n_0(t) \geq \wnull \cdot (n_2(t) + \ldots + n_k(t)).
\]
Consequently, $2  \cdot  \wnull^{-1} \cdot w(t) \geq n_0(t) + n_1(t) + \ldots + n_k(t) = |t|$. This shows that the size of a term is linearly bounded
by its weight. Conversely, it is easy to see that the weight of a term is linearly bounded by its size: $w(t)\leq \wmax \cdot |t|$.

Now, assume that $s_0 \too_R s_1$. Since termination of $R$ is shown with our given KBO, we know that $w(s_0)\geq w(s_1)$, and thus
$\wmax \cdot |s_0| \geq w(s_1) \geq 1/2  \cdot  \wnull \cdot |s_1|$. 
This yields $|s_1| \leq 2 \cdot \wnull^{-1} \cdot \wmax \cdot |s_0|$.
Since we have assumed that the TRS $R$ is fixed, the number $2 \cdot \wnull^{-1} \cdot \wmax$ is a constant.
\end{proof}

In particular, this means that the terms encountered during a rewriting sequence starting with a term $s$ can each be stored using only polynomial
space in the size of $s$. Given that the length of such a sequence is exponentially bounded, we can decide the small term reachability problem
by the following nondeterministic algorithm: 
\begin{itemize}
\item
  guess a rewrite sequence $s\to_R s_1 \to_R s_2 \to_R \ldots$ and always store only the current term;
\item
  in each step, check whether $|s_i|\leq n$ holds. If the answer is ``yes'' then stop and accept. Otherwise, guess the next rewriting step;
  if this is not possible since $s_i$ is irreducible, then stop and reject. 
\end{itemize}
This algorithm needs only polynomial space since, by \Cref{KBO:size:bound}, the size of each term $s_i$ is linearly bounded by the size of $s$.
It always terminates since $R$ is terminating. If there is a term of size $\leq n$ reachable from $s$, then the algorithm is able
to guess the sequence leading to it, and thus it has an accepting run.
Otherwise, all runs are terminating and rejecting.
Note that, by Savitch's theorem~\cite{DBLP:journals/jcss/Savitch70},
NPSpace is equal to PSpace.
Thus, we obtain the following complexity upper bound.

\begin{lem}
The small term reachability problem is in PSpace for TRSs whose termination can be shown with a KBO without special symbol,
both for unary and binary encoding of numbers.
\end{lem}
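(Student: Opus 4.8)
The plan is simply to confirm that the nondeterministic procedure already described above is a polynomial-space decision procedure and then to invoke Savitch's theorem. First I would fix a KBO without special symbol witnessing termination of $R$ and apply \Cref{KBO:size:bound} to obtain a constant $c$, depending only on the fixed TRS $R$, such that $|s_i| \leq c \cdot |s|$ for every term $s_i$ reachable from the input term $s$. This is the decisive point: the algorithm keeps only a single term in memory at any time, and this bound guarantees that this term has size polynomial in $|s|$, hence is storable in polynomial space.

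Next I would check that one iteration of the loop costs only polynomial space. Guessing the next rewrite step amounts to guessing a position in the current term $s_i$, a rule $l \to r \in R$, and the restriction of a substitution $\sigma$ to the variables of $l$; since $\sigma(l)$ must equal the subterm of $s_i$ at the chosen position, this data has size $O(|s_i|)$, and testing applicability and forming the successor term is a polynomial-time (hence polynomial-space) computation whose output again has size $\leq c \cdot |s|$ by \Cref{KBO:size:bound}. The test $|s_i| \leq n$ is likewise cheap for both encodings of $n$: one computes $|s_i|$, a number that fits into polynomially many bits, and compares it with $n$, which takes space polynomial in the size of the representation of $n$, be it unary or binary. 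Hence the whole procedure runs in polynomial space on every computation branch.

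I would then observe that every branch halts, because $R$ is terminating: each reduction sequence issuing from $s$ is finite, so each branch either accepts upon reaching a term of size $\leq n$ or reaches an irreducible term and rejects. (The exponential bound on derivation length from~\cite{DBLP:conf/rta/HofbauerL89} also implies this, but for the PSpace bound only finiteness of each individual branch matters, so no explicit step counter is needed.) Correctness is then immediate: a term of size $\leq n$ is reachable from $s$ iff some branch guesses a reduction sequence reaching it and accepts. Since this yields a nondeterministic polynomial-space decision procedure and $\mathrm{NPSpace} = \mathrm{PSpace}$ by Savitch's theorem~\cite{DBLP:journals/jcss/Savitch70}, membership in PSpace follows, for both the unary and the binary encoding of $n$.

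I do not expect a genuine obstacle, since \Cref{KBO:size:bound} already does the real work of bounding the intermediate terms. The only subtleties worth stating carefully are that it is termination of $R$ — not an explicit counter — that makes each nondeterministic branch halt within the space bound, and that the final comparison $|s_i| \leq n$ stays polynomial under either encoding of the number $n$.
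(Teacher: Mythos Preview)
Your proposal is correct and follows essentially the same approach as the paper: guess a reduction sequence while storing only the current term, use \Cref{KBO:size:bound} for the linear size bound on reachable terms, and apply Savitch's theorem. Your observation that termination of $R$ alone (rather than the exponential derivation-length bound) suffices to guarantee that every branch halts is a small but welcome sharpening of the paper's presentation, which mentions the exponential bound without actually needing it for the PSpace argument.
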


It remains to prove the corresponding \emph{lower bound}.
Let $\M$ be a polynomial space bounded TM, and $p$ the polynomial that yields the space bound. Then there is a polynomial $q$ such that any run of
\M longer than $2^{q(\ell)}$ on an input word $w$  of length $\ell$ is cyclic. Thus, to check whether \M accepts $w$, it is sufficient to consider only runs
of length at most $2^{q(\ell)}$. However, in contrast to the reduction used in the previous section, we cannot generate an exponentially large unary
down counter using a TRS whose termination can be shown with a KBO without special symbol. Instead, we use a polynomially large \emph{binary} down counter that
is decremented, starting with the binary representation $10^{q(\ell)}$ of $2^{q(\ell)}$ (see \Cref{down:cont:ex}). 
For example, if $q(\ell) = 3$, then we represent the number
$2^{q(\ell)} = 2^3 = 8$ as the binary number $10^{q(\ell)} = 1000$. The construction of the TRS \RMbin simulating \M given below is
very similar to the construction given in the proof of Lemma~7 in~\cite{DBLP:conf/lpar/BonfanteM10}.

As signature for \RMbin we again use the tape symbols of \M as unary function symbols, but now also the states are treated as unary symbols. In addition,
we need the unary function symbols $0$ and $1$ to represent the counter, as well as primed versions $a', q', 1'$ of the tape symbols $a$, the states $q$, and
the symbol $1$.
For a given input word $w = a_1\ldots a_\ell$ of \M, we now construct a term that starts with the binary representation of $2^{q(\ell)}$ and is
followed by enough tape space for a $p(\ell)$ space bounded TM to work on:
\[
\tilde{t}(w) := 10^{q(\ell)}(b^{p(\ell)}(q_0 (a_1\ldots a_\ell(b^{p(\ell)-\ell}(\#))))).
\]
Clearly, $\tilde{t}(w)$ can be constructed in polynomial time.

The TRS \RMbin is now defined as follows. The first part decrements the counter (as in \Cref{down:cont:ex}) and by
doing so ``sends a prime'' to the right: 
\[
\begin{array}{lcll}
1(a(x)) \to 0(a'(x)) &\mbox{\ and\ }& 0(a(x)) \to 1'(a'(x)) &\mbox{for all  tape symbols $a$,}\\[.2em]
0(1'(x)) \to 1'(1(x)), &&
1(1'(x)) \to 0(1(x)). &
\end{array}
\]
The prime can go to the right on the tape until it reaches a state, which it then turns into its primed version:
\[
\begin{array}{ll}
a'g(x) \to ag'(x) & \mbox{\ \ for tape symbols $a$ and tape symbols or states $g$}.
\end{array}
\]
Only primed states can perform a transition of the TM:
\[
\begin{array}{ll}
 q_1'(a_1(x)) \to a_2(q_2(x)) & \mbox{\ \ for each transition $(q_1,a_1,q_2,a_2,r)$ of \M},\\[.2em]
 c(q_1'(a_1(x)) \to q_2(c(a_2(x))) & \mbox{\ \ for each transition $(q_1,a_1,q_2,a_2,l)$ of \M}\\
                                   & \mbox{\ \  and tape symbol $c$.}
\end{array}
\]
Again, the blank symbol $b$ is also considered as a tape symbol of $\M$.
Note that the rôle of the counter is not to restrict the number of transition steps simulated by \RMbin. Instead it
produces enough primes to allow the simulation of at least $2^{q(\ell)}$ steps, while termination can still be shown
using a KBO without special symbol.

Once the unique final accepting state $\acc$ is reached, we remove all symbols other than $\#$:
\[
\begin{array}{ll}
a(\acc(x)) \to \acc(x) & \mbox{\ \ where $a$ is a tape symbol or $0$ or $1$},\\[.2em]
\acc(x) \to \#.
\end{array}
\]

\begin{lem}
The term $\tilde{t}(w)$ can be rewritten with \RMbin to a term of size $1$ iff \M accepts the word $w$.
\end{lem}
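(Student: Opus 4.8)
The plan is to establish both directions of the iff by a careful simulation argument that parallels the proof of \Cref{hardness:2exp:tech:lem}, with the binary down counter playing the rôle that the unary counter played there. For the ``if'' direction, assume \M accepts $w = a_1\ldots a_\ell$. By the choice of $q$, there is an accepting run of \M of length at most $2^{q(\ell)}$ that never leaves the tape segment of size $p(\ell)$. I would show that, starting from $\tilde{t}(w) = 10^{q(\ell)}(b^{p(\ell)}(q_0(a_1\ldots a_\ell(b^{p(\ell)-\ell}(\#)))))$, the TRS \RMbin can simulate each transition of \M by: first using the counter-decrement rules together with the prime-propagation rule $a'g(x) \to ag'(x)$ to transport a fresh prime rightwards until it reaches the current state, turning it into a primed state; then applying the appropriate transition rule (the left- or right-moving variant) to perform one step of \M. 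The key quantitative point is that decrementing the binary counter $2^{q(\ell)}$ times is exactly what is needed, and since the counter starts at $2^{q(\ell)}$, at least $2^{q(\ell)} \geq$ (length of the accepting run) primes can be produced, so the simulation never stalls for lack of a prime. Once \acc is reached, the term has the form $u(\acc(v))$ with $u$ a word over tape symbols and $0,1$; the cleanup rules $a(\acc(x)) \to \acc(x)$ then strip $u$ down to $\acc(v)$, and finally $\acc(x) \to \#$ yields the term $\#$ of size $1$.

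For the ``only if'' direction, I would argue contrapositively: if \M does not accept $w$, then no rewrite sequence from $\tilde{t}(w)$ reaches a term of size $1$. The only rule that can ever produce a term of size $1$ is $\acc(x)\to\#$, and the only way the symbol \acc can appear in a reachable term is via one of the transition rules $q_1'(a_1(x))\to a_2(q_2(x))$ or $c(q_1'(a_1(x)))\to q_2(c(a_2(x)))$ whose target state happens to be \acc. Thus I need the invariant that every reachable term faithfully encodes a reachable configuration of \M (possibly with a prime ``in transit'' and a partially decremented counter prefix), so that \acc occurs only if some run of \M on $w$ visits \acc. Since a transition rule fires only when the machine locally sees a legal primed configuration $q_1'(a_1(x))$ or $c(q_1'(a_1(x)))$, the simulation can only take legitimate steps of \M, and the counter/prime machinery never fabricates new tape content beyond the blanks already present. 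Consequently, if \acc is unreachable for \M then \acc never appears, the rule $\acc(x)\to\#$ never applies, and every reachable term has size $>1$.

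The main obstacle is the bookkeeping in the forward direction: unlike the unary counter of Section~\ref{po:sect}, the binary counter does not decrement in a single rewrite step, so one application of a transition of \M corresponds to a \emph{block} of rewrite steps (several counter-digit rewrites to expose a new prime, plus the prime-shuffling steps, plus the actual transition rule), and I must check that these blocks compose correctly and that the counter value decreases by exactly one per block. I would handle this by isolating a lemma along the lines of ``from $10^{q(\ell)}(\cdot)$ the counter rules alone generate, digit by digit, a stream of at least $2^{q(\ell)}$ primes arriving at the right end'' — this is essentially the binary-down-counter behaviour illustrated in \Cref{down:cont:ex} — and then arguing that each arriving prime can be consumed to advance \M by one step. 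A secondary subtlety is that primes and transitions may interleave with counter decrements in many orders; I would note, as in the proof of \Cref{hardness:2exp:tech:lem}, that the simulation of \M may begin before the counter is fully exhausted, and that this causes no problem because \RMbin cannot generate any non-blank tape symbols that \M would not itself have, so any reachable term still corresponds to a genuine run of \M on $w$.
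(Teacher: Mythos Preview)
Your proposal is correct and follows essentially the same approach as the paper: the forward direction simulates an accepting run by repeatedly decrementing the binary counter, shuttling a prime to the state, firing a transition rule, and then applying the cleanup rules once $\acc$ is reached; the backward direction observes that $\#$ (the only ground term of size~$1$) can only be produced by the rule $\acc(x)\to\#$, and that $\acc$ can only enter a reachable term via a transition rule, which in turn presupposes a faithful local configuration of~$\M$. The paper's own proof is considerably terser---it omits the block-structure and interleaving discussion you spell out---so your extra bookkeeping (the counter producing a stream of at least $2^{q(\ell)}$ primes, and the observation that early transitions interleaved with counter work cannot fabricate non-blank tape content) is a legitimate elaboration rather than a deviation.
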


\begin{proof}
If \M accepts the word $w$, then there is a run of \M on input $w$ that ends in the state $\acc$, uses at most $p(\ell)$ space, and
requires at most $2^{q(\ell)}$ steps. This run can be simulated by \RMbin by decrementing the counter, sending a prime to the state, applying
a transition, decrementing the counter, etc. Since the counter can be decremented $2^{q(\ell)}$ times, we can use this approach to simulate a run
of length at most $2^{q(\ell)}$. Once the accepting state is reached, we can use the last two rules to reach the term $\#$, which has size $1$.

Conversely, we can only reach a term of size one, if these cancellation rules are applied. This is only possible if first the accepting state
has been reached by simulating an accepting run of \M.
\end{proof}

To conclude our proof of the lower complexity bound, it remains to construct an appropriate KBO for \RMbin.

\begin{lem}\label{Termination RMbin}
Termination of \RMbin can be shown with a KBO without special symbol.
\end{lem}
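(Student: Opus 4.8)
The plan is to exhibit a weight function $w$ and a precedence $>$ without a special symbol, and then check that every rule of \RMbin is strictly decreasing. First I would set up the weights. All variables get weight $w_0 = 1$. The key tension is between the counter-decrement rules and the prime-propagation and transition rules: the rule $0(1'(x)) \to 1'(1(x))$ must not increase weight, so $w(0) = w(1) + w(1')$ is the natural choice (or $\geq$), which forces $w(0) > w(1)$ and $w(0) > w(1')$; meanwhile $1(1'(x)) \to 0(1(x))$ needs $w(1) + w(1') \geq w(0) + w(1)$, i.e.\ $w(1') \geq w(0)$ — so in fact these two rules together force $w(1') = w(0)$, hence $w(1) = 0$. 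But $1$ is unary, so that would make $1$ a special symbol, which we have forbidden. Therefore the weight approach alone cannot orient all four counter rules, and I would instead give the counter rules zero weight change where unavoidable and break ties by precedence; more cleanly, I expect one has to assign $w(1) = w(1') = w(0)$ equal and positive (say all $= 1$, as in \Cref{down:cont:ex}), so that all four counter rules are weight-preserving, and then orient them by the precedence $1 > 0 > 1'$ and by the lexicographic subterm comparison: $1(a(x)) \to 0(a'(x))$ uses $1 > 0$; $0(a(x)) \to 1'(a'(x))$ uses $0 > 1'$; $0(1'(x)) \to 1'(1(x))$ uses $0 > 1'$; $1(1'(x)) \to 0(1(x))$ uses $1 > 0$. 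Here the primed tape symbols $a'$ need $w(a') = w(a)$ and a precedence choice so that $a(x) \succ a'(x)$ fails the wrong way is avoided — actually $a'(x)$ appears on the right of the first two counter rules while $a(x)$ is on the left, but those are already decided by the outer symbol, so we only need $w(a') = w(a)$.

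Second, for the prime-propagation rule $a'g(x) \to ag'(x)$ (with $g$ a tape symbol or a state): take $w(a') = w(a)$ and $w(g') = w(g)$ for every such symbol, so weights are equal, $a' = a$ at the top would be wrong since the top symbols differ — so I need $a' > a$ in the precedence for this to decrease via the first precedence clause ($f > g$ with $f = a'$, $g = a$). This is consistent with making $a > a'$ impossible; so the precedence must put every primed tape symbol above its unprimed version, and similarly I must be careful that the counter rules don't now go the wrong way, but they were decided by $1,0,1'$, which are separate symbols. For the transition rules $q_1'(a_1(x)) \to a_2(q_2(x))$ and $c(q_1'(a_1(x))) \to q_2(c(a_2(x)))$, the cleanest route is to make them strictly weight-decreasing: assign each state symbol a large weight and each primed state an even larger one, or more simply give tape symbols weight $1$ and states some weight, and give $q_1'$ strictly more weight than $a_2 + q_2 - a_1$ demands — since both sides have the same multiset of "other" symbols except the state/primed-state, it suffices that $w(q_1') > w(q_2)$ for all $q_1, q_2$, and likewise the left side of the left-moving rule has the extra $c$ on both sides so the same inequality works. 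So: give every primed state weight $W$ and every unprimed state weight $W'$ with $W > W'$; this makes all transition rules strictly weight-decreasing regardless of the other (equal) symbols. Finally, the cleanup rules $a(\acc(x)) \to \acc(x)$ and $\acc(x) \to \#$ are strictly weight-decreasing as long as $w(a) > 0$ (true) and $w(\acc) > w(\#)$, which we arrange by giving $\#$ weight $w_0$ and $\acc$ a larger weight (it is a state, so weight $W'$, and we just need $W' > w_0 = 1$).

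Third, I would assemble all constraints into one consistent choice: $w_0 = 1$; $w(b) = w(a) = w(a') = 1$ for all tape symbols and their primes; $w(\#) = 1$; $w(0) = w(1) = w(1') = 1$; $w(q) = 2$ for every unprimed state $q$ (including $q_0$ and $\acc$); $w(q') = 3$ for every primed state. For the precedence $>$: all primed tape symbols above their unprimed counterparts; among the counter symbols $1 > 0 > 1'$; states and primed states placed anywhere below the tape-symbol-prime pairs (their rules are decided by weight, so the precedence among them is irrelevant). One checks there is no unary symbol of weight $0$, so there is no special symbol, and the variable condition $|s|_x \geq |t|_x$ holds for every rule since each rule is linear and keeps the single variable $x$ on both sides (the cleanup rules likewise). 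Then go rule-class by rule-class through the KBO definition exactly as in \Cref{po:constr:lem}: strictly weight-decreasing rules (transition rules, cleanup rules) are immediately handled by the first KBO clause; weight-preserving rules (the four counter rules and the prime-propagation rule) are handled by the second clause via the precedence on the top symbol (the left- and right-hand sides always have different root symbols here, so the first precedence sub-clause applies and no lexicographic descent is needed).

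The main obstacle I anticipate is the interaction of the two "carry" rules $0(1'(x)) \to 1'(1(x))$ and $1(1'(x)) \to 0(1(x))$ with the prime-propagation rule: the former two force $1, 0, 1'$ to all have equal weight and be ordered $1 > 0 > 1'$ in the precedence, while the latter forces primed tape symbols above unprimed ones — I need to double-check that no rule then accidentally points the wrong way under this combined precedence, in particular that $a'g(x) \to a g'(x)$ really is handled by the top-symbol comparison ($a' > a$) and not spoiled by a weight mismatch, which is why the equal-weight assignment for all tape symbols and their primes is essential. Everything else is a routine finite case check of the kind carried out in the proof of \Cref{po:constr:lem}.
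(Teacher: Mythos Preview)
Your plan contains an internal inconsistency that makes the final weight assignment fail. In the second step you (correctly) observe that the prime-propagation rule $a'g(x)\to a\,g'(x)$, with $g$ a tape symbol \emph{or a state}, forces $w(g')=w(g)$ for every such $g$; otherwise the two sides cannot have equal weight. But in the very next paragraph you orient the transition rules by making them strictly weight-decreasing via $w(q')>w(q)$, and your concrete choice in step three sets $w(q)=2$, $w(q')=3$. With these weights the instance $a'q(x)\to a\,q'(x)$ of the propagation rule has left-hand weight $1+2+w(x)$ and right-hand weight $1+3+w(x)$, so the right-hand side is \emph{heavier} and KBO cannot orient it. The ``main obstacle'' you flag at the end (the carry rules versus propagation) is not the real one; the real tension is between propagation-onto-states and your weight trick for the transition rules, and those two constraints are incompatible.

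The fix is to drop the weight trick and handle the transition rules by precedence as well: give \emph{all} symbols weight~$1$, keep $1>0>1'$, and additionally put every primed state above every primed tape symbol, every primed tape symbol above its unprimed version, and every unprimed tape symbol above every unprimed state, i.e.\ $q'>a'>a>q$. Then the transition rule $q_1'(a_1(x))\to a_2(q_2(x))$ is weight-preserving and decided by $q_1'>a_2$; the left-moving rule $c(q_1'(a_1(x)))\to q_2(c(a_2(x)))$ is decided by $c>q_2$; the propagation rule is decided by $a'>a$; and the counter and cleanup rules go through as you already argued. This is precisely the route the paper takes.
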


\begin{proof}
It is easy to see that the KBO that assigns weight $1$ to all function symbols and to all variables, and uses the precedence
order $1 > 0 > 1'$ and $q' > a' > a > q$ for states $q$ and tape symbols $a$, orients all rules of \RMbin from left to right.\footnote{%
This KBO is similar to the one introduced in Example~10 of~\cite{DBLP:conf/lpar/BonfanteM10}.
}
\end{proof}

As in the case of the TRSs considered in the previous section, confluence does not reduce the complexity 
of the small term reachability problem for TRSs shown terminating with a KBO
without special symbol. In fact, we can again extend the TRS \RMbin such that it becomes confluent. To this purpose, we add two new function symbols
$\#_1$ and $\#_0$ of respective arity $1$ and $0$, and two new rules:
\[
\begin{array}{ll}
g(x) \to \#_1(\#_0) & \mbox{\ \ for all unary function symbols $g$ different from $\#_1$},\\[.2em]
\# \to \#_1(\#_0).
\end{array}
\]
With this addition, every non-variable term built using the original signature of \RMbin can be reduced to $\#_1(\#_0)$, which proves confluence.
To show termination of the extended TRS, we modify and extend the KBO from the proof of
\Cref{Termination RMbin} as follows. All function symbols in the original signature of \RMbin (including $\#$)
now get weight $2$, and the symbols $\#_1$ and $\#_0$ as well as the variables get weight $1$. The precedence order is extended by setting 
$\# > \#_1$.
It is easy to see that the KBO defined this way shows that the extended TRS is terminating.

Combining the results obtained in this section, we thus have determined the exact complexity of the small term reachability problem
for our class of TRSs.

\begin{thm}\label{kbo:thm} 
The small term reachability problem is in PSpace for TRSs whose termination can be shown with a KBO without special symbol,
and there are confluent such TRSs for which the small term reachability problem is PSpace-complete.
These results hold both for unary and binary encoding of numbers.
\end{thm}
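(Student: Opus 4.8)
The plan is to assemble the theorem from the three ingredients already developed in this section: the PSpace upper bound, the PSpace-hardness construction via \RMbin, and the confluence-preserving extension of \RMbin. First I would invoke the upper-bound lemma: for every fixed TRS whose termination is witnessed by a KBO without special symbol, \Cref{KBO:size:bound} gives a constant $c$ with $|s_1| \leq c\cdot|s_0|$ whenever $s_0 \too_R s_1$, so every term reachable from the input term $s$ has size at most linear in $|s|$ and can therefore be stored in polynomial space. Combined with the known exponential bound on the derivation length for such KBOs~\cite{DBLP:conf/rta/HofbauerL89}, the nondeterministic guess-and-check algorithm described just before the upper-bound lemma uses only polynomial space, and by Savitch's theorem~\cite{DBLP:journals/jcss/Savitch70} this places the small term reachability problem in PSpace. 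This argument is uniform over all TRSs in the class, so it yields the ``in PSpace'' half of the statement for both unary and binary encoding of $n$ (the final size test $|s_i|\leq n$ is cheap in either encoding, as already noted in the proof of \Cref{lenred:upper:bound}).

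For the lower bound I would take a PSpace-complete language $\Pi$, fix a polynomial-space TM \M deciding it with space bound $p$, and choose $q$ so that any run longer than $2^{q(\ell)}$ on an input of length $\ell$ is cyclic, hence non-accepting if it has not yet accepted. Then I would use the already-constructed confluent system $\RMbin \cup \{g(x)\to\#_1(\#_0), \#\to\#_1(\#_0)\}$ together with the term $\tilde t(w) = 10^{q(\ell)}(b^{p(\ell)}(q_0(a_1\ldots a_\ell(b^{p(\ell)-\ell}(\#)))))$, which is computable from $w$ in polynomial time. The already-proven reachability lemma shows $\tilde t(w)$ rewrites to a term of size $1$ (namely $\#$) iff \M accepts $w$, and \Cref{Termination RMbin} together with the modified weight assignment (original symbols including $\#$ get weight $2$, the new symbols $\#_1,\#_0$ and variables get weight $1$, precedence extended by $\#>\#_1$) shows that this confluent system is terminated by a KBO without special symbol. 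Hence $w\mapsto(\tilde t(w),1)$ is a polynomial-time many-one reduction from $\Pi$ to the small term reachability problem for this fixed confluent TRS, establishing PSpace-hardness; the bound $n=1$ has constant size under either encoding, so the hardness holds in both cases.

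The only subtlety I expect to need care with is confirming that the confluence-preserving extension does not break anything: I must check that adding the rules $g(x)\to\#_1(\#_0)$ for all unary $g\neq\#_1$ and $\#\to\#_1(\#_0)$ genuinely yields confluence (every non-variable term over the original signature reduces to the unique normal form $\#_1(\#_0)$, and terms of the form $\#_1(x)$ for a variable $x$ together with $\#_0$ and variables are the remaining normal forms), that the term of size $1$ is still only reachable through a genuine accepting simulation (the rule $\#\to\#_1(\#_0)$ only increases size, and no rule other than the accepting-cleanup rules can produce $\#$ fresh), and that the modified KBO orients both the original \RMbin rules and the two new rules from left to right. The first and third points are the genuine work; they are the analogue of the confluence argument in \Cref{conf:hard:cor} and should go through routinely. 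Everything else is bookkeeping over facts already established above, so the main effort is simply verifying that the same fixed TRS simultaneously is confluent, is KBO-terminating without special symbol, and encodes \M faithfully.
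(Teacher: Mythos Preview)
Your proposal is correct and follows essentially the same approach as the paper: the PSpace upper bound via \Cref{KBO:size:bound} plus Savitch, and PSpace-hardness via the confluent extension of \RMbin with the reduction $w\mapsto(\tilde t(w),1)$. One minor remark: for the upper bound you do not actually need to invoke the exponential derivation-length bound separately, since termination of $R$ together with the linear size bound already guarantees the nondeterministic polynomial-space procedure halts (and NPSpace${}={}$PSpace handles the rest); the paper likewise relies only on termination at that point.
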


\section{Variants of the small term reachability problem}\label{variants:sect}

In order to investigate how much our complexity results depend on the exact formulation
of the condition on the reachable terms, we now consider some variants of the problem.
First, we introduce two variants for which the complexity is the same as for the small term reachability problem.
Then, we investigate a variant that partially leads to different complexity results, depending on the encoding of numbers.

\paragraph*{Two ``harmless'' variants}

In these variants we impose conditions on the reachable terms that are not related to size. 

\begin{defi}\label{harmless}
Let $R$ be a TRS.
\begin{enumerate}
\item
Given a term $s$ and a function symbol $f$,
the \emph{symbol reachability problem} for $R$ asks whether there exists a term $t$
such that $s\too_R t$ and $f$ occurs in $t$. 
\item
Given two terms $s$ and $t$,
the \emph{term reachability problem} for $R$ asks whether $s\too_R t$.
\end{enumerate}
\end{defi}
 
For the classes of reduction orders considered in the previous three sections, we obtain the same
complexity results as in the case of the small term reachability problem. For proving the
\emph{upper bounds}, it is 
enough to observe that the new test applied to the final term in the guessed reduction
sequence  (occurrence of $f$ or syntactic equality with the term $t$)
is still possible within the respective time-bound.

Regarding the \emph{lower bounds}, we can basically use the same reductions as in the previous three sections. 
For the symbol reachability problem, we can, however, dispense with the rules that make the terms smaller once the final state $\acc$ of the TM is reached, 
and just use this state as the function symbol to be reached.
For the term reachability problem, note that in these reductions the
fixed term $\#$ can be reached iff
the TM can reach its final state.

\begin{cor}
The complexity results stated in \Cref{length:red:sect,po:sect,kbo:sect} for the small term reachability problem also hold for the
symbol reachability problem and the term reachability problem.
\end{cor}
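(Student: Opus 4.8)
The plan is to re-use, almost verbatim, the algorithms and the term rewriting systems constructed in \Cref{length:red:sect,po:sect,kbo:sect}, and to change only the predicate that is evaluated on the reached term.

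\emph{Upper bounds.} In each of the three sections the upper bound for the small term reachability problem was obtained by a guess-and-check procedure that nondeterministically produces a reduction sequence $s = s_0 \to_R s_1 \to_R \cdots \to_R s_k$ whose length and whose intermediate terms are bounded by the respective function of $|s|$ (linear in the length-reducing case, exponential length with linear-size terms for a KBO without special symbol, and exponential resp.\ double-exponential length with correspondingly large terms for (linear resp.\ general) \scomp polynomial orders), and then tests whether $|s_k|\le n$. For the symbol reachability problem I would replace this final test by ``does $f$ occur in $s_k$?'', and for the term reachability problem by ``is $s_k$ syntactically equal to the input term $t$?''. Each of these tests is done by a single left-to-right pass over a textual representation of $s_k$ (comparing it, in the second case, with that of $t$), hence in time linear in $|s_k|$ plus the size of the input, which stays within the time bound already spent on the test $|s_k|\le n$. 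Consequently the same upper bounds (NP; PSpace; NExpTime; N2ExpTime) carry over, and since for the confluent subcases of \Cref{po:sect,kbo:sect} the upper bounds coincide with those of the general case, nothing extra is needed there.

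\emph{Lower bounds.} Here I would keep the Turing-machine-simulating systems \RM, $\RM\cup R_1\cup R_2\cup R_3$, \RMbin and their confluent extensions. For the term reachability problem no change is required: \Cref{hardness:tech:lem,hardness:2exp:tech:lem} and their KBO analogue already state that the \emph{fixed} term $\#$ of size $1$ is reachable from $t(w)$ (resp.\ $\tilde t(w)$) exactly when the simulated machine accepts $w$, and the start terms are computed in polynomial time; so the very same polynomial-time reductions witness hardness for the respective classes, and, via the confluent variants, also for the confluent subcases. For the symbol reachability problem one can even simplify the constructions: delete the ``cleanup'' rules that collapse a term to size $1$ once the unique accepting state \acc has appeared, and take \acc itself as the symbol to be reached; then \acc occurs in some reachable term iff the machine accepts the input, since \acc is produced only by simulating an accepting run. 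Deleting rules preserves termination, the corresponding reduction order (length-reducing / \scomp polynomial / KBO without special symbol), and, if one additionally keeps the confluence-enforcing rules $R_c$ (resp.\ their KBO counterpart), also confluence, because the removed cleanup rules were never needed for joinability.

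\emph{Main obstacle.} The routine but essential part is to verify that all side conditions survive these modifications — in particular that removing the cleanup rules does not break confluence of the confluent constructions, and that the length/size bounds invoked by the guess-and-check procedures (which depend only on the fixed TRS and on $|s|$, not on the tested predicate) still apply. The genuinely delicate point is the deterministic polynomial-time result for \emph{confluent length-reducing} systems: the single-maximal-sequence argument of \Cref{confluence:prop} does not transfer directly, since a symbol (or the target term) reachable from $s$ need not occur on one fixed maximal reduction sequence, so this subcase needs a separate, more careful argument.
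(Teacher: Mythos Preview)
Your approach is essentially the same as the paper's: for upper bounds, replace the final size test in the guess-and-check procedures by the occurrence test (resp.\ syntactic equality test); for lower bounds, reuse the TM-simulating systems, taking $\#$ as the fixed target term for term reachability and $\acc$ as the target symbol (with the cleanup rules deleted) for symbol reachability. The paper's discussion preceding the corollary says exactly this, in fewer words.

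Your flagged concern about the polynomial-time upper bound for confluent length-reducing systems is well taken and in fact goes beyond what the paper says: the paper's argument explicitly speaks only of ``the final term in the guessed reduction sequence,'' i.e., the nondeterministic procedures underlying the NP, PSpace, NExpTime, and N2ExpTime upper bounds, and is silent about \Cref{confluence:prop}. And indeed the normal-form argument behind \Cref{confluence:prop} does \emph{not} transfer directly: with the confluent length-reducing system $R=\{g(g(x))\to f(x),\ f(x)\to x\}$, the symbol $f$ is reachable from $g(g(a))$ but does not occur in the unique normal form $a$. So either the corollary is tacitly not meant to include \Cref{confluence:prop}, or the paper shares the gap you identified; in either case your caution here is more careful than the paper's treatment, not less.
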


\paragraph*{Large term reachability}

We now investigate the following dual problem to small term reachability, where the comparison $|t| \leq n$ is replaced with $|t| \geq n$.
Although at first sight this may look like an innocuous change, it turns out that it has a considerable impact on the complexity of the problem
in some cases.

\begin{defi}
Let $R$ be a TRS. Given a term $s$ and a natural number $n$,
the \emph{large term reachability problem} for $R$ asks whether there exists a term $t$
such that $s\too_R t$ and $|t|\geq n$.
\end{defi}

For this problem, we can prove complexity upper bounds for all terminating TRSs without having to make any assumption on how termination is shown.
However, these upper bounds depend on how the number $n$ in the formulation of the problem is assumed to be encoded. The main idea is
that the value of the number $n$ (rather than the size of its encoding) provides us with a polynomial upper bound on the sizes of the terms to be considered.
 
\begin{thm}\label{general:upper:thm}
For terminating TRSs, the large term reachability problem is in ExpSpace (PSpace), if binary (unary) encoding of numbers is assumed.
\end{thm}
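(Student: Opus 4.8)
The key observation is that if a term $t$ with $|t| \geq n$ is reachable from $s$, then there is a rewrite sequence $s = s_0 \to_R s_1 \to_R \cdots \to_R s_k = t$, and since $R$ is terminating, $\to_R$ is finitely branching and (by König's Lemma, as used in \Cref{decidability-prop}) there are only finitely many terms reachable from $s$. I would first argue that it suffices to search among terms of size at most $n$: if some reachable term has size $\geq n$, then (taking any such reachable $t$ and following the rewrite sequence from $s$ to $t$) along this sequence there must be a first term whose size is $\geq n$; its predecessor has size $< n$, and a single rewrite step can only increase the size by a bounded amount $b_R$ (the maximal size increase of a single rule application, a constant depending only on the fixed TRS $R$). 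Hence there is a reachable term of size in the interval $[n, n + b_R)$, so we only ever need to consider terms of size $\leq n + b_R$.

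The plan is then to give a nondeterministic procedure: guess a rewrite sequence $s = s_0 \to_R s_1 \to_R \cdots$, storing only the current term; at each step, first check whether $|s_i| \geq n$ (if so, accept); otherwise, if $|s_i| > n + b_R$, abandon this branch (reject); otherwise guess the next rewrite step, rejecting if $s_i$ is irreducible. Since we only keep terms of size $\leq n + b_R$, each configuration uses space polynomial in $n$ (for unary encoding, $n$ is already the size of the input, so this is polynomial in the input; for binary encoding, $n$ is exponential in the input length $\log_2 n$, so the space is exponential in the input). The length of any such terminating sequence is finite by termination, but we do not need a bound on it if we argue more carefully: actually, since the number of distinct terms of size $\leq n + b_R$ over the fixed finite signature is bounded (singly-exponentially in $n$, hence doubly-exponentially in the binary encoding), we can bound the sequence length by this count plus one, and reject once it is exceeded — this keeps the procedure total. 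Then NPSpace $=$ PSpace (Savitch) gives PSpace for unary encoding, and NExpSpace $=$ ExpSpace gives ExpSpace for binary encoding.

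I expect the main subtlety to be handling the sequence-length bound cleanly so that the algorithm is guaranteed to terminate within the stated space bound, rather than merely being an unbounded search that happens to be correct: one needs that the number of terms of bounded size is itself representable and countable within the space budget (a counter up to $|\Sigma|^{O(n)}$ needs $O(n)$ bits for unary, $O(2^{\log_2 n}) = O(n)$ bits — wait, that is $O(n)$, fine — but for binary encoding the counter needs $O(n)$ bits which is exponential in the input, consistent with ExpSpace). The other point to get right is the constant $b_R$: since $R$ is fixed and finite, $b_R := \max\{|r| : l \to r \in R\}$ works as a bound on the size increase of one step (in fact $|t| \leq |s| + b_R$ whenever $s \to_R t$, since a rewrite step replaces a subterm $\sigma(l)$, of size $\geq |l| \geq 1$, by $\sigma(r)$, and $|\sigma(r)| \leq |\sigma(l)| - |l| + |r| \leq |\sigma(l)| + b_R$). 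Everything else is routine bookkeeping about space usage.
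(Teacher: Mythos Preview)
Your overall strategy matches the paper's: guess a rewrite sequence keeping only the current term, observe that all terms encountered before acceptance have size $<n$ while the accepted term is obtained from such a term by a single step, and then apply Savitch. However, your size bound for a single rewrite step is wrong. The inequality $|\sigma(r)| \leq |\sigma(l)| - |l| + |r|$ fails whenever some variable occurs more often in $r$ than in $l$. Take $l = f(x)$, $r = g(x,x)$, and let $\sigma(x)$ have size $m$: then $|\sigma(l)| = m+1$ and $|\sigma(r)| = 2m+1$, so the increase is $m$, not bounded by any constant $b_R$ depending only on $R$. The correct statement is that one rewrite step increases the size at most \emph{linearly}: if $k$ is the maximal number of variable occurrences in a right-hand side of $R$ and $c$ the maximal size of a right-hand side, then $s \to_R t$ implies $|t| \leq k\cdot|s| + c$. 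This is still polynomial in $n$ when the predecessor has size $<n$, so your argument goes through once the bound is corrected; the paper simply says the last term's size ``is still polynomial in $n$'' without naming an explicit constant.

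Your explicit counter for the sequence length is unnecessary: since $R$ is terminating, every rewrite sequence is finite, so the nondeterministic procedure halts on every branch while never storing more than the current (polynomial-in-$n$ size) term. That is exactly how the paper handles termination of the procedure.
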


\begin{proof}
Let $R$ be a terminating TRS, let $s$ be a term, and $n$ be a natural number (in unary or binary representation).
We first check whether $|s|\geq n$. If this is the case, then our procedure terminates and returns ``success''.
By adapting the algorithm sketched in the proof of \Cref{lenred:upper:bound}, it is easy to see that this test can be performed in linear time
in the combined size of $s$ and $n$, both for unary and binary encoding of $n$.
Otherwise, we guess a rewrite sequence $s\to_R s_1 \to_R s_2 \to_R \ldots $, of which we keep only the
current term in memory. We stop generating successors
\begin{itemize}
\item
if a term of size $\geq n$ is reached, in which case we return ``success'', or
\item
if no successor exists, in which case we return ``failure''.
\end{itemize}
Since $R$ is terminating, one of these two cases occurs after finitely many rewrite steps.
The sizes of the terms in the sequence, except possibly
the last one, are bounded by $n$. The last term in the sequence may be larger than $n$, but it is obtained by applying a single rewrite
step to a term of size at most $n$. Thus, its size is still polynomial in $n$.
Consequently, if we consider the sizes of the terms in the sequence with respect to the size of the binary (unary) representation of $n$,
then these sizes are exponentially (polynomially) bounded by the size of the representation.
This yields the ExpSpace (PSpace) upper bound stated in the theorem.
\end{proof}

If we take the reduction order used to show termination of $R$ into account, then we can improve on these general upper bounds,
and in some cases also show corresponding lower bounds.

First, we note that for \emph{length-reducing TRSs} $R$, the change from $|t| \leq n$ to $|t| \geq n$ trivializes the problem. 
In fact, in a reduction sequence for $R$, the start term $s$ is always the largest one. Thus, if $R$ is length-reducing, 
then there exists a term $t$ with $s\too_R t$ and $|t|\geq n$ iff $|s| \geq n$. 

\begin{thm}\label{large:length-reducing}
For length-reducing TRSs, the large term reachability problem can be decided in
(deterministic) linear time, both for unary and binary encoding of numbers.
\end{thm}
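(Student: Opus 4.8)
The plan is to prove \Cref{large:length-reducing} by the observation already anticipated in the paragraph preceding the statement: for a length-reducing TRS $R$, every rewrite step strictly decreases the size of the term, so along any reduction sequence $s = s_0 \to_R s_1 \to_R s_2 \to_R \ldots$ we have $|s_0| > |s_1| > |s_2| > \ldots$. Hence $|s|$ is an upper bound on the size of every term reachable from $s$, and in fact $|t| \leq |s|$ for all $t$ with $s \too_R t$ (with equality only when $t = s$, using the reflexive case of $\too_R$). Consequently there exists a term $t$ with $s \too_R t$ and $|t| \geq n$ if and only if $|s| \geq n$: the ``if'' direction is witnessed by $t = s$ itself, and the ``only if'' direction follows since any reachable $t$ satisfies $|t| \leq |s|$.

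Given this equivalence, the decision procedure is simply to compute $|s|$ and compare it with $n$. I would argue that this can be done in deterministic linear time in the combined size of the input, reusing the comparison algorithm already sketched in the proof of \Cref{lenred:upper:bound}: traverse a textual representation of $s$ from left to right, and for every occurrence of a function symbol or variable subtract $1$ from a counter initialised to $n$; answer ``yes'' if the counter reaches $0$ before the traversal ends (or exactly at the end), and ``no'' otherwise. Subtracting $1$ from a number is linear in the size of its representation for both unary and binary encoding, so the whole test runs in time linear in $|s|$ plus the size of the representation of $n$, independently of the encoding of $n$.

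There is essentially no hard part here; the only thing to be slightly careful about is the boundary case. Since $\too_R$ is the \emph{reflexive} and transitive closure of $\to_R$, the start term $s$ is itself reachable from $s$, so when $|s| = n$ the answer is ``yes'' with witness $t = s$, and when $|s| > n$ the answer is likewise ``yes''. I would make sure the statement of the comparison (``$|s| \geq n$'') and the description of the linear-time counting procedure handle the case $|s| = n$ correctly, i.e.\ that the counter reaching $0$ exactly at the end of the traversal is treated as success. Beyond that, the proof is a one-paragraph affair: state the equivalence ``$\exists t.\ s \too_R t \wedge |t| \geq n \iff |s| \geq n$'', justify it from length-reduction, and observe that checking $|s| \geq n$ is a linear-time computation.
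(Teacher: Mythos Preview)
Your proposal is correct and follows essentially the same approach as the paper: the paper's argument (given in the paragraph preceding the theorem) is precisely the equivalence $\exists t.\ s \too_R t \wedge |t| \geq n \iff |s| \geq n$, justified by the fact that $s$ is the largest term in any reduction sequence, together with the linear-time comparison algorithm from the proof of \Cref{lenred:upper:bound}. Your additional care about the boundary case $|s| = n$ is a nice touch, though the paper does not spell this out.
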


For the case of a \emph{KBO without special symbol}, a PSpace upper bound can not only be shown for unary, but also for binary encoding of numbers,
using the same approach as for the small term reachability problem. 
Concerning the lower bound, we can still construct the term $\tilde{t}(w)$, whose size does not change during the run of the TM. 
When reaching the final state $\acc$, the size of the term must be increased by one to make it larger than the terms encountered so far in the
reduction sequence. Thus, instead of the rules 
$a(\acc(x)) \to \acc(x)$ and 
$\acc(x) \to \#$, we now use the rule
$\acc(x) \to \tilde{q}(\tilde{q}(x))$ for a new function symbol $\tilde{q}$. With respect to this modified TRS, we then know that $\tilde{t}(w)$
reduces to a term of size $|\tilde{t}(w)| + 1$ iff the TM \M accepts the word $w$. 
Note that the number $|\tilde{t}(w)| + 1$ has a representation that is polynomial in the length of $w$, both for unary and binary encoding of numbers. 
For the KBO showing termination of the modified TRS, $\tilde{q}$ and the variables
get the weight 1 and all other function symbols get the weight $2$. 
The precedence order for the old symbols stays as before, and the new symbol $\tilde{q}$
is defined to be smaller than $\acc$.

The hardness result also holds for \emph{confluent} TRSs whose termination can be shown
with a KBO without special symbol. This can be proved by extending the modified TRS by the
rules
$g(x) \to \#$ for all unary function symbols $g$ such that it becomes confluent. To show
its termination, the KBO can be extended such that $\#$ gets the weight 1.
Overall, we thus obtain the following complexity result.

\begin{thm}\label{largeTerm-KBO}
The large term reachability problem is in PSpace for TRSs whose termination can be shown with a KBO without special symbol,
and there are confluent such TRSs for which the large term reachability problem is PSpace-complete.
These results hold both for unary and binary encoding of numbers.
\end{thm}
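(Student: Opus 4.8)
The plan is to derive the theorem from the machinery already set up in \Cref{kbo:sect}, reusing both the PSpace algorithm and the Turing-machine encoding \RMbin, each adapted to the ``large'' variant. The two parts to establish are the PSpace upper bound for every such TRS, and PSpace-hardness for a fixed confluent TRS of this kind, in both cases uniformly for unary and binary encoding of $n$.

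For the \emph{upper bound} I would run exactly the nondeterministic algorithm used just before \Cref{kbo:thm} for the small term reachability problem, but replace the acceptance test $|s_i|\le n$ by $|s_i|\ge n$: guess a rewrite sequence starting from $s$, keep only the current term in memory, accept as soon as a term of size $\ge n$ appears, and reject once an irreducible term is reached (which must happen since $R$ is fixed and terminating). By \Cref{KBO:size:bound} there is a constant $c$ with $|s_i|\le c\cdot|s|$ for every term $s_i$ on any such sequence, so each stored term needs only polynomial space \emph{independently of $n$}; this is exactly what keeps us in PSpace rather than ExpSpace, in contrast to the general terminating case of \Cref{general:upper:thm}, where reachable terms may grow up to size $n$. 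The representation of $n$ is part of the input, so storing it is linear in the input size for both encodings, and Savitch's theorem turns the NPSpace procedure into a PSpace one.

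For the \emph{lower bound} I would reuse the reduction from the word problem of a polynomial-space bounded TM \M used for \Cref{kbo:thm}, with the term $\tilde{t}(w)$ and the system \RMbin from \Cref{kbo:sect}, modified as follows: drop the size-decreasing rules $a(\acc(x))\to\acc(x)$ and $\acc(x)\to\#$, and instead add $\acc(x)\to\tilde{q}(\tilde{q}(x))$ for a fresh unary symbol $\tilde{q}$. Every original \RMbin rule is size-preserving and the new rule increases size by exactly one, so every term reachable from $\tilde{t}(w)$ has size at most $|\tilde{t}(w)|+1$, and this maximum is attained iff a term containing \acc is produced, i.e.\ iff \M accepts $w$. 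Setting $n:=|\tilde{t}(w)|+1$, whose representation is polynomial in $|w|$ for both encodings, the map $w\mapsto(\tilde{t}(w),n)$ is a polynomial-time reduction; choosing \M for a fixed PSpace-complete language makes the corresponding (fixed) TRS PSpace-hard. Termination of the modified TRS is shown by the KBO from \Cref{Termination RMbin} rescaled so that all original symbols get weight $2$ while $\tilde{q}$ and the variables get weight $1$, with the old precedence extended by $\acc>\tilde{q}$: the old rules stay oriented because they are size-preserving and the rescaling is uniform, and the new rule has equal weights on both sides and is oriented by $\acc>\tilde{q}$.

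To obtain \emph{confluence} I would further add the rules $g(x)\to\#$ for every unary function symbol $g$: then every non-variable term reduces to the unique normal form $\#$, so the system is confluent. These rules only decrease size, so the instance is unaffected: a term of size $|\tilde{t}(w)|+1\ge n$ is still reachable from $\tilde{t}(w)$ via $\acc(x)\to\tilde{q}(\tilde{q}(x))$ (before any collapsing) iff \M accepts $w$, while \acc can never appear along a collapsing path, so if \M rejects then no term of size $\ge n$ is reachable. Termination of this extended system follows from the same KBO after giving $\#$ weight $1$: each rule $g(x)\to\#$ strictly decreases weight, and all other rules are as before. I expect the only delicate point to be the bookkeeping in the termination argument — checking that the weight-$2$ rescaling leaves every old \RMbin rule oriented and that the two added families of rules are compatible with a KBO \emph{without} special symbol — but since all unary symbols retain positive weight and the old rules merely preserve symbol counts, this is routine.
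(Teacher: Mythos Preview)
Your proposal is correct and essentially identical to the paper's own argument: the same PSpace algorithm with the acceptance test flipped to $|s_i|\ge n$ (relying on \Cref{KBO:size:bound}), the same modification of \RMbin replacing the shrinking rules by $\acc(x)\to\tilde{q}(\tilde{q}(x))$ with the rescaled weight-$2$/weight-$1$ KBO, and the same confluence fix via $g(x)\to\#$ with $\#$ of weight $1$. The only cosmetic difference is that the paper phrases the precedence extension as ``$\tilde{q}$ smaller than $\acc$'' rather than $\acc>\tilde{q}$.
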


For the case of a \scomp and \emph{linear polynomial order}, a NExpTime upper bound can be shown 
(both for binary and unary encoding of numbers) as for the small term reachability problem. 
This improves on the general ExpSpace upper bound provided by \Cref{general:upper:thm} for binary encoding,
but is worse than the PSpace upper bound provided by \Cref{general:upper:thm} for unary encoding.
To  prove the corresponding \emph{lower bound for binary encoding} of numbers, we modify the proof of
\Cref{Hardness Linear Polynomials} as follows. 
Let $\mathcal{M}$ be an exponentially time bounded nondeterministic TM whose time-bound is $2^{p(\ell)}$ for a polynomial $p$, 
where $\ell$ is the length of the input word. Given such an input word $w = a_1 \ldots a_\ell$ for $\mathcal{M}$, we now construct the term
\[ t''(w) = d_1^{p(\ell)}(b(q_0(a_1 \ldots a_\ell d_2^{p(\ell)}(b(\#)), \;
d_3^{p(\ell)}(f(\#)), \; d_4'^{p(\ell) + 1}(f(\#))))).\]
Hence, we employ a new unary function symbol $d_4'$. 
Moreover, all symbols for states (like $q_0$) are now assumed to  have arity $3$ instead of $2$. 
The idea underlying the use of the additional third argument 
$d_4'^{p(\ell) + 1}(f(\#))$
is to compensate the decrease of the counter generated from $d_3^{p(\ell)}(f(\#))$. This counter term
evaluates to $f^{2^{p(\ell)}}(\#)$ and then decreases the number of $f$'s in each
evaluation step of the TM. To compensate this decrease, as soon as one reaches the final
state $\acc$, the symbol $d_4'$ is turned into $d_4$ and the third argument 
$d_4'^{p(\ell) + 1}(f(\#))$ evaluates to $f^{2^{p(\ell)+1}+2}(\#)$.

Since the state symbols now have arity 3, the non-recursive $d_1$-rule from $R_1$ must be changed to
\[ d_1(q_0(z_1,z_2,z_3)) \to q_0(z_1,z_2,z_3). \]
The rules for $d_4$ are similar to the ones for $d_3$, but in the end, the size of the
result is increased by $2$:
\[d_4(\#) \to f(f(\#)), \qquad d_4(f(x)) \to f(f(d_4(x))).\]
Moreover, the TRS $\RM$ that simulates \M now has the following modified rewrite rules:
\begin{itemize}
\item
  For each transition $(q,a,q',a',r)$ of \M it has the rule\ \ $q(a(x),f(y),z)\rightarrow
  a'(q'(x,y,z))$.
\item
   For each transition $(q,a,q',a',l)$ of \M it has the rule\ \ $c(q(a(x),f(y),z))\rightarrow q'(ca'(x),y,z)$
   for every tape symbol $c$ of \M.
\end{itemize}
The rules for $\acc$ are also modified. Whenever this final state is reached, the
outermost $d_4'$ in the third argument is turned into $d_4$:
\[ \acc(x,y,d_4'(z)) \to \acc(x,y,d_4(z)).\]
Moreover, all $d_4'$-symbols below $d_4$ are also turned into $d_4$:
\[ d_4(d_4'(z)) \to d_4(d_4(z)).\]
Consequently, if \M accepts the word $w$, then we obtain the following reduction sequence:
\[ \begin{array}{rcl}
t''(w) & = & d_1^{p(\ell)}(b(q_0(a_1 \ldots a_\ell d_2^{p(\ell)}(b(\#)), \;
d_3^{p(\ell)}(f(\#)), \; d_4'^{p(\ell) + 1}(f(\#)))))\\
&\to^*&  b^{2^{p(\ell)}}(q_0(a_1 \ldots a_\ell b^{2^{p(\ell)}}(\#), \; 
f^{2^{p(\ell)}}(\#), \;
d_4'^{p(\ell) + 1}(f(\#))))\\
& = & \bar{t}.
\end{array}
\]
Note that all further evaluations of $\bar{t}$ until reaching $\acc$ cannot
increase the size of the term anymore. The reason is that the length of the tape
(represented by the context around $q_0$ and $q_0$'s first argument) always remains
$2^{p(\ell)} + \ell + 2^{p(\ell)} = 2 \cdot 2^{p(\ell)} + \ell$, and the size of the counter $f^{2^{p(\ell)}}(\#)$ is
decremented in each simulated evaluation step of the TM. Of course, one could also start
with evaluation steps of the TM before evaluating $d_1$, $d_2$, and $d_3$, but this would
not yield a term of larger size than $\bar{t}$. So $|\bar{t}| = 3 \cdot 2^{p(\ell)}
+ \ell + p(\ell) + 6$ is the largest size
of any term during the reduction until reaching $\acc$.
In particular, if \M does not accept the word $w$, then one cannot reach any term of
size larger than $|\bar{t}|$.

However,  if \M accepts the word $w$, then we can reach a term that is larger than
$|\bar{t}|$. In fact, in this case there is a reduction sequence
\[
  t''(w) \; \to^* \; \bar{t}
  \; \to^* \; C[\acc(t_1, t_2, d_4'^{p(\ell) + 1}(f(\#)))]
  \]
  for a context $C$ and two terms $t_1$ and $t_2$. If the hole of the context $C$ does 
not count for the size, then we have $|C| + |t_1| = 2 \cdot 2^{p(\ell)} + \ell + 1$, i.e., it
corresponds to the length of the tape of the TM plus the $\#$ in $t_1$. Moreover, we know that $|t_2| \geq 1$, because
this term corresponds to the counter term which may have been decreased to $\#$  (but
which may also still be larger). Now we can continue the reduction as follows:
\[ \begin{array}{rcl}
  C[\acc(t_1, t_2, d_4'^{p(\ell) + 1}(f(\#)))] &\to& 
  C[\acc(t_1, t_2, d_4(d_4'^{p(\ell)}(f(\#))))] \\
   &\to^*& 
  C[\acc(t_1, t_2, d_4^{p(\ell)+1}(f(\#)))] \\
   &\to^*& 
  C[\acc(t_1, t_2, f^{2^{p(\ell)+1}}(d_4(\#)))] \\
  &\to& 
  C[\acc(t_1, t_2, f^{2^{p(\ell)+1}+2}(\#))].
\end{array}\]
Hence, the size of the resulting term is
$|C| + 1 + |t_1| + |t_2| + 2^{p(\ell)+1}+2 + 1
\geq
2 \cdot 2^{p(\ell)} + \ell + 1 + 1 + 
1 + 2^{p(\ell)+1}+2 + 1 =
4 \cdot 2^{p(\ell)}+ \ell+ 6
>
3 \cdot 2^{p(\ell)}
+ \ell + p(\ell) + 6
= |\bar{t}|$,
because 
$2^n > n$ holds for every natural number $n$ (and thus, it also holds for $n = p(\ell)$).
Consequently, we have shown the following lemma.

\begin{lem}
The term $t''(w)$ can be rewritten with our modified TRS
to a term of size at least
$4 \cdot 2^{p(\ell)}+ \ell+ 6$
 iff 
 $\mathcal{M}$ accepts the word $w$.
\end{lem}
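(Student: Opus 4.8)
The plan is to prove the equivalence in the usual two directions, reusing almost everything that has already been assembled in the discussion preceding the lemma; the only real work is to pin down the two size estimates, one for each direction.

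For the direction ``if $\M$ accepts $w$'' I would first apply the (modified) $d_1$-, $d_2$-, and $d_3$-rules to evaluate the three counter terms occurring in $t''(w)$, reaching $\bar t = b^{2^{p(\ell)}}(q_0(a_1\ldots a_\ell\, b^{2^{p(\ell)}}(\#),\, f^{2^{p(\ell)}}(\#),\, d_4'^{p(\ell)+1}(f(\#))))$ of size $3\cdot 2^{p(\ell)} + \ell + p(\ell) + 6$, exactly as in \Cref{R1 Lemma} and the two lemmas following it. Since the represented tape then has $2\cdot 2^{p(\ell)} + \ell$ cells and the second argument of the state symbol supplies $2^{p(\ell)}$ occurrences of $f$, the modified system $\RM$ faithfully simulates an accepting run of $\M$ on $w$ (which has length at most $2^{p(\ell)}$), exactly as in the proofs of \Cref{hardness:tech:lem} and \Cref{hardness:2exp:tech:lem}, and every rule of $\RM$ carries the third argument $d_4'^{p(\ell)+1}(f(\#))$ along unchanged. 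This reaches a term $C[\acc(t_1,t_2,d_4'^{p(\ell)+1}(f(\#)))]$ with $|C| + |t_1| = 2\cdot 2^{p(\ell)} + \ell + 1$ (the tape together with the trailing $\#$) and $|t_2|\geq 1$. Applying $\acc(x,y,d_4'(z))\to\acc(x,y,d_4(z))$ and then $d_4(d_4'(z))\to d_4(d_4(z))$ repeatedly turns the third argument into $d_4^{p(\ell)+1}(f(\#))$, which — exactly as for the $d$-rules of $\RHL$ in \Cref{ho:la:2ex}, except that each $d_4$-step additionally adds two — reduces to $f^N(\#)$ with $N\geq 2^{p(\ell)+1}+2$. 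Counting the symbols of the resulting term gives $|C| + 1 + |t_1| + |t_2| + (N+1)\geq (2\cdot 2^{p(\ell)} + \ell + 1) + 2 + (2^{p(\ell)+1}+3) = 4\cdot 2^{p(\ell)} + \ell + 6$, independently of how large $p(\ell)$ is.

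For the converse I would argue contrapositively: assuming $\M$ does not accept $w$, I first observe — by the same correctness argument for the simulation as in \Cref{hardness:tech:lem} — that $\acc$ occurs in no term reachable from $t''(w)$, since it can only be introduced by an $\RM$-step implementing a transition into the accepting state. Hence neither the rule producing $d_4$ nor any $d_4$-rule can ever fire, and since $d_4'$ occurs in $t''(w)$ only inside the context $d_4'^{p(\ell)+1}(f(\#))$, which no rule rewrites unless it sits below $\acc$ or $d_4$, the third argument of every reachable state-term remains $d_4'^{p(\ell)+1}(f(\#))$, of the fixed size $p(\ell)+3$. The rest of a reachable term evolves only through the $d_1$-, $d_2$-, $d_3$-rules and through $\RM$-steps, each $\RM$-step strictly decreasing the size (it removes one $f$ from the second argument). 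I would then bound the size of such a term by bounding, separately, how many $b$'s, $d_i$-symbols, $f$'s, and tape symbols it can contain, which yields an upper bound of the form $3\cdot 2^{p(\ell)} + \ell + O(p(\ell))$; assuming without loss of generality that the polynomial $p$ is sufficiently large (one may replace $p$ by $p+5$, which does not affect the hardness argument), this bound is strictly below $4\cdot 2^{p(\ell)} + \ell + 6$ because then $2^{p(\ell)}$ dominates the $O(p(\ell))$ slack. Together with the first part, this gives the claimed equivalence.

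I expect the main obstacle to lie precisely in this last size bound, i.e.\ in showing that no $\acc$-free reachable term can be larger than $\bar t$ by more than the available slack. One has to take into account reachable terms in which the $d_i$-counters are only partially evaluated — so that $d_i$-symbols still sit alongside the $b$'s or $f$'s they have already produced — as well as arbitrary interleavings of such evaluation steps with TM-steps; the $b^{2^{p(\ell)}}\!$-blocks and the counter-generating subterms may therefore carry up to $p(\ell)$ extra $d_i$-symbols compared with their fully evaluated forms, which is exactly why a WLOG assumption on the size of $p$ (or a correspondingly looser size threshold) is needed. The facts that make the estimate go through are that the $\RM$-rules are strictly size-decreasing, that the third argument is frozen as long as $\acc$ is absent, and that each counter-generating subterm can ever contribute at most $2^{p(\ell)} + p(\ell) + O(1)$ symbols. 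By contrast, the ``if'' direction is essentially routine once the reduction sequence above is written down; the only mildly delicate point there is the bookkeeping for the $d_4$-counter, which is borrowed verbatim from the analysis of the $d$-rules.
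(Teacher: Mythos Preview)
Your approach is correct and matches the paper's: the ``if'' direction and its size count are identical to the argument given just before the lemma, and for the ``only if'' direction both you and the paper argue that no $\acc$-free reachable term can meet the threshold and then compare the resulting bound against $4\cdot 2^{p(\ell)}+\ell+6$ using $2^{p(\ell)}>p(\ell)$.

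You are in fact more careful than the paper on one point. The paper asserts that $|\bar t|=3\cdot 2^{p(\ell)}+\ell+p(\ell)+6$ is the exact maximum size of any $\acc$-free term reachable from $t''(w)$, but this overlooks that a partially evaluated $d_i$-counter can exceed its normal form: for instance $d^k(b(\#))$ rewrites via the intermediate $b^{2^k}(d^k(\#))$, which is $k$ symbols larger than $b^{2^k}(\#)$. Running all three counters to such an intermediate simultaneously yields a term of size $|\bar t|+3p(\ell)$, so the true maximum is $3\cdot 2^{p(\ell)}+\ell+4p(\ell)+6$, and your $3\cdot 2^{p(\ell)}+\ell+O(p(\ell))$ bound is the right one. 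Your proposed fix of enlarging $p$ is valid for the hardness reduction; a lighter alternative is to observe that the required inequality $4p(\ell)<2^{p(\ell)}$ already holds for $p(\ell)\ge 5$, which one may assume without loss of generality by padding the time bound of~$\M$.
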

Note that the size of the binary representation of the number $4 \cdot 2^{p(\ell)}+ \ell+ 6$ is polynomial in the length $\ell$ of the input word $w$.

It remains to show that termination of our modified TRS
can still be shown with a linear and \scomp polynomial ordering. To this end,  we use the polynomial
interpretation from the proof of \Cref{Hardness Linear Polynomials}. The only change is that for all
states $q$ of the TM, $q(x,y,z)$ is mapped to $x + y + z + 3$. Moreover, while $d_4(x)$ is
mapped to $3x + 1$, $d_4'(x)$ is mapped to $3x + 2$.

This hardness result again holds for \emph{confluent} TRSs as well. To show this, we use
an analogous extension as for KBO in \Cref{largeTerm-KBO} and add a fresh constant $\#_0$ and
rules $g(x_1,\ldots,x_n) \to \#_0$ for all 
function symbols $g \neq \#_0$ such that the TRS becomes confluent.
The polynomial ordering is extended by mapping $\#_0$ to $2$.

\begin{thm}\label{large:term:lin:po:thm}
The large term reachability problem is in NExpTime for TRSs whose termination can be shown with a \scomp and linear polynomial order, 
even if binary encoding of numbers is assumed.
There are confluent TRSs whose termination can be shown with a \scomp and linear polynomial order for which the large term reachability problem is NExpTime-complete
if binary encoding of numbers is assumed.
\end{thm}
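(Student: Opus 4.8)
The plan is to assemble the theorem from the two halves already developed in the preceding text: the NExpTime upper bound, which is obtained by the same guess-and-check argument used for the small term reachability problem, and the NExpTime lower bound for binary encoding, which is essentially completed by the lemma immediately preceding the theorem statement together with the termination argument sketched just after it.

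For the \emph{upper bound}, I would argue exactly as in the proof of \Cref{Hardness Linear Polynomials} (the small-term case for linear polynomial orders). Given a TRS $R$ whose termination is shown with a \scomp and linear polynomial order, a start term $s$, and a number $n$, one nondeterministically guesses a reduction sequence $s = s_0 \to_R s_1 \to_R \cdots \to_R s_k$. By the linear analogue of \Cref{size:bound:lem} (mentioned in the proof of \Cref{Hardness Linear Polynomials}), $k$ is at most exponential in $|s|$ and each $|s_i|$ is at most exponential in $|s|$; hence each successor can be guessed and verified by an NExpTime procedure. The final test $|s_k| \geq n$ can be carried out in time linear in $|s_k|$ and in the size of the representation of $n$, using the traversal argument from the proof of \Cref{lenred:upper:bound}; since $|s_k|$ is at most exponential in $|s|$, this fits within the NExpTime budget regardless of whether $n$ is given in unary or binary. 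This gives the NExpTime upper bound, and it clearly holds for confluent TRSs as well.

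For the \emph{lower bound} (binary encoding), the bulk of the work is the construction of $t''(w)$ and the modified TRS, culminating in the Lemma just before the theorem: $t''(w)$ rewrites to a term of size $\geq 4\cdot 2^{p(\ell)} + \ell + 6$ iff $\mathcal{M}$ accepts $w$. Taking $\mathcal{M}$ to be a machine deciding an NExpTime-complete problem, and setting $n := 4\cdot 2^{p(\ell)} + \ell + 6$, this is a polynomial-time many-one reduction of that NExpTime-complete problem to the large term reachability problem for the fixed modified TRS, because $|t''(w)|$ is polynomial in $\ell$ and the \emph{binary} representation of $n$ has size polynomial in $\ell$ (here the binary encoding is essential — in unary, $n$ would be exponentially large and the reduction would no longer be polynomial). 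Termination of the modified TRS with a \scomp linear polynomial order is obtained from the interpretation of \Cref{Hardness Linear Polynomials} with the stated adjustments ($q(x,y,z)\mapsto x+y+z+3$, $d_4(x)\mapsto 3x+1$, $d_4'(x)\mapsto 3x+2$); one checks that each new or modified rule (the $d_4$- and $d_4'$-rules, the modified $d_1$-rule, the modified $\RM$-rules, and the two $\acc$-rules) is strictly decreasing, e.g. $d_4'(z)\mapsto 3z+2 > 3z+1$, and $\acc(t_1,t_2,d_4'(z))$-style rules decrease since replacing $d_4'$ by $d_4$ drops the value. For the \emph{confluent} variant, I would add a fresh constant $\#_0$ and collapse rules $g(x_1,\dots,x_n)\to\#_0$ for all $g\neq\#_0$, so every non-variable term reduces to $\#_0$; extending the interpretation by $\#_0\mapsto 2$ keeps the order \scomp and linear and orients the new rules (each left-hand side evaluates to $\geq 3 > 2$ under positive assignments). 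The large term $4\cdot 2^{p(\ell)}+\ell+6$ is still only reachable via a genuine accepting simulation, so NExpTime-hardness transfers to the confluent case.

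The main obstacle is purely in verifying the hardness construction, not in the upper bound or the termination proof: one must be confident that, before $\acc$ is reached, \emph{no} reduction strategy — including interleaving TM steps with the $d_i$-evaluations in any order — can produce a term larger than $|\bar t| = 3\cdot 2^{p(\ell)} + \ell + p(\ell) + 6$, so that the size threshold genuinely separates accepting from non-accepting computations. This rests on the invariant that the tape length stays fixed at $2\cdot 2^{p(\ell)} + \ell$ while the $f$-counter only shrinks during TM simulation, and on the fact that the compensating argument $d_4'^{p(\ell)+1}(f(\#))$ is inert (it does not evaluate) until $\acc$ converts the outermost $d_4'$ to $d_4$. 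Once that invariant and the inertness of the $d_4'$-column are argued carefully, together with the numerical inequality $2^{p(\ell)+1}+2 > p(\ell)$ that makes the post-$\acc$ term strictly larger than $\bar t$, the proof is complete.
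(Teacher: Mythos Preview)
Your proposal is correct and follows essentially the same route as the paper: the upper bound via the exponential-length guess-and-check (as in \Cref{Hardness Linear Polynomials}), the lower bound via the $t''(w)$ construction with the compensating $d_4'$-argument and the preceding lemma, the same modified polynomial interpretation for termination, and the same $\#_0$-collapse extension for confluence. Your identification of the key invariant (tape length fixed, counter shrinking, $d_4'$ inert until $\acc$) is precisely what the paper relies on in the argument leading up to the lemma.
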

 
Assuming \emph{unary encoding} of numbers lowers the complexity of the large term reachability problem to PSpace
for all terminating TRSs, and thus also for TRSs whose termination can be shown with a linear polynomial order. 
The reason why the reduction showing the NExpTime lower bound for binary encoding of numbers does not work for unary encoding is that
the size of the
unary representation of the number $n = 4 \cdot 2^{p(\ell)}+ \ell+ 6$ is exponential in $\ell$.

For \emph{general polynomial orders}, \Cref{general:upper:thm} yields
  ExpSpace (and PSpace for unary encoding of numbers) upper bounds for the large term reachability problem,
which improves on the N2ExpTime complexity we have shown for the small term reachability
problem.
Note that adapting the N2ExpTime
hardness proof from the small to the large term reachability problem does not even work here for the case of binary encoding of numbers
since in this case the number $n$ would need to be double-exponentially large, and thus its binary representation would still be
of exponential size, which prevents a polynomial-time reduction.

\section{Conclusion}

The results of this paper show that the complexity of the small term reachability problem is closely
related to the derivational complexity of the class of term rewriting systems considered. Interestingly, restricting
the attention to confluent TRSs reduces the complexity only for the class of length-reducing systems, but not for the other two
classes considered in this paper.
For length-reducing TRSs and TRSs shown terminating with a
KBO without special symbol or
with a size compatible linear polynomial order, our hardness results already hold when
only considering \emph{linear} TRSs (where all occurring terms contain every variable at
most once). For the case of the KBO, it even suffices to consider \emph{unary} TRSs
\cite{AAECC08} (where
all function symbols have arity 1 or 0, i.e., these are ``almost'' string rewriting systems).
In contrast,
our hardness result for general size compatible polynomial orders uses a\linebreak non-linear TRS.

The investigations in this paper were restricted to
classes of TRSs defined
by classical reduction orders (restricted forms of KBO and polynomial orders) that yield relatively
low bounds on the derivational complexity of the TRS.
In the future, it would be interesting to consider corresponding classes of TRSs defined
via more recent powerful techniques for termination analysis which also yield similar bounds 
on the derivational complexity, e.g.,   matrix interpretations
\cite{EndrullisWZ08} and their restriction to triangular matrices \cite{MoserSW08}, arctic
matrix interpretations \cite{KoprowskiW09}, 
and match bounds \cite{GeserHWZ07}.

The derivational complexity of
TRSs shown terminating
by KBOs with a unary function symbol of weight zero or by recursive path
orders is much
higher~\cite{DBLP:journals/tcs/Hofbauer92,DBLP:journals/iandc/Hofbauer03,DBLP:journals/tcs/Lepper01,DBLP:journals/aml/Lepper04,DBLP:journals/tcs/Weiermann95}.
From a theoretical point of view, it would be interesting to see whether using
such reduction orders or other  powerful techniques for showing termination like dependency pairs~\cite{DBLP:journals/jar/GieslTSF06}  also result in a very high complexity of the small term reachability problem.
This is not immediately clear since, as we have seen in this paper for the case of KBOs without special symbols, 
the complexity of this problem not only depends on the length of reduction sequences,
but also on whether rewrite sequences can generate large terms.
Another interesting question could be to investigate whether our complexity upper bounds for the small term reachability problem still hold
if the TRS is not assumed to be fixed, or (in the case of polynomial orders) the \scompa restriction is removed.

On the  practical
side, up to now we have only used length-reducing rules to shorten DL proofs. Basically, these
rules are generated by finding frequent proof patterns (currently by hand) and replacing
them by a new ``macro rule''. The results of \Cref{length:red:sect} show that, in this case, confluence of
the rewrite system is helpful. When translating between different proof calculi,
length-reducing systems will probably not be sufficient. Therefore, 
we will investigate with what kinds of techniques proof rewriting systems (e.g.,
translating between different proof calculi for \EL) can be shown terminating.
Are polynomial orders or KBOs without unary function symbol of weight zero sufficient, or are more powerful approaches for showing termination needed?
In this context, it might also be interesting to consider rewriting modulo equational theories~\cite{DBLP:journals/tcs/BachmairD89,DBLP:journals/siamcomp/JouannaudK86}
and associated approaches for showing termination~\cite{DBLP:conf/wrla/AlarconLM10,DBLP:conf/rta/GieslK01,DBLP:journals/tcs/JouannaudM92,DBLP:journals/iandc/Rubio02}.
For example, it makes sense not to distinguish between proof steps that differ only in the order of the
prerequisites. 
Hence, rewriting such proofs could be represented via term rewriting modulo
associativity and commutativity.

We have also looked at variants of the small term reachability problem to investigate how much our complexity results depend on the exact definition
of the condition imposed on the reachable terms. While there are variants (symbol and term reachability) that do not
affect our complexity results, considering the large term reachability problem 
leads to improved complexity upper bounds for length-reducing TRSs (linear), for TRSs shown terminating with a linear polynomial order if unary encoding of numbers is assumed (PSpace),
and TRSs shown terminating with a general polynomial order,
even if binary encoding of number is assumed (ExpSpace). 
The latter two upper bounds (ExpSpace for binary and PSpace for unary encoding)
actually hold for all terminating TRSs and not just for TRSs shown terminating with a (linear) polynomial order.
For the case of general polynomial orders, it is an open problem whether corresponding 
complexity lower
bounds can be proved.
Our main complexity results are summarized in \Cref{Overview} in the introduction.

\bibliographystyle{alphaurl}
\bibliography{literature}

\end{document}